\RequirePackage{etex}
\documentclass[10pt]{article}
\usepackage[utf8]{inputenc}
\usepackage[a4paper,left = 0.8in,right = 0.8in, top = 1in, bottom = 1in]{geometry}
\usepackage{amsmath}
\usepackage{amsthm}
\usepackage{bm}
\usepackage{cancel}
\usepackage{amssymb}
\usepackage{bbm}
\usepackage[natbibapa]{apacite}
\usepackage{dsfont}
\usepackage{commath}
\usepackage{mathtools}
\usepackage{float}
\usepackage{stmaryrd}
\usepackage{cases}
\usepackage{url}
\usepackage{enumitem,linegoal}
\usepackage{subcaption}
\usepackage{graphicx}
\usepackage{color}
\usepackage{algorithm}
\usepackage{booktabs}
\usepackage{array}
\usepackage{microtype}
\usepackage{shuffle}
\usepackage[]{apacite}

\usepackage[dvipsnames]{xcolor}
\usepackage{hyperref}
\definecolor{Chelsea}{RGB}{3,70,148}
\hypersetup{colorlinks,linkcolor=Chelsea,citecolor=Chelsea,urlcolor = Chelsea}
\usepackage[capitalise]{cleveref}

\theoremstyle{plain}
\newtheorem{theorem}{Theorem}[section]
\newtheorem*{theorem*}{Theorem}
\newtheorem{lemma}[theorem]{Lemma}
\newtheorem{corollary}[theorem]{Corollary}
\newtheorem{proposition}[theorem]{Proposition}

\theoremstyle{definition}
\newtheorem{definition}[theorem]{Definition}
\newtheorem*{definition*}{Definition}

\theoremstyle{remark}
\newtheorem{example}{Example}[section]
\newtheorem{remark}{Remark}[section]

\newcommand{\R}{\mathbb{R}}

\newcommand{\E}{\mathbb{E}}
\newcommand{\Prob}{\mathbb{P}}
\newcommand{\cX}{\mathcal{X}}
\newcommand{\cP}{\mathcal{P}}
\newcommand{\cU}{\mathcal{U}}
\newcommand{\cV}{\mathcal{V}}
\newcommand{\cR}{\mathcal{R}}

\newcommand{\cA}{\mathcal{A}}
\newcommand{\cW}{\mathcal{W}}
\newcommand{\1}{\mathbf{1}}

\newcommand{\argmax}{\operatorname*{arg\,max}}
\newcommand{\argmin}{\operatorname*{arg\,min}}

\usepackage{fancyhdr}
\usepackage{authblk}
\title{\vspace{-10mm}
Path-Space Model Risk via Signature-Induced Optimal Transport
\vspace{-3mm}}
\date{July 21, 2026}
\author{Tomoyuki Ichiba\footnote{Email: {\tt ichiba@pstat.ucsb.edu}. Affiliation: {\tt Department of Statistics and Applied Probability, University of California, Santa Barbara, California, 93106-3110}},  Qijin Shi\footnote{Email: {\tt qijin@ucsb.edu}. Affiliation: {\tt Department of Statistics and Applied Probability, University of California, Santa Barbara, California, 93106-3110}}}

\begin{document}
\maketitle

\begin{abstract}
We propose a signature-induced, optimal transport framework for path-space model risk, in which ambiguity between stochastic path laws is factorized through optimal transport costs on signature coordinates under a common coupling. Via an ambient feature-space relaxation, we derive for affine signature scores and affine half-space events explicit robust expectation and probability bounds that depend only on the baseline model. In both cases, the correction is governed by the same effective budget, which depends only on the affine score and the budget vector of the ambiguity set. Moreover, the same quantity leads to a budget-aware sparse signature surrogate method for more general, possibly implicit, or data-driven, path functionals. We illustrate the resulting methodology through controlled stress tests and benchmark model-misspecification problems in finance and insurance.
\end{abstract}
\vspace{2mm}
	
\textbf{Keywords:} Distributional model risk, optimal transport, path signatures, path-dependence.

\textbf{MSC(2020)}:  
91G70, 
49Q22, 
60L10. 

\tableofcontents

\section{Introduction}
\label{sec:intro}

In finance, many valuation and risk-management problems require evaluating expectations or tail probabilities of path-dependent functionals under a chosen path measure $P$. The object may be an expectation or price, such as $\E_P[F(X)]$, or an event probability such as $P(F(X)\ge b)$, where $F$ depends on the whole trajectory of the underlying process $X$. Canonical examples include exotic option pricing, drawdown criteria, finite-horizon ruin events, and first-passage probabilities. For such quantities, model risk depends intrinsically on the path measure: two models may agree under terminal distributions, low-dimensional marginals, or other familiar summaries, while assigning materially different values to the same claim or risk event. For example, the prices of a knockout option could differ by more than $40\%$ under a local volatility model and a jump diffusion model that are consistent with the same vanilla option surface (cf. \cite{cont2006modeluncertainty}); and a Brownian approximation to a Cram\'er--Lundberg reserve process may significantly underestimate the true tail ruin probabilities because jump-driven and diffusion-driven barrier crossings have different mechanisms (cf. \cite{asmussenalbrecher2010,blanchetmurthy2019}). The issue is therefore not merely whether a baseline law is globally close to an alternative law, but which pathwise discrepancies materially affect the payoff or risk event being assessed.

A standard response to model uncertainty is to fix a tractable baseline model $P_0$, specify an ambiguity set $\mathcal U(P_0)$ of plausible alternative laws, and evaluate the worst-case value in this set. For a payoff or risk functional \(F\), one is usually interested in robust quantities such as
\begin{equation}
\sup_{Q\in \mathcal U(P_0)} \mathbb E_Q[F(X)]
\quad\text{or}\quad
\sup_{Q\in \mathcal U(P_0)} Q(F(X)\ge b).
\end{equation}
The ambiguity set may be defined in several ways, for example, through parametric perturbations; calibration, marginal, or moment constraints; divergence neighborhoods; or constraints based on a prescribed discrepancy between the alternative and baseline laws. Among these approaches, optimal transport provides a natural coupling-based template: an alternative law $Q$ is plausible if it can be coupled with $P_0$ at limited transport cost. In such an approach, the transport cost is usually the central choice in modeling. In path spaces, model uncertainty is in general heterogeneous: drift, volatility, jumps, dependence, crossing behavior, occupation patterns, and tail excursions may affect different financial quantities in different ways. A cost strongly related to an event or a functional can be sharp for that event or that functional itself, but such sharpness typically does not transfer to other payoffs, risk measures, or implicit/learned path functionals. Conversely, a generic path metric may be reusable but too coarse, poorly scaled, or difficult to interpret for model-risk accounting. This motivates the search for a path space ambiguity geometry that is reusable across financial quantities, sensitive to heterogeneous forms of model misspecification, and still structured enough to yield explicit robust formulas and implementable calibration procedures.

Path signatures provide a layer of natural representation for this purpose. The time-augmented signature records ordered iterated integrals of a path. Under standard conditions, it uniquely determines the underlying path (cf. \cite{hamblylyons2010, boedihardjo2016signature}). Moreover, linear functionals of truncated signatures approximate broad classes of continuous path functionals (cf. \cite{chevyrevkormilitzin2016}), while expected signatures play a moment-like role and can characterize path laws under suitable conditions (\cite{chevyrevoberhauser2022}). These facts together suggest the following idea of quantifying path space model risk: {\it using signature coordinates to factorize path-law discrepancies and, at the same time, to represent the financial payoffs and events whose robustness is being assessed}. In this paper, we implement this principle through an optimal-transport construction. We recall basic definitions and important properties of path signatures in Section \ref{subsec:signatures}.

We now briefly describe our approach. Fix a finite set $\cW$ of selected signature words and write
\begin{equation}
  \Phi(x)=(\phi_I(x))_{I\in\cW}
\end{equation}
for the corresponding normalized signature-coordinate map of a path $x$ (cf. Definition \ref{def:time-augmented-signature}). Given coordinate budgets $\delta=(\delta_I)_{I\in\cW}$, the corresponding Signature Optimal Transport (Sig-OT) ambiguity set $\cU_\delta(P_0)$ (cf. Definition \ref{def:sigot-ambiguity}) consists of laws $Q$ for which there exists a single path coupling $\pi\in\Pi(Q,P_0)$ with the tractable baseline model $P_0$ such that
\begin{equation}
  \int |\phi_I(x)-\phi_I(y)|\,\pi(dx,dy)\le \delta_I,
  \qquad I\in\cW.
\end{equation}
We then study model risk problems with respect to such ambiguity sets of path measures. More precisely, for a payoff $F$ and an event $A$, we are interested in
\begin{equation}\label{eq:quantities of interest}
  \mathfrak{R}_\delta(F;P_0):=
  \sup_{Q\in\cU_\delta(P_0)}\E_Q[F(X)],
  \qquad
  \mathfrak{P}_\delta(A;P_0):=
  \sup_{Q\in\cU_\delta(P_0)}Q(A).
\end{equation}

Motivated by the universal approximation property of linear signatures (cf. Proposition \ref{thm:signature-universality}), we explicitly quantify the quantities in \eqref{eq:quantities of interest} with respect to affine signature payoffs, as well as events induced by such payoffs. We show that for affine signature scores
\begin{equation}
  s_{\ell,\beta}(x):=\beta+\sum_{I\in\cW}\ell_I\phi_I(x),
\end{equation}
upper bounds for robust corrections are governed by the effective budget
\begin{equation} \label{eq: kappaelldel}
\kappa(\ell,\delta):=\sum_{I\in\cW}|\ell_I|\delta_I.
\end{equation}
For expectations, Proposition \ref{prop:affine-expectation} provides the upper bound 
\begin{equation}
  \mathfrak{R}_\delta(s_{\ell,\beta};P_0)
  \le
  \E_{P_0}[s_{\ell,\beta}(X)]+\kappa(\ell,\delta),
\end{equation}
while for affine half-space events $A_{\ell,b}:=\{s_\ell\ge b\}$, the same scalar gives a shifted-threshold baseline bound in Theorem \ref{thm:scalar-reduction}:
\begin{equation}
  \mathfrak{P}_\delta(A_{\ell,b};P_0)
  \le
  P_0\big(s_\ell(X)\ge b-u_\kappa\big),
\end{equation}
where, in the non-trivial case, $u_\kappa$ is determined monotonically by the effective budget $\kappa(\ell,\delta)$ in \eqref{eq: kappaelldel} and the baseline distribution of the affine signature  $s_\ell(X) = s_{\ell, 0} (X) $ with $\beta = 0$. The power of the above robust bounds is that, once the score and budgets are fixed, the infinite-dimensional optimization problem over the laws in an infinite-dimensional path space is reduced to a computation that only depends on the baseline model, which is usually chosen to be computationally convenient.


The above bounds also motivate a sparse-surrogate workflow for broader path-dependent functionals. By fitting affine signature scores with a coordinate-weighted lasso penalty proportional to $\sum_I\widehat\delta_I|\ell_I|$, we introduce a preliminary budget-aware learning pipeline such that the above robust bound could be applied to sparse affine signature surrogates for general path-dependent functionals in Section \ref{sec:learning}. In Section \ref{sec:numerics}, we finally illustrate the theory with some examples and numerical experiments for fixed sparse affine linear scores/half-space events, as well as general ones, that arise naturally in finance and insurance.

\paragraph{Contribution.}
The contribution of this paper is to develop a reusable signature-induced optimal-transport geometry for model risk on path laws. We introduce coordinate-wise Sig-OT ambiguity sets under one common path coupling, derive exact path-level dual formulations, and obtain explicit relaxed bounds for affine signature functionals and affine half-space events. The same effective budget also leads to a budget-aware sparse affine-signature learning procedure, which extends the framework to more general, possibly implicit, or data-driven path functionals. We illustrate the resulting methodology through controlled stress tests and benchmark model-misspecification problems in finance and insurance.

\paragraph{Connections to the literature.}
Our work is connected to several streams of literature. For the theory of distributionally robust optimization, we refer to \cite{delageye2010,gohsim2010,rahimianmehrotra2022}, with the Wasserstein and optimal-transport formulations studied in \cite{esfahani2018,blanchetmurthy2019,gaokleywegt2023,kuhnshafieewiesemann2025}. For some of their recent applications in finance, we refer to \cite{bartldrapeautangpi2020} and \cite{carassuswiesel2025}. Some general discussions on model risk were made in \cite{cont2006modeluncertainty,glassermanxu2014}. Other approaches to robustness in finance include parameter-uncertainty \cite{lyons1995uncertain}, nonlinear-expectation \cite{denis2006modeluncertainty,nutzsoner2012superhedging,soner2012secondorderbsde}, and martingale optimal transport \cite{beiglbockhenrylaborderepenkner2013,galichonhenrylaborderetouzi2014}. In particular, path-dependent semimartingale optimal transport and calibration to exotic derivative prices are studied in \cite{guoloeper2021}. For rough paths, path signatures, and their applications in finance, we refer to \cite{frizvictoir2010,FH14,bayerdosreishorvathoberhauser2026} for a comprehensive introduction. In addition, higher-rank expected signatures for representations of adapted processes
and adapted topologies were studied in \cite{bonnierliuoberhauser2020}; sparse signature learning via Lasso was studied in \cite{guowangzhangzhao2025lasso}; universal approximation with signatures was developed in \cite{CuchieroSchmockerTeichmann2026,cuchieroprimaverasvalutoferro2025}. In particular, Sig-Wasserstein GANs, developed in \cite{ni2021sigwgan,liao2024sigcwgan}, provide another interesting application of combining optimal transport and path signature in time series generation. Finally, a connection between signature asymptotics and Wasserstein distances between empirical measures was studied in \cite{cassmessadeneturner2023}.

\paragraph{Organization.}
Section~\ref{sec:prelim} introduces the notation, path signatures, and the OT-DRO background. Section~\ref{sec:theory} develops the coordinate-wise Sig-OT geometry and derives the robust affine formulas. Section~\ref{sec:learning} presents the effective-budget learning method. Section~\ref{sec:numerics} examines the approach through controlled affine-signature stress tests and model-misspecification examples in finance and insurance. Section~\ref{sec:conclusion} concludes. Appendix~\ref{app:proofs} proves the multi-budget OT-DRO duality theorem, Appendix~\ref{app:strictness} explains the strictness of the feature-space relaxation, Appendix~\ref{app:analytic-bounds} gives preliminary model-based Sig-OT budget bounds, Appendix~\ref{app:j1-ruin-benchmark} details the $J_1$ benchmark used in Section \ref{subsubsec:reserve-risk}, and Appendix~\ref{app:ablations} collects supplementary budget-geometry diagnostics and ablations.

\section{Preliminaries}
\label{sec:prelim}

Throughout, the horizon $T>0$ is fixed and $\cX$ denotes the (rough) path space under consideration. We write $\cP(\cX)$ for Borel probability laws on $\cX$, $\Pi(P,Q)$ for couplings of $P,Q\in\cP(\cX)$, and $P_0$ for a baseline law on $\cX$. Path-dependent payoffs are measurable maps $F:\cX\to\R$, and path-dependent events are denoted by $A\subseteq\cX$, often in threshold form $A=\{F\ge b\}$.

\subsection{Path signatures}
\label{subsec:signatures}

This subsection fixes the notation for path signatures used throughout the paper, as well as recalls some key properties, following the standard tensor-algebra convention used in the signature and rough-path literature; see, for example, \cite{chevyrevkormilitzin2016} or \cite{bayerdosreishorvathoberhauser2026}.

Let us take $V:=\R^{d+1}$ with canonical basis $e_0,e_1,\ldots,e_d$, where the index $0$ is reserved for time. The extended tensor algebra $T((V))$ and its step-$N$ truncation $T^N(V)$ are defined by  $T((V)):=\prod_{k\ge0}V^{\otimes k}$, $T^N(V):=\bigoplus_{k=0}^N V^{\otimes k}$ with $ V^{\otimes0}:=\R$. The tensor product on $T((V))$ is defined by the concatenation of tensor levels: if $a=(a^{(k)})_{k\ge0}$ and $b=(b^{(k)})_{k\ge0}$, then $(a\otimes b)^{(n)} = \sum_{k=0}^n a^{(k)}\otimes b^{(n-k)}$. Let $\cA=\{0,1,\ldots,d\}$ be the alphabet of coordinate labels and let
$\cA^*=\bigcup_{k\ge0}\cA^k$ be the set of finite words, with the empty word $\emptyset$. For $I = (i_1,\ldots,i_k)\in\cA^k$, write $|I|=k$ and
$e_I=e_{i_1}\otimes\cdots\otimes e_{i_k}$.
Moreover, we write $I^-:=(i_1,\ldots,i_{k-1})$. We identify a tensor $a\in T((V))$ with its word coordinates through
$ a=\sum_{I\in\cA^*} a_I e_I$, $a_I:=\langle e_I,a\rangle$ with the convention that $e_\emptyset$ spans $V^{\otimes0}$. Throughout, we consider the time-augmented signature of a path, to remove tree-like equivalence for the uniqueness of the representation, cf. \cite{hamblylyons2010}.

\begin{definition}[Time-augmented signature]
\label{def:time-augmented-signature}
For each $\mathbb R^d$-valued path $x:[0,T]\to\R^d$ we associate the time parameter and write $\widehat x_t=(t,x_t)\in V = \mathbb R^{d+1}$. If $\widehat x$ is of bounded variation, its signature for the subinterval $[s,t]\subseteq[0,T]$ is the tensor series
\begin{equation}
  S_{s,t}(\widehat x)
  =
  1+
  \sum_{k\ge1}
  \int_{s<u_1<\cdots<u_k<t}
  d\widehat x_{u_1}\otimes\cdots\otimes d\widehat x_{u_k}
  \in T((V)).
\end{equation}
The coordinate associated with a word $I=(i_1,\ldots,i_k) \in \mathcal A^k$ and the level-$N$ truncated signature are defined by 
\[
S^\emptyset_{s,t}(\widehat x)=1, \quad 
  S^I_{s,t}(\widehat x)
  :=
  \langle e_I,S_{s,t}(\widehat x)\rangle
  =
  \int_{s<u_1<\cdots<u_k<t}
  d\widehat x^{i_1}_{u_1}\cdots d\widehat x^{i_k}_{u_k}, \quad 
  S^N_{s,t}(\widehat x)
  :=
  \sum_{|I|\le N}S^I_{s,t}(\widehat x)e_I
  \in T^N(V)
\]
respectively for $0 \le s, t \le T$. When $s=0$ and $t=T$, we write $S(\widehat x)$ and $S^N(\widehat x)$ without subscripts, unless an ambiguity can arise. For  $\R^d$-valued continuous  semimartingale $X$, the time-augmented signature $S(\widehat X)$ is defined analogously by iterated Stratonovich integrals.
\end{definition}

The key properties of path signatures are that they summarize path information in an ordered and algebraically tractable way. Before stating them, we begin by introducing a basic operation in the rough path and signature literature; see also Definition 2.2 in \cite{cuchierogazzanimollersvalutoferro2024}.

\begin{definition}[Shuffle product]
\label{def:shuffle-product}
For two nonempty words $I=(i_1,\ldots,i_k)\in \mathcal A^k$ and $J=(j_1,\ldots,j_m) \in \mathcal A^m$, the shuffle product $e_I\shuffle e_J$ is defined recursively as 
$e_I\shuffle e_J := 
  (e_{I^-}\shuffle e_J)\otimes e_{i_k}
  +
  (e_I\shuffle e_{J^-})\otimes e_{j_m}$ for $k, m \ge 1$, and $e_\emptyset\shuffle e_I:=e_I\shuffle e_\emptyset:=e_I$. The operation extends bilinearly to $a,b \in T((V))$ as 
\begin{equation}
  a\shuffle b
  :=
  \sum_{I,J\in\cA^*}a_Ib_J(e_I\shuffle e_J).
\end{equation}
\end{definition}

We record the standard identities and representation properties needed in the following. These results have been well developed in the rough path literature. We refer to, e.g., \cite{cuchierogazzanisvalutoferro2023}, \cite{FH14}, \cite{chevyrevkormilitzin2016}, and \cite{bayerdosreishorvathoberhauser2026} for more detailed expositions.

\begin{proposition}[Standard signature identities]
\label{prop:signature-identities}
Let $\widehat x$ be a bounded-variation path (or a sample path of a continuous semimartingale, respectively). The following identities hold (or hold almost surely, respectively).
\begin{enumerate}[label=(\roman*)]
  \item Chen's identity:
$S_{s,t}(\widehat x)=S_{s,u}(\widehat x)\otimes S_{u,t}(\widehat x)$,
  $0\le s\le u\le t\le T$.
  \item The word coordinates satisfy the shuffle identity: If $\shuffle$ denotes the shuffle product on words, then
\begin{equation}
  \langle e_I,S_{s,t}(\widehat x)\rangle
  \langle e_J,S_{s,t}(\widehat x)\rangle
  =
  \langle e_I\shuffle e_J,S_{s,t}(\widehat x)\rangle.
\end{equation}
\end{enumerate}
Equivalently, truncated signatures take values in the step-$N$ group-like subset of $T^N(V)$.
\end{proposition}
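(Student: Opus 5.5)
Both identities are classical; I would prove them first for bounded-variation paths, where the integrals are Riemann--Stieltjes, and then pass to continuous semimartingales by approximation.

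\emph{Chen's identity.} Fix $s\le u\le t$ and a word $I=(i_1,\dots,i_k)$. I would split the simplex $\{s<u_1<\dots<u_k<t\}$ according to the index $j\in\{0,\dots,k\}$ such that $u_1<\dots<u_j<u<u_{j+1}<\dots<u_k$. Configurations with some $u_i=u$ have measure zero, because continuous integrators do not charge single points. Fubini's theorem then factorizes each piece as $S^{(i_1,\dots,i_j)}_{s,u}\,S^{(i_{j+1},\dots,i_k)}_{u,t}$. Summing over $j$ gives exactly the word-coordinate form of $\bigl(S_{s,u}\otimes S_{u,t}\bigr)_I$, by the level-concatenation rule for $\otimes$.

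An alternative route, which extends cleanly to the semimartingale case, uses the differential equation. The signature solves $dS_{s,r}=S_{s,r}\otimes d\widehat x_r$ with $S_{s,s}=1$. Both $r\mapsto S_{s,r}$ and $r\mapsto S_{s,u}\otimes S_{u,r}$ solve this equation on $[u,t]$ with the same value at $r=u$. Uniqueness for this linear, level-by-level triangular system then gives the identity.

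\emph{Shuffle identity.} I would induct on $|I|+|J|$; the base case with an empty word is trivial. For nonempty words, write $S^I_{s,r}=\int_s^r S^{I^-}_{s,v}\,d\widehat x^{i_k}_v$, and similarly for $J$. The integration-by-parts formula for bounded-variation paths gives
\[
S^I_{s,t}S^J_{s,t}=\int_s^t S^{I^-}_{s,r}S^J_{s,r}\,d\widehat x^{i_k}_r+\int_s^t S^I_{s,r}S^{J^-}_{s,r}\,d\widehat x^{j_m}_r .
\]
There is no covariation term, since the paths are continuous and of bounded variation. The induction hypothesis replaces the products under the integrals by $\langle e_{I^-}\shuffle e_J,S_{s,r}\rangle$ and $\langle e_I\shuffle e_{J^-},S_{s,r}\rangle$. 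Integrating a coordinate $\langle e_K,S_{s,r}\rangle$ against $d\widehat x^{i}$ yields $\langle e_K\otimes e_i,S_{s,t}\rangle$. The right-hand side is therefore exactly $\langle e_I\shuffle e_J,S_{s,t}\rangle$, by the recursive definition of $\shuffle$ (Definition \ref{def:shuffle-product}) extended linearly.

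\emph{Semimartingale case.} This is the main obstacle. The key point is that Stratonovich integration obeys the ordinary chain rule, so Stratonovich integration by parts has no Itô correction. I would therefore rerun the same induction verbatim, using $d(YZ)=Y\circ dZ+Z\circ dY$. One must check that the iterated Stratonovich integrals are continuous semimartingales, so that each step is well defined; this also follows inductively. The identities then hold almost surely for fixed $s,u,t$, and simultaneously for all rational times. Continuity of the iterated integrals in $(s,t)$ extends them to all times almost surely.

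As an alternative check, one can approximate by piecewise-linear interpolations. Wong--Zakai-type convergence of the truncated signatures to the Stratonovich signature (in probability, along a subsequence almost surely) carries both polynomial identities to the limit.

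\emph{Group-like reformulation.} The final equivalence is immediate: the shuffle identity for all words with $|I|+|J|\le N$, together with $S^\emptyset=1$, is precisely the characterization of the step-$N$ group-like elements as the characters of the shuffle algebra truncated at level $N$.
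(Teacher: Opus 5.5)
Your proposal is correct. The paper gives no proof of its own and defers to the rough-path literature (Friz--Hairer, Chevyrev--Kormilitzin, Cuchiero et al.), where Chen's identity is proved either by splitting the simplex or by uniqueness for the linear equation $dS=S\otimes d\widehat x$, and the shuffle identity by induction on $|I|+|J|$ via (Stratonovich) integration by parts, exactly as you do. Two points deserve to be made explicit: in the semimartingale induction you need the associativity $\int Y\circ d\bigl(\int Z\circ dX\bigr)=\int YZ\circ dX$ for continuous semimartingales, which is what turns $\int S^J_{s,r}\circ dS^I_{s,r}$ into $\int S^{I^-}_{s,r}S^J_{s,r}\circ d\widehat X^{i_k}_r$, and your use of continuity of $\widehat x$ is essential, since for a bounded-variation path with a jump the strict-simplex definition gives $S^{11}=0\neq\frac12(S^{1})^2$.
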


\begin{proposition}[Uniqueness of time-augmented signatures]
Let $\widehat x_t:=(t, x_t)$ and $\widehat y_t:=(t, y_t)$ be the time-augmentation of a bounded variation path $x$ and $y$, such that $x_0=y_0$. Then the signature $S(\widehat x) = S(\widehat y)$ if and only if $x=y$. Similarly, let $X$ and $Y$ be two continuous semimartingales with $X_0=Y_0=0$. Then $S(\widehat X) = S(\widehat Y)$ if and only if $X=Y$ as a process. 
\end{proposition}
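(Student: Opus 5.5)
The plan is to reduce both statements to the Hambly--Lyons theorem \cite{hamblylyons2010} (and its semimartingale counterpart \cite{boedihardjo2016signature}): the signature determines a path up to tree-like equivalence. The key observation is that the time coordinate $t\mapsto t$ is strictly increasing. A time-augmented path therefore cannot contain a tree-like excursion, so two such paths are tree-like equivalent only if they coincide.

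The ``if'' directions are immediate in both cases, since the signature is a deterministic functional of the path (pathwise for the iterated Stratonovich integrals, up to indistinguishability). For ``only if'' in the bounded-variation case, I would proceed in three steps.

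First, invoke Hambly--Lyons: $S(\widehat x)=S(\widehat y)$ implies that $\widehat x$ and $\widehat y$ are tree-like equivalent. In other words, the concatenation $\widehat x\ast\overleftarrow{\widehat y}$ is tree-like: there is a continuous height function $h\ge0$ vanishing at the endpoints with $\|\widehat z_s-\widehat z_t\|\le h(s)+h(t)-2\inf_{[s,t]}h$.

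Second, argue that the reduced (tree-reduced) representative of a path whose first coordinate is strictly monotone is the path itself. Any tree-like piece would have to retrace its time component, contradicting strict monotonicity. Hence $\widehat x$ and $\widehat y$ are both reduced, and equivalent reduced paths agree up to reparametrisation.

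Third, since the first coordinate is exactly $t$ on both sides, the reparametrisation is the identity. This gives $x_t-x_0=y_t-y_0$ for all $t$, and with $x_0=y_0$ we get $x=y$.

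A more elementary alternative avoids tree-like machinery. Recover the increments from the coordinates $S^{(0,\dots,0,i)}_{0,T}$ and their polynomial-in-time variants: by the shuffle identity of Proposition~\ref{prop:signature-identities}, these yield $\int_0^T t^k\,dx^i_t$ for all $k$. Weierstrass density of polynomials then determines the measure $dx^i$, hence $x^i-x^i_0$.

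For continuous semimartingales, I would apply the same polynomial-moment argument pathwise. By the shuffle identity (which holds a.s. for Stratonovich signatures), $\langle e_{0^k}\shuffle e_i,S(\widehat X)\rangle$ produces $\frac1{k!}\int_0^T t^k\circ dX^i_t=\frac1{k!}\int_0^T t^k\,dX^i_t$; the two integrals agree because $t^k$ has finite variation. Integrating by parts gives $T^kX^i_T-k\int_0^T t^{k-1}X^i_t\,dt$, using $X_0=0$. Equality of signatures on a full-measure set therefore gives equality of $\int_0^T t^{k-1}X^i_t\,dt$ for all $k$, together with $X_T=Y_T$. By continuity of paths and density of polynomials in $L^2[0,T]$, $X^i_t=Y^i_t$ for all $t$ a.s.

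The main obstacle is this null-set bookkeeping for semimartingales. The countably many coordinates used must all be evaluated on a common full-measure event where the shuffle identity holds and the Stratonovich integrals have pathwise versions. Since only countably many words are involved, I would intersect the countably many almost-sure events to obtain this common event.
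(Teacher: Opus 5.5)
The paper gives no proof of this proposition; it defers to \cite{hamblylyons2010,boedihardjo2016signature}, i.e.\ to uniqueness of the signature up to tree-like equivalence. Your first route is that citation made explicit, and it is sound. You correctly pass through the uniqueness of tree-reduced representatives rather than arguing directly on $\widehat x\ast\overleftarrow{\widehat y}$, whose time channel is not monotone. The strictly increasing time channel then both rules out tree-like pieces and forces the reparametrisation to be the identity.

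Your polynomial-moment route is genuinely different and more elementary. It uses only the countable family of coordinates $S^{0^k i}_{0,T}$, $k\ge0$. It reduces uniqueness to a moment problem on $[0,T]$, for the signed measure $dx^i$ or for the continuous function $X^i-Y^i$. In the semimartingale case it bypasses the rough-path uniqueness theorem of \cite{boedihardjo2016signature} entirely. The citation route is more general: it does not need the time channel, and it explains exactly what is lost without it. Yours is self-contained and shows explicitly which coordinates already separate paths.

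One step is misjustified as written. Since $e_{0^k}\shuffle e_i=\sum_{j=0}^k e_{0^j i 0^{k-j}}$, the shuffle identity gives $\langle e_{0^k}\shuffle e_i,S\rangle=S^{0^k}S^{i}=\frac{T^k}{k!}X^i_{0,T}$. This carries only the terminal increment, so used literally your argument would yield nothing beyond $X_T=Y_T$.

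What you need is the single word $0^k i$. From the recursion $dS^{0^k i}_{0,t}=S^{0^k}_{0,t}\circ dX^i_t$ and $S^{0^k}_{0,t}=t^k/k!$, the definition directly gives $S^{0^k i}_{0,T}=\frac1{k!}\int_0^T t^k\circ dX^i_t$, with no shuffle identity. The same correction applies to your bounded-variation paragraph.

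With this substitution, the integration by parts, the $L^2$-density argument and the countable intersection of null sets all go through, and the proof is complete. You can also drop the shuffle identity from your null-set bookkeeping altogether.
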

The statements for time-augmented signatures in these propositions are standard; see, for example, \cite{hamblylyons2010,cuchierogazzanimollersvalutoferro2024, boedihardjo2016signature}.

\begin{proposition}[Universal approximation theorem]
\label{thm:signature-universality}
Let $K$ be a compact family of time-augmented paths on a finite closed time interval with respect to the supremum norm. Then, for every continuous functional $F:K\to\R$ and every $\varepsilon>0$, there are a truncation level $N$ and a linear functional $\ell\in T^N(V)$ such that
\begin{equation}
  \sup_{\widehat x\in K}
  \left|F(\widehat x)-\langle \ell,S^N(\widehat x)\rangle\right|
  <\varepsilon.
\end{equation}
A similar statement holds for semimartingale paths in the corresponding rough path space (see, for example, Proposition 2.3 of \cite{cuchierogazzanimollersvalutoferro2024}).
\end{proposition}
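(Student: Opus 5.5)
The plan is to prove the universal approximation statement by the Stone--Weierstrass theorem. The approximating class is
\[
\mathcal L:=\Big\{\widehat x\mapsto\langle \ell,S^N(\widehat x)\rangle:\ N\ge0,\ \ell\in T^N(V)\Big\},
\]
viewed as a set of functions on $K$. I will check in turn that $\mathcal L$ is a subalgebra of $C(K)$, that it contains constants, and that it separates points of $K$. Stone--Weierstrass then says that $\mathcal L$ is dense in $C(K)$ for the supremum norm, which is exactly the claim. Only a finite linear combination of coordinates is ever needed, so some finite $N$ always suffices.

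\textbf{Step 1 (algebra and constants).} The set $\mathcal L$ is a linear space, since two functionals of levels $N_1$ and $N_2$ can both be viewed at level $\max(N_1,N_2)$. It contains constants because $S^\emptyset\equiv1$. It is closed under products by the shuffle identity of Proposition~\ref{prop:signature-identities}(ii):
\[
\langle e_I,S\rangle\langle e_J,S\rangle=\langle e_I\shuffle e_J,S\rangle .
\]
The word $e_I\shuffle e_J$ is a finite combination of words of length $|I|+|J|$. Extending bilinearly, the product of two elements of $\mathcal L$ is again in $\mathcal L$, at level $N_1+N_2$.

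\textbf{Step 2 (point separation).} Take $\widehat x\neq\widehat y$ in $K$. The paths must share the starting point, as required in the uniqueness proposition. If the family does not fix $x_0$, I will adjoin the level-zero coordinate $x_0$, or assume a common start, as the cited references do. With a common start, the uniqueness of time-augmented signatures gives $S(\widehat x)\neq S(\widehat y)$. So some word $I$ satisfies $S^I(\widehat x)\neq S^I(\widehat y)$, and the coordinate functional $\ell=e_I$ at level $|I|$ separates the two points.

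\textbf{Step 3 (continuity), the main obstacle.} Stone--Weierstrass needs each $\widehat x\mapsto S^I(\widehat x)$ to be continuous on $K$ in the supremum norm. For general bounded-variation paths this fails, since iterated integrals are not continuous in the uniform topology. I expect this to be the real difficulty. My plan is to read the hypothesis as it is meant in the cited sources: either $K$ is compact in a topology in which the truncated signature is continuous, such as bounded $1$-variation with uniform convergence, or $p$-variation with $p<2$, or a rough-path metric. In the first case the coordinates are continuous by standard Young/Riemann--Stieltjes estimates. Concretely, if $\widehat x^n\to\widehat x$ uniformly with uniformly bounded total variation, then
\[
\int d\widehat x^{n,i_1}\cdots d\widehat x^{n,i_k}\to\int d\widehat x^{i_1}\cdots d\widehat x^{i_k},
\]
which follows by induction on $k$ using Helly-type convergence of Stieltjes integrals. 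In the rough-path case, continuity is Lyons' extension theorem.

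With that continuity in hand, Stone--Weierstrass yields $g\in\mathcal L$ with $\sup_K|F-g|<\varepsilon$. Writing $g=\langle\ell,S^N\rangle$ gives the stated $N$ and $\ell$.

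For semimartingale paths the same argument runs on the space of geometric rough paths. I will state that space as the one used in Proposition 2.3 of \cite{cuchierogazzanimollersvalutoferro2024}. There the shuffle identity still holds almost surely, Stratonovich lifts are geometric, and uniqueness comes from the second half of the uniqueness proposition.
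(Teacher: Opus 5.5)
\textbf{Gap in Step 3.} Your Steps 1 and 2 are the standard ingredients, and the Stone--Weierstrass route is the argument behind the references the paper cites in place of a proof. The problem is Step 3.

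Under the hypotheses actually stated, $K$ is compact and $F$ is continuous for the supremum norm, with no variation bound. For $d\ge2$ the inclusion $\mathcal L\subset C(K)$ can then fail: area-type coordinates are discontinuous in the uniform topology, and they need not even be bounded on $K$. You resolve this by changing the hypothesis, either adding a uniform $1$-variation bound or passing to compactness in a $p$-variation or rough-path metric. That proves a different proposition from the one stated.

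The change is also unnecessary. Stone--Weierstrass needs only \emph{some} unital, point-separating subalgebra of $C(K)$ lying inside $\mathcal L$, not all of $\mathcal L$, and the time augmentation supplies one. For $i\in\{1,\dots,d\}$ and $k\ge1$, the word consisting of the letter $i$ followed by $k$ time letters gives
\[
S^{(i,0,\dots,0)}(\widehat x)=\int_0^T\bigl(x^i_u-x^i_0\bigr)\,\frac{(T-u)^{k-1}}{(k-1)!}\,du ,
\]
which is Lipschitz in the supremum norm. Let $\mathcal L_0$ be the algebra generated by $1$ and these coordinates.
\begin{itemize}
\item \emph{Right form.} By the shuffle identity, every element of $\mathcal L_0$ has the form $\langle\ell,S^N(\widehat x)\rangle$.
\item \emph{Continuity.} $\mathcal L_0$ lies in $C(K)$ because each element is a polynomial in continuous functions. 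You never need continuity of the individual words occurring in $e_I\shuffle e_J$.
\item \emph{Separation.} Suppose $x_0=y_0$ and $x^i\neq y^i$. Equality of all these coordinates would give $\int_0^T(x^i_u-y^i_u)\,p(u)\,du=0$ for every polynomial $p$, since the functions $(T-u)^{k-1}$ span the polynomials. The Weierstrass theorem then forces $x^i=y^i$, a contradiction.
\end{itemize}
Stone--Weierstrass applied to $\mathcal L_0$ therefore proves the proposition exactly as stated, for every $d$. It needs neither the Hambly--Lyons uniqueness theorem nor rough-path continuity. The same device works for the semimartingale half, since these coordinates carry no It\^o--Stratonovich correction.

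\textbf{Starting point in Step 2.} Keep only the common-start reading. Adjoining $x_0$ is not available, because the approximant must have the form $\langle\ell,S^N\rangle$. Signatures are invariant under translating the path, so without a common starting point the proposition fails for any $F$ that distinguishes two translates lying in $K$. The common start is a hypothesis the statement silently requires, not a convenience.
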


The preceding propositions essentially reveal that time-augmented signatures are not merely a collection of ad hoc features but a universally informative ordered feature set. To make this representation more concrete, we spell out the meaning of a few low-order coordinates in the path signature.

\begin{example}\label{ex:meaning of coordinates}
Consider a one-dimensional process $X$ and write $X_{0,t}=X_t-X_0$. The first nontrivial coordinate $S^X=X_{0,T}$
is the terminal shift, which does not depend on the intermediate path information. At level two,
$S^{Xt}=\int_0^T X_{0,t}\, \mathrm dt$, $S^{tX}=T\cdot X_{0,T}-\int_0^T X_{0,t}\,\mathrm dt$, where $S^{Xt}$ records the path-average information, while $S^{tX}$ combines the terminal shift with the same cumulative quantity. Their antisymmetric difference 
$A^{tX}:=S^{tX}-S^{Xt}$ 
is the signed time-state L\'evy area. It separates paths with similar endpoints but different timing: early drawdowns followed by recovery, late losses, or paths that spend much of the horizon below their own time-average produce different signs and magnitudes. Moreover, the pure coordinate $S^{XX}$ equals $\tfrac12 X_{0,T}^2$, so it is a second-order terminal-magnitude feature; higher pure words similarly encode powers of the terminal shift. Mixed higher-order words such as $S^{Xtt}$, $S^{XXt}$, and $S^{XXXt}$ add timing-weighted magnitude and further asymmetry information.

\end{example}

\subsection{Distributional model risk via optimal transport}
\label{subsec:otdro}

Now we recall the second main ingredient: distributional model risk based on optimal transport (see, for example, \cite{blanchetmurthy2019} and \cite{gaokleywegt2023}). Let $P_0$ be a tractable baseline probability measure on a Polish space $E$, and define the ambiguity set $\mathcal{U}(P_0)$ as a closed neighborhood of $P_0$ with respect to some metric, consisting of other plausible models that one is not willing to rule out. For a fixed payoff function $F$, the robust value $\sup_{Q\in \mathcal{U}(P_0)}\mathbb{E}_Q[F]$ therefore measures the sensitivity of an expectation to misspecification of the underlying law. It is often difficult to compute the primal problem $\sup_{Q\in \mathcal{U}(P_0)}\mathbb{E}_Q[F]$ directly as an infinite-dimensional optimization. A main appeal of optimal-transport ambiguity sets is that they admit dual reformulations, which are often easier to analyze and compute. We briefly recall this approach, following \cite{blanchetmurthy2019}, but in a slightly more general setting. 

Let $E$ be a Polish space, let $P_0\in\cP(E)$, and let $c_1,\ldots,c_m:E\times E\to[0,\infty]$ be lower semicontinuous transport cost functions satisfying $c_j(x,x)=0$. Write $\mathsf C:=\left\{(z,x)\in E\times E:c_j(z,x)<\infty,\ j=1,\ldots,m\right\}$
for the common finite-cost relation. For $\lambda\in\R_+^m$, define the aggregate cost $c_\lambda(z,x) := \sum_{j=1}^m\lambda_jc_j(z,x)$, if $(z,x)\in\mathsf C$, and $+\infty$ otherwise. Here, pairs with an infinite cost in any coordinate remain inadmissible even when the corresponding multiplier is zero. For $\delta\in\R_+^m$, define the ambiguity set 
\begin{equation}
  \cU_\delta^{\mathbf c}(P_0)
  :=
  \left\{
    Q\in\cP(E):
    \text{ there exists } \pi\in\Pi(Q,P_0)
    \text{ such that }
    \int c_j(z,x)\,\pi(dz,dx)\le\delta_j,\ j=1,\ldots,m
  \right\}.
\end{equation}

\begin{theorem}[Multi-budget OT-DRO duality]
\label{thm:finite-budget-otdro}
If $f:E\to\R$ is upper semicontinuous and $f\in L^1(P_0)$, then
\begin{equation}
  \sup_{Q\in\cU_\delta^{\mathbf c}(P_0)}
  \int f\,dQ
  =
  \inf_{\lambda\in\R_+^m}
  \left\{
    \sum_{j=1}^m\lambda_j\delta_j
    +
    \int
    \sup_{z\in E}
    \left(
      f(z)-c_\lambda(z,x)
    \right)P_0(dx)
  \right\} 
\end{equation}
for $\delta\in(0,\infty)^m$ with equality understood in the extended real sense. The inner supremum is universally measurable. If the common value is finite, then the dual infimum is attained by some $\lambda^\star\in\R_+^m$.
\end{theorem}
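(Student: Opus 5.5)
We argue weak duality first and then strong duality, in the spirit of \cite{blanchetmurthy2019}. The plan is to reduce the multi-budget primal to a family of single-cost problems via Lagrangian relaxation.

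\emph{Weak duality.} Take $Q\in\cU^{\mathbf c}_\delta(P_0)$ with witnessing coupling $\pi\in\Pi(Q,P_0)$, and fix $\lambda\in\R^m_+$. The first observation is that $\pi$ is concentrated on $\mathsf C$, because each $\int c_j\,d\pi\le\delta_j<\infty$. On $\mathsf C$ we have the pointwise bound
\[
f(z)\le \sup_{z'}\bigl(f(z')-c_\lambda(z',x)\bigr)+c_\lambda(z,x).
\]
Integrating against $\pi$ gives $\int f\,dQ\le \sum_j\lambda_j\delta_j+\int\phi_\lambda\,dP_0$, where $\phi_\lambda(x):=\sup_z(f(z)-c_\lambda(z,x))$. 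Taking the supremum over $Q$ and the infimum over $\lambda$ yields "$\le$".

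\emph{Measurability.} The set $\{(z,x,t): f(z)-c_\lambda(z,x)>t\}$ is Borel, since $f$ is u.s.c., the $c_j$ are l.s.c., and $\mathsf C$ is Borel. Hence $\{\phi_\lambda>t\}$ is a projection of a Borel set, so it is analytic and therefore universally measurable. Moreover $\phi_\lambda\ge f$, because $c_\lambda(x,x)=0$, and $f\in L^1(P_0)$, so the integral $\int\phi_\lambda\,dP_0$ is well defined in $(-\infty,\infty]$.

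\emph{Strong duality.} Let $\Gamma:=\{(\int c_j\,d\pi)_{j},\,\int f(z)\,\pi(dz,dx)) : \pi\in\Pi(\cdot,P_0),\ \pi(\mathsf C)=1\}$, where the first marginal is free. This set is convex, because mixtures of couplings with the same second marginal $P_0$ are again such couplings. Define the value function
\[
v(\delta):=\sup\Bigl\{\int f\,d\pi:\ \int c_j\,d\pi\le\delta_j\ \text{for all } j\Bigr\}.
\]
Then $v$ is concave and nondecreasing on $\R^m_+$, and $v(\delta)$ coincides with the primal value.

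We then use that $\delta\in(0,\infty)^m$ lies in the interior of the domain of $v$, with $\delta=0$ feasible via the identity coupling. This Slater-type condition gives, by the standard Lagrangian duality for concave functions (a finite-dimensional separating hyperplane argument), that $v(\delta)=\inf_{\lambda\ge0}\sup_\pi\{\int f\,d\pi-\sum_j\lambda_j(\int c_j\,d\pi-\delta_j)\}$. If this value is finite, $v$ is finite in a neighbourhood of $\delta$, so its superdifferential at $\delta$ is nonempty and compact, and any supergradient $\lambda^\star$ attains the infimum.

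\emph{Main obstacle.} What remains is the single-cost identity
\[
\sup_\pi\Bigl\{\int (f-c_\lambda)\,d\pi\Bigr\}=\int\phi_\lambda\,dP_0,
\]
where the supremum is over couplings with second marginal $P_0$; I expect this interchange of supremum and integral to be the crux. The inequality "$\le$" is immediate. For "$\ge$", I would use a measurable selection theorem for analytic sets (Jankov--von Neumann). This gives universally measurable $\varepsilon$-maximizers $z_\varepsilon(x)$, with truncation where $\phi_\lambda=\infty$, so that $(z_\varepsilon(x),x)\in\mathsf C$. We then set $\pi_\varepsilon:=(z_\varepsilon,\mathrm{id})_\#P_0$, after completing $P_0$ to handle universal measurability. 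Letting $\varepsilon\downarrow0$, monotone convergence yields the reverse inequality; when $\int\phi_\lambda\,dP_0=\infty$ this holds in the extended-real sense.

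The remaining care points are the infinite cases. If $v(\delta)=\infty$ the identity holds trivially by weak duality. The possibility $v\equiv-\infty$ is excluded because $f\in L^1(P_0)$.
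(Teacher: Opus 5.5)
The overall architecture matches the paper's proof. You use weak duality from the pointwise bound, the concave nondecreasing budget-value function $v$ with a supergradient $\lambda^\star$ at the interior point $\delta$, and the interchange identity $\sup_\pi\int(f-c_\lambda)\,d\pi=\int\varphi_\lambda\,dP_0$ proved by measurable selection. The gap is in how you prove that interchange, which is exactly where it has to fit with the supergradient step.

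The inequality $v(b)\le v(\delta)+\lambda^\star\cdot(b-\delta)$ only controls couplings $\pi$ whose cost vector $a=(\int c_j\,d\pi)_j$ is finite. For such $\pi$, it is feasible for $v(a+\varepsilon\mathbf 1)$, and letting $\varepsilon\downarrow0$ gives $\int f\,d\pi-\lambda^\star\cdot a\le v(\delta)-\lambda^\star\cdot\delta$. So the supremum in the interchange identity must run over finite-cost couplings (the paper's $\Gamma_{\mathrm{fin}}(P_0)$). Your $\varepsilon$-maximizing couplings $\pi_\varepsilon=(z_\varepsilon,\mathrm{id})_\#P_0$ need not be of this type, because $(z_\varepsilon(x),x)\in\mathsf C$ is only pointwise finiteness:
\begin{itemize}
\item costs $c_j$ with $\lambda_j=0$ do not enter $f-c_\lambda$ at all, so they can be non-integrable along the selection, and $\lambda^\star$ may well have zero entries;
\item when $f$ is unbounded above, $\int f\,d\pi_\varepsilon$ and $\int c_\lambda\,d\pi_\varepsilon$ can both be $+\infty$ while their difference is finite.
\end{itemize}
For instance, take $E=\R$, $f(z)=|z|$, $c(z,x)=|z-x|$, $\lambda=1$ and $P_0$ standard normal. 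Every $z=tx$ with $t\ge1$ is an exact maximizer, and the selection $z(x)=e^{x^2}x$ gives $\int f\,d\pi=\int c\,d\pi=\infty$. For such a coupling the Lagrangian is $\infty-\infty$ and the supergradient inequality says nothing. Hence $\lambda^\star\cdot\delta+\int\varphi_{\lambda^\star}\,dP_0\le v(\delta)$ does not follow from what you wrote.

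The missing ingredient is a truncation, which is what the paper's lemma supplies. Perform the selection on $D_n=\{(z,x)\in\mathsf C:\sum_jc_j(z,x)\le n,\ |f(z)|\le n\}$ with the diagonal adjoined. Then $\pi_n=(T_n,\mathrm{id})_\#P_0$ has all costs bounded by $n$ and $|f(T_n(x))|\le|f(x)|+n$, so $\pi_n\in\Gamma_{\mathrm{fin}}(P_0)$. The truncated suprema $\varphi_{\lambda,n}=\max\{f,\sup_{z:(z,x)\in D_n}(f(z)-c_\lambda(z,x))\}\ge f$ increase to $\varphi_\lambda$ because $D_n\uparrow\mathsf C$. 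Monotone convergence applied to $\varphi_{\lambda,n}-f$ then yields $\sup_{\Gamma_{\mathrm{fin}}(P_0)}\int(f-c_\lambda)\,d\pi\ge\int\varphi_\lambda\,dP_0$. Alternatively, keep your selection, send the mass of $\pi_\varepsilon$ lying outside $D_n$ to the diagonal, and let $n\to\infty$.

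Two smaller points need an argument rather than an assertion:
\begin{itemize}
\item That $v(\delta)<\infty$ forces $v<\infty$ near $\delta$ follows by mixing. If $\pi$ is feasible for $v(b)$, then $\alpha\pi+(1-\alpha)(\mathrm{id},\mathrm{id})_\#P_0$ with $\alpha=\min\{1,\min_j\delta_j/b_j\}$ is feasible for $v(\delta)$, giving $\alpha v(b)+(1-\alpha)\int f\,dP_0\le v(\delta)$.
\item That $\lambda^\star\in\R_+^m$ follows from monotonicity of $v$, via $0\le v(\delta+te_j)-v(\delta)\le t\lambda^\star_j$.
\end{itemize}
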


\begin{proof}
    See Appendix~\ref{app:proofs}.
\end{proof}

\begin{corollary}[Multi-budget worst-case probability]
\label{cor:finite-budget-probability}
    Let $\delta\in(0,\infty)^m$ and let $A\subseteq E$ be a closed event. Then
    \begin{equation}
  \sup_{Q\in\cU_\delta^{\mathbf c}(P_0)}Q(A)
  =
  \inf_{\lambda\in\R_+^m}
  \left\{
    \sum_{j=1}^m\lambda_j\delta_j
    +
    \int (1-d_\lambda(x,A))_+\,P_0(dx)
  \right\},
\end{equation}
where
  $d_\lambda(x,A)
  :=
  \inf_{z\in A}c_\lambda(z,x)$ is the infimum of the aggregate cost $c_\lambda$ of $x \in E$ against the event $A$.

\end{corollary}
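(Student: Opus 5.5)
We apply Theorem~\ref{thm:finite-budget-otdro} with $f:=\1_A$. Since $A$ is closed, $\1_A$ is upper semicontinuous: the superlevel sets $\{\1_A\ge t\}$ are either $E$, $A$, or $\emptyset$. It is also bounded, so $\1_A\in L^1(P_0)$. Because $\delta\in(0,\infty)^m$, the theorem gives
\begin{equation*}
  \sup_{Q\in\cU_\delta^{\mathbf c}(P_0)}Q(A)
  =\inf_{\lambda\in\R_+^m}\Big\{\sum_{j}\lambda_j\delta_j+\int\sup_{z\in E}\big(\1_A(z)-c_\lambda(z,x)\big)P_0(dx)\Big\},
\end{equation*}
and the value lies in $[0,1]$. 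All that remains is to identify the inner supremum pointwise in $x$.

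Fix $x\in E$ and $\lambda\in\R_+^m$. We split the supremum according to whether $z\in A$ or $z\notin A$.

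For $z\notin A$, the integrand equals $-c_\lambda(z,x)\le 0$. Choosing $z=x$ gives exactly $0$, since $c_j(x,x)=0$ for all $j$ and hence $(x,x)\in\mathsf C$ with $c_\lambda(x,x)=0$. Note that the choice $z=x$ gives $\1_A(x)-0=\1_A(x)\ge 0$ regardless of whether $x\in A$. So in all cases the overall supremum is at least $0$.

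For $z\in A$, the integrand equals $1-c_\lambda(z,x)$, and its supremum over $A$ is $1-\inf_{z\in A}c_\lambda(z,x)=1-d_\lambda(x,A)$. Here we use the conventions $d_\lambda=+\infty$ when every $z\in A$ lies outside $\mathsf C$, and $\inf\emptyset=+\infty$.

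Combining the two cases,
\begin{equation*}
\sup_z\big(\1_A(z)-c_\lambda(z,x)\big)=\max\{0,\,1-d_\lambda(x,A)\}=(1-d_\lambda(x,A))_+ .
\end{equation*}
This holds because the $z\notin A$ branch contributes at most $0$ and the value $0$ is attained at $z=x$. If $x\in A$, then $d_\lambda(x,A)=0$ and both sides equal $1$, so this case is consistent.

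Substituting this identity into the dual yields the claimed formula. The function $x\mapsto(1-d_\lambda(x,A))_+$ coincides with the inner supremum, which the theorem asserts is universally measurable, so the integral is well defined.

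The only delicate points are bookkeeping. First, infinite-cost pairs must stay inadmissible even when $\lambda_j=0$; this is already encoded in $c_\lambda$ through $\mathsf C$, and it carries over to $d_\lambda$ automatically. Second, we need the convention $(1-\infty)_+=0$. Neither step needs any argument beyond Theorem~\ref{thm:finite-budget-otdro}.
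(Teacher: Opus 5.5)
Your proposal is correct and takes the same route as the paper: apply Theorem~\ref{thm:finite-budget-otdro} with $f=\1_A$, and identify the inner supremum pointwise as $(1-d_\lambda(x,A))_+$. The paper simply asserts that identity in one line. You additionally verify its hypotheses (upper semicontinuity of $\1_A$ for closed $A$, integrability) and check the identity itself by splitting over $z\in A$ and $z\notin A$, using $c_\lambda(x,x)=0$.
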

\begin{proof}
    Apply Theorem~\ref{thm:finite-budget-otdro} with $f=\1_A$ and $\sup_{z \in E} (\1_A (z) - c_\lambda (z, x) ) = ( 1 - d_\lambda (x, A))_+ $, $x \in E$. 
\end{proof} 

The significance of these dual formulations is that they depend only on the baseline measure $P_0$, which is completely characterized and is usually chosen in a way that is easy to work with or draw samples from, reducing the primal infinite-dimensional optimization problem to finite-dimensional optimization problems.

\section{Signature-Induced OT Model Risk}
\label{sec:theory}

The above multi-budget OT-DRO theorem (Theorem \ref{thm:finite-budget-otdro}) is applicable for a large class of transport cost functions $c_1, \ldots, c_m$. The critical modeling specification is the design of the transport costs, which inherently determine the geometry of the ambiguity set: the costs determine which changes of a stochastic law are considered small and hence which forms of model misspecification the robust value protects against. For an explicitly given family of path functionals of interest $F_1,\ldots,F_m$, one natural choice is to tailor the transport costs as
\begin{equation}
  c_j(x,y):=|F_j(x)-F_j(y)|,\qquad x, y \in E, j=1,\ldots,m.
\end{equation}
The resulting ambiguity set $\mathcal U_\delta^{\mathbf c}(P_0)$ then directly controls the perturbations of these particular functionals. 

This can be effective when the relevant quantities are known in advance; however, general path-dependent quantities of interest may be implicit, learned, or data-driven, and thus, such a payoff-specific ambiguity set can be difficult to interpret and calibrate. Moreover, such a construction does not provide a reusable geometry for other path-dependent payoffs or probabilities. As discussed in Section~\ref{subsec:signatures}, path signatures offer a natural structure to resolve these issues: Linear functionals of signatures approximate broad classes of path functionals, while individual signature coordinates encode ordered information about the path. This suggests using selected signature coordinates not only to represent payoff but also to measure discrepancies between path laws. In this section, we develop the signature-induced optimal-transport model-risk framework. We first formalize the coordinate-wise Sig-OT ambiguity geometry in Section~\ref{subsec:geometry}, and then analyze robust expectations and worst-case probabilities for linear signature scores in Sections~\ref{subsec:robust-expectation}-\ref{subsec:robust-probability}.

\subsection{Coordinate-wise Sig-OT ambiguity geometry}
\label{subsec:geometry}
We now instantiate the multi-budget OT-DRO by selected signature coordinates as transport coordinates. Fix the truncation level $N$ and a finite selected word set $\cW\subset\cA^{\le N}\setminus\{\emptyset\}$. Since different signature coordinates may have different scales, we choose (usually model/sample-determined) deterministic normalizers $r_I>0$ and define
$\phi_I(x):={S^I_{0,T}(\widehat x)}/{r_I}$, $x \in \mathcal X $, $I\in\cW$ and the selected signature feature map $\Phi(x):=(\phi_I(x))_{I\in\cW}\in\R^{\cW}$, $x \in \mathcal X$.

\begin{definition}[Coordinate-wise Sig-OT ambiguity set]
\label{def:sigot-ambiguity}
For a budget vector $\delta=(\delta_I)_{I\in\cW}\in\R_+^{\cW}$, define
\begin{equation}
  \cU_\delta(P_0)
  :=
  \Big\{
    Q\in\cP(\cX):
    \text{there exists }  \pi\in\Pi(Q,P_0)
    \text{ such that }
    \int c_I(x,y)\,\pi(dx,dy)\le\delta_I,
    \ I\in\cW
  \Big\}, 
\end{equation}
where we define the coordinate transport cost
$c_I(x,y):=|\phi_I(x)-\phi_I(y)|$, $x,y\in\cX$, $I\in\cW$ from $\Phi$. 
\end{definition}

\begin{remark}
We note that although the constraints are coordinate-wise, a law $Q$ is feasible only if a single coupling $\pi\in\Pi(Q,P_0)$ satisfies all coordinate budgets simultaneously. This common-coupling requirement underlies the multi-budget duality below. It is distinct from multi-source Wasserstein DRO in \cite{rychener2024heterogeneous}, which constrains a candidate law to lie in the intersection of Wasserstein neighborhoods around several empirical source laws, with each neighborhood certified through its corresponding source coupling. Here, all budgets are instead imposed on selected signature coordinates relative to one baseline law, and are jointly certified by the same pathwise transport plan.
\end{remark}

For the tractable formulas that we will develop below, we shall consider quantities that depend on a path through the selected signature feature vector $\Phi(x)$. We thus pass to the feature laws. This passage introduces an important distinction. Recall from Proposition \ref{prop:signature-identities} that different coordinates of the path signature satisfy the shuffle identity, meaning that the vector $\Phi(x)$ in general lies only in a constrained subset of $\R^{\cW}$. Thus, not every law on $\R^{\cW}$ is generated by actual paths. It is important to separate the realizable feature laws from the larger ambient feature-space relaxation.

\begin{definition}[Realizable and relaxed feature-law ambiguity sets]
\label{def:feature-law-sets}
Let $\mu_0:=\Phi_\#P_0\in\cP(\R^{\cW})$ be the pushforward of $P_0$ by $\Phi$.
We define the set of realizable selected-signature feature laws as $\cR_\Phi:=\{\Phi_\#Q:Q\in\cP(\cX)\}\subseteq\cP(\R^{\cW})$. The relaxed common-coupling feature-space transport set is then defined as
\begin{equation}
  \cV_\delta(\mu_0)
  =
  \Big\{
    \nu\in\cP(\R^{\cW}):
    \text{there exists } \gamma\in\Pi(\nu,\mu_0)
    \text{ such that }
    \int |z_I-z'_I|\,\gamma(dz,dz')\le\delta_I,
    \ I\in\cW
  \Big\}
\end{equation}
for budget vector $\delta=(\delta_I)_{I \in \mathcal W} \in \mathbb R_+^{\mathcal W}$. 
\end{definition}

The set $\cR_\Phi$ records which selected-feature laws can be realized by some path law. The set $\cV_\delta(\mu_0)$ forgets the realizability of the paths  and keeps only the coordinate-wise transport budgets in feature space. These two views are linked by the feature-law identity
\begin{equation}
  \label{eq:feature-law-identity}
  \{\Phi_\#Q:Q\in\cU_\delta(P_0)\}
  =
  \cR_\Phi\cap\cV_\delta(\mu_0)
\end{equation}
The identity follows by transferring couplings through $\Phi$. A feasible path coupling $\pi\in\Pi(Q,P_0)$ pushes forward under $(\Phi,\Phi)$ to a feasible feature coupling between $\Phi_\#Q$ and $\mu_0$. Conversely, if $\nu=\Phi_\#Q$ and $\gamma\in\Pi(\nu,\mu_0)$ satisfies the feature-space budget constraints, then lifting $\gamma$ through regular conditional laws of $Q$ and $P_0$ given $\Phi$ gives a path coupling that preserves the same coordinate-budget values.

Thus, for functionals depending only on $\Phi(x)$, optimizing over $\cR_\Phi\cap\cV_\delta(\mu_0)$ is equivalent to the original path-level problem. The removal of the realizability requirement and the optimization over the whole $\cV_\delta(\mu_0)$ gives a tractable relaxation of the ambient feature-space. Because truncated signatures satisfy algebraic relations, this relaxation can be strict; Appendix~\ref{app:strictness} exhibits an explicit coordinate-wise step-2 gap.

\paragraph{On the coordinate budgets.}
The budget vector $\delta$ is treated as an input to the Sig-OT ambiguity set and can be empirically calibrated from the baseline and stress samples. On the theoretical side, the finite first moments of the selected signature coordinates of the baseline and the stress model always allow finite coordinate budgets $\delta$ under a chosen common coupling to cover the stress model. Moreover, we record two preliminary model-based bounds in Appendix~\ref{app:analytic-bounds}: synchronously coupled diffusions lead to budgets controlled by initial-condition and coefficient perturbations, while fixed-grid Gaussian path models lead to budgets controlled by the corresponding finite-dimensional Gaussian transport distance after interpolation.

\subsection{Robust expectations for affine signature scores}
\label{subsec:robust-expectation}

We consider the path-level robust expectation 
  $\mathfrak{R}_\delta(F;P_0)
  := \sup_{Q\in\cU_\delta(P_0)}\E_Q[F(X)]$ 
for a measurable payoff $F:\cX\to\R$. For $\delta\in(0,\infty)^{\cW}$ and the upper semicontinuous integrable $F$, by taking
  $c_I(x,y)=|\phi_I(x)-\phi_I(y)|$, $I\in\cW$,
in Theorem~\ref{thm:finite-budget-otdro}, the robust expectation admits the dual formulation:
\begin{equation}
  \mathfrak{R}_\delta(F;P_0)
  =
  \inf_{\lambda\in\R_+^{\cW}}
  \Big\{
    \sum_{I\in\cW}\lambda_I\delta_I
    +
    \E_{P_0}\Big[
      \sup_{y\in\cX}
      \Big(
        F(y)-\sum_{I\in\cW}\lambda_I|\phi_I(y)-\phi_I(X)|
      \Big)
    \Big]
  \Big\}.
\end{equation}

This dual formulation is still hard to analyse for general functionals $F$. Recall from Proposition~\ref{thm:signature-universality} that, on compact path classes, continuous path functionals can be approximated by linear functionals of truncated signatures. This motivates studying the dual formulation first for affine signature scores. For $\ell=(\ell_I)_{I\in\cW}$ and $\beta\in\R$, define
$s_{\ell,\beta}(x):=\beta+\sum_{I\in\cW}\ell_I\phi_I(x)$, and set  $\kappa(\ell,\delta):=\sum_{I\in\cW}|\ell_I|\delta_I$.

\begin{proposition}[Affine robust-expectation premium]
\label{prop:affine-expectation}
Assume $\mu_0=\Phi_\#P_0$ has finite first moments. Then
\begin{equation} \label{eq:prop3.3}
\begin{aligned}
  \mathfrak{R}_\delta(s_{\ell,\beta};P_0)
  & =
  \sup_{\nu\in\cR_\Phi\cap\cV_\delta(\mu_0)}
  \int \Big(\beta+\sum_{I\in\cW}\ell_I z_I\Big)\nu(dz) \\
  & \le
  \sup_{\nu\in\cV_\delta(\mu_0)}
  \int \Big(\beta+\sum_{I\in\cW}\ell_I z_I\Big)\nu(dz) =
  \E_{P_0}[s_{\ell,\beta}(X)]+\kappa(\ell,\delta).
\end{aligned}
\end{equation}
\end{proposition}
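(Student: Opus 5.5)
The proof has three parts, matching the three relations in \eqref{eq:prop3.3}: the first equality, the inequality, and the final equality.

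\textbf{First equality.} The score satisfies $s_{\ell,\beta}(x)=g(\Phi(x))$ with $g(z):=\beta+\sum_{I\in\cW}\ell_I z_I$. Hence $\E_Q[s_{\ell,\beta}(X)]=\int g\,d(\Phi_\#Q)$ for every $Q$ for which either side is defined. Taking the supremum over $Q\in\cU_\delta(P_0)$ and applying the feature-law identity \eqref{eq:feature-law-identity} turns this into a supremum over $\nu\in\cR_\Phi\cap\cV_\delta(\mu_0)$.

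There is an integrability issue. For $\nu\in\cV_\delta(\mu_0)$ and a feasible coupling $\gamma$, we have $\int|z_I|\,d\nu\le\int|z_I'|\,d\mu_0+\delta_I<\infty$. So every feasible $\nu$ has finite first moments, and all the integrals involved are well defined.

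\textbf{Inequality.} This is immediate, since the supremum over $\cR_\Phi\cap\cV_\delta(\mu_0)$ is taken over a subset of $\cV_\delta(\mu_0)$.

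\textbf{Final equality: upper bound.} Take $\nu\in\cV_\delta(\mu_0)$ with a feasible coupling $\gamma$. Then
\[
\int g\,d\nu-\int g\,d\mu_0=\sum_{I\in\cW}\ell_I\int(z_I-z_I')\,\gamma(dz,dz')\le\sum_{I\in\cW}|\ell_I|\int|z_I-z_I'|\,d\gamma\le\kappa(\ell,\delta).
\]
We also have $\int g\,d\mu_0=\E_{P_0}[s_{\ell,\beta}(X)]$, so the right-hand side of \eqref{eq:prop3.3} is at most $\E_{P_0}[s_{\ell,\beta}(X)]+\kappa(\ell,\delta)$.

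\textbf{Final equality: attainment.} I will construct a deterministic translation. Set $z\mapsto z+\Delta$ with $\Delta_I:=\operatorname{sgn}(\ell_I)\,\delta_I$, using the convention $\operatorname{sgn}(0):=0$; the choice of direction for coordinates with $\ell_I=0$ does not matter. Let $\nu^\star$ be the pushforward of $\mu_0$ under this shift, and let $\gamma^\star$ be the coupling of $z'+\Delta$ with $z'$, where $z'\sim\mu_0$. Under $\gamma^\star$ each coordinate cost equals exactly $|\Delta_I|\le\delta_I$, so $\nu^\star\in\cV_\delta(\mu_0)$. Its value is
\[
\int g\,d\nu^\star=\int g\,d\mu_0+\sum_{I\in\cW}\ell_I\Delta_I=\E_{P_0}[s_{\ell,\beta}(X)]+\kappa(\ell,\delta).
\]
This is exactly where the ambient relaxation matters: the shifted law generally violates the shuffle constraints, so $\nu^\star$ need not lie in $\cR_\Phi$. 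That is why the final equality holds for the relaxed problem over $\cV_\delta(\mu_0)$ rather than for the path-level problem.

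\textbf{Main obstacle.} Everything above is elementary except the feature-law identity \eqref{eq:feature-law-identity}, which the first equality relies on. Its reverse inclusion requires lifting a feature coupling to a path coupling. I would do this by disintegrating $Q$ and $P_0$ along $\Phi$, which is legitimate because $\cX$ is Polish and $\Phi$ is measurable, so regular conditional laws exist. One then glues the conditionals to $\gamma$ by setting $\pi(dx,dy)=\int Q(dx\mid z)\,P_0(dy\mid z')\,\gamma(dz,dz')$. Since $c_I(x,y)$ depends on $(x,y)$ only through $(\Phi(x),\Phi(y))$, this path coupling has the same coordinate costs as $\gamma$. As the paper sketches this argument right before the proposition, I would cite it and spell out only the gluing step.
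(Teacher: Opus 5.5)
Your proposal is correct and follows the paper's proof essentially step for step. The first equality comes from the feature-law identity, the inequality from dropping realizability, the upper bound from $\sum_I \ell_I\int(z_I-z_I')\,d\gamma\le\kappa(\ell,\delta)$, and attainment from the translation $z\mapsto z+(\operatorname{sgn}(\ell_I)\delta_I)_{I\in\cW}$; your remarks on finite first moments of every $\nu\in\cV_\delta(\mu_0)$ and the explicit gluing $\pi(dx,dy)=\int Q(dx\mid z)\,P_0(dy\mid z')\,\gamma(dz,dz')$ only fill in details the paper leaves implicit.
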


\begin{proof}
The first equality follows from the feature-law identity \eqref{eq:feature-law-identity}, because $s_{\ell,\beta}(x)=\beta+\sum_I\ell_I\Phi(x)_I$. Dropping the realizability condition gives the relaxed upper bound. For any $\nu\in\cV_\delta(\mu_0)$ and  $\gamma\in\Pi(\nu,\mu_0)$,
\begin{equation}
\begin{aligned}
  \int \Big(\beta+\sum_I\ell_Iz_I\Big)\nu(dz)
  -
  \int \Big(\beta+\sum_I\ell_Iz_I'\Big)\mu_0(dz')
  &=
  \int\sum_{I\in\cW}\ell_I(z_I-z_I')\,\gamma(dz,dz') \\
  &\le
  \sum_{I\in\cW}|\ell_I|
  \int |z_I-z_I'|\,\gamma(dz,dz') \le
  \kappa(\ell,\delta).
\end{aligned}
\end{equation}
This proves the relaxed upper bound. The bound is attained in the ambient feature space by the translation $T(z)_I=z_I+\delta_I\operatorname{sgn}(\ell_I)$, $I\in\cW$, with $\operatorname{sgn}(0)=0$: if $\nu^\star=T_\#\mu_0$, then the coupling $(T(z),z)_\#\mu_0$ certifies $\nu^\star\in\cV_\delta(\mu_0)$ and the objective increases by exactly $\sum_I|\ell_I|\delta_I$.
\end{proof}

In other words, $\kappa(\ell,\delta)$ is the exact robust premium for the ambient relaxed feature-space problem, and a relaxed premium for the original path-level problem. Besides, for a fixed budget vector, this premium scales with the weighted support of the coefficient vector: sparse affine scores have smaller premium.

\subsection{Worst-case probability bounds for signature half-space events}
\label{subsec:robust-probability}

We next analyze worst-case probabilities. For a closed event $A\subseteq\cX$, define
  $\mathfrak{P}_\delta(A;P_0)
:=\sup_{Q\in\cU_\delta(P_0)}Q(A)$. For $\delta\in(0,\infty)^{\cW}$, the exact path-level dual follows from Corollary~\ref{cor:finite-budget-probability} with the same coordinate costs $c_I(x,y)=|\phi_I(x)-\phi_I(y)|$:
  $\mathfrak{P}_\delta(A;P_0)
  =
  \inf_{\lambda\in\R_+^{\cW}}
  \{  \sum_{I\in\cW}\lambda_I\delta_I
    +
    \E_{P_0}\left[(1-d_\lambda(X,A))_+\right]
  \}$,
where
$d_\lambda(x,A)
  =
  \inf_{y\in A}
  \sum_{I\in\cW}\lambda_I|\phi_I(y)-\phi_I(x)|$.

As in the expectation case, this dual formula is exact at the path level but is not usually explicit. In many examples, such as ruin or occupation time type events, the risk event $A$ in consideration can be abstracted into a functional half-space event of the form $A=\{F\geq b\}$, where $F$ is a suitable path functional. Again, motivated by the universal approximating property of affine signatures (cf. Proposition \ref{thm:signature-universality}), we focus here on affine signature half-spaces, for which the relaxed feature-space bound reduces to a one-dimensional optimization and, under mild regularity, to a shifted baseline threshold. For $\ell=(\ell_I)_{I\in\cW}$ and $b\in\R$, define  $s_\ell(x):=\sum_{I\in\cW}\ell_I\phi_I(x)$,   $A_{\ell,b}:=\{x\in\cX:s_\ell(x)\ge b\}$,
and the corresponding ambient feature half-space
  $H_{\ell,b}
  :=
  \{
    z\in\R^{\cW}:
    \sum_{I\in\cW}\ell_Iz_I\ge b\}$.
Since $A_{\ell,b}=\Phi^{-1}(H_{\ell,b})$, the feature-law identity \eqref{eq:feature-law-identity} gives   $\mathfrak{P}_\delta(A_{\ell,b};P_0)
  =
\sup_{\nu\in\cR_\Phi\cap\cV_\delta(\mu_0)}\nu(H_{\ell,b})
  \le
\sup_{\nu\in\cV_\delta(\mu_0)}\nu(H_{\ell,b})$. The only geometric input needed is the weighted distance from a point to a half-space. For $\lambda\in(0,\infty)^{\cW}$, define
  $\|h\|_\lambda:=\sum_{I\in\cW}\lambda_I|h_I|$, $
  \|\ell\|_{\lambda,*}:=\max_{I\in\cW}{|\ell_I|}\,/\,{\lambda_I}$.
\begin{lemma}[Weighted distance to an affine halfspace]
\label{lem:weighted-halfspace-distance}
For $\ell\ne0$, $\lambda\in(0,\infty)^{\cW}$, and $z\in\R^{\cW}$,
\begin{equation}
  \inf_{w\in H_{\ell,b}}\|w-z\|_\lambda
  =
  \frac{1}{\|\ell\|_{\lambda,*}} \big[b-\sum_{I\in\cW}\ell_Iz_I\big]_+.
\end{equation}
\end{lemma}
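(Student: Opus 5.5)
We will prove the identity by establishing matching lower and upper bounds on the left-hand side. Set $g:=\big[b-\sum_{I\in\cW}\ell_Iz_I\big]_+$.

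\emph{Case $g=0$.} Here $z\in H_{\ell,b}$, so the infimum is $0$ (take $w=z$). The right-hand side also vanishes, since $\|\ell\|_{\lambda,*}>0$ whenever $\ell\ne0$ and $\lambda\in(0,\infty)^{\cW}$.

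\emph{Case $g>0$: lower bound.} Write $h:=w-z$, so that $w\in H_{\ell,b}$ is equivalent to $\sum_I\ell_Ih_I\ge g$. The key step is a weighted Hölder-type inequality:
\begin{equation*}
  \sum_{I}\ell_Ih_I
  =\sum_I\frac{\ell_I}{\lambda_I}\,\lambda_Ih_I
  \le\max_I\frac{|\ell_I|}{\lambda_I}\sum_I\lambda_I|h_I|
  =\|\ell\|_{\lambda,*}\,\|h\|_\lambda.
\end{equation*}
Hence every feasible $w$ satisfies $\|w-z\|_\lambda\ge g/\|\ell\|_{\lambda,*}$, which is the lower bound.

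\emph{Case $g>0$: attainment.} Choose an index $I^\star\in\argmax_I|\ell_I|/\lambda_I$. Since $\ell\ne0$, we have $\ell_{I^\star}\ne0$. Define
\begin{equation*}
  h_{I^\star}:=\frac{g\,\operatorname{sgn}(\ell_{I^\star})}{|\ell_{I^\star}|},
  \qquad h_I:=0 \ \text{for } I\ne I^\star.
\end{equation*}
Then $\sum_I\ell_Ih_I=g$, so $w=z+h\in H_{\ell,b}$. Moreover,
\begin{equation*}
  \|h\|_\lambda=\frac{\lambda_{I^\star}\,g}{|\ell_{I^\star}|}=\frac{g}{\|\ell\|_{\lambda,*}},
\end{equation*}
so the infimum is attained and equals the claimed value.

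The only step requiring care is the duality pairing between $\|\cdot\|_\lambda$ and $\|\cdot\|_{\lambda,*}$ in the lower bound. It reduces to the elementary $\ell^1$–$\ell^\infty$ bound after rescaling coordinate $I$ by $\lambda_I$, and the extremal direction is concentrated on a single coordinate achieving the maximum ratio $|\ell_I|/\lambda_I$. No earlier results from the paper are needed; the argument is purely finite-dimensional convex geometry.
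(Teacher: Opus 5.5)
Your proposal is correct and follows essentially the same route as the paper: the weighted $\ell^1$--$\ell^\infty$ (Hölder-type) pairing $\sum_I\ell_I(w_I-z_I)\le\|\ell\|_{\lambda,*}\|w-z\|_\lambda$ gives the lower bound, and a single-coordinate shift along an index maximizing $|\ell_I|/\lambda_I$ attains it. Your remark that this maximizing index automatically has $\ell_{I^\star}\ne0$ because $\ell\ne0$ is a small point the paper leaves implicit.
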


\begin{proof}
If $z\in H_{\ell,b}$, both sides are zero;  Otherwise, set
  $r=b-\sum_{I\in\cW}\ell_Iz_I>0$.
For any $w\in H_{\ell,b}$,
  $r
  \le
  \sum_I\ell_I(w_I-z_I)
  \le
  \|\ell\|_{\lambda,*}\|w-z\|_\lambda$,
which gives the lower bound. 

For the reverse inequality, choose a maximize 
  $J\in\argmax_{I\in\cW}({|\ell_I|}/{\lambda_I})$ for which $\ell_J\ne0$.   
Define $w$ by $w_I=z_I$ for $I\ne J$ and
  $w_J=z_J+\operatorname{sgn}(\ell_J)\cdot {r}/{|\ell_J|}$. Then $\sum_I\ell_Iw_I=b$, so $w\in H_{\ell,b}$, and
  $\|w-z\|_\lambda
  =
  \lambda_J{r}/{|\ell_J|}
  =
  {r}/{\|\ell\|_{\lambda,*}}$. This proves equality. Note $w_J$ is generally no longer a valid signature vector. 
\end{proof}

Recall that $\kappa(\ell,\delta)=\sum_{I\in\cW}|\ell_I|\delta_I$. Define
  $D:=[b-s_\ell(X)]_+$,
  $X\sim P_0$,
and
  $h(u):=\E_{P_0}[D\,\1_{\{D\le u\}}]$,$
  u\ge0$.

\begin{theorem}[Affine halfspace probability bound]
\label{thm:scalar-reduction}
Assume $\delta\in(0,\infty)^{\cW}$, $\ell\ne0$, and $\E_{P_0}[D]<\infty$. Then
\begin{equation} \label{eq:thm3.5}
  \mathfrak{P}_\delta(A_{\ell,b};P_0)
  \le
  \sup_{\nu\in\cV_\delta(\mu_0)}\nu(H_{\ell,b}) =
  \inf_{u>0}
  \left\{
    \frac{\kappa(\ell,\delta)}{u}
    +
    \E_{P_0}\left[\left(1-\frac{D}{u}\right)_+\right]
  \right\}.
\end{equation}
If $\kappa(\ell,\delta)\ge\E[D]$, the relaxed value is $1$; In the interior case $0<\kappa(\ell,\delta)<\E[D]$, if $u^\star>0$ satisfies $h(u^\star)=\kappa(\ell,\delta)$
and $P_0(D=u^\star)=0$, then $u^\star$ minimizes the scalar objective above and
  \begin{equation}
      \sup_{\nu\in\cV_\delta(\mu_0)}\nu(H_{\ell,b})
  =
  P_0\big(s_\ell(X)\ge b-u^\star\big).
  \end{equation}
\end{theorem}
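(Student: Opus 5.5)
The first inequality in \eqref{eq:thm3.5} is already recorded above: by the feature-law identity \eqref{eq:feature-law-identity}, $\mathfrak{P}_\delta(A_{\ell,b};P_0)=\sup_{\nu\in\cR_\Phi\cap\cV_\delta(\mu_0)}\nu(H_{\ell,b})$, and dropping realizability gives the bound. The work is the equality for the relaxed value. I plan to apply Corollary~\ref{cor:finite-budget-probability} on $E=\R^{\cW}$ with baseline $\mu_0$, the closed event $H_{\ell,b}$, and costs $c_I(z,z')=|z_I-z'_I|$. These are continuous and finite, so $\mathsf C$ is everything and $c_\lambda(w,z)=\|w-z\|_\lambda$. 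This gives
\[
\sup_{\nu\in\cV_\delta(\mu_0)}\nu(H_{\ell,b})=\inf_{\lambda\in\R_+^{\cW}}\Big\{\sum_I\lambda_I\delta_I+\int\big(1-d_\lambda(z,H_{\ell,b})\big)_+\mu_0(dz)\Big\}.
\]

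\textbf{Reduction to one variable.} For $\lambda\in(0,\infty)^{\cW}$, Lemma~\ref{lem:weighted-halfspace-distance} gives $d_\lambda(z,H_{\ell,b})=[b-\ell\cdot z]_+/\|\ell\|_{\lambda,*}$. Under $\mu_0$, the quantity $[b-\ell\cdot z]_+$ has the law of $D$. So with $u:=1/\|\ell\|_{\lambda,*}$ the second term equals $\E[(1-D/u)_+]$, which depends on $\lambda$ only through $u$.

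For fixed $u$, I then minimize $\sum_I\lambda_I\delta_I$ subject to $\lambda_I\ge|\ell_I|/u$ for every $I$. The minimum is $\kappa(\ell,\delta)/u$, approached by $\lambda_I=\max(|\ell_I|,\epsilon)/u$ as $\epsilon\downarrow0$. Coordinates with $\lambda_I=0$ (including those with $\ell_I=0$) need separate handling. Either I check directly that $d_\lambda$ is still given by the lemma with the convention $|\ell_I|/0=\infty$ when $\ell_I\neq0$, in which case $d_\lambda$ vanishes off the support of $\ell$. Or I note that the dual objective is monotone and continuous in the limit $\lambda_I\downarrow 0$, so the infimum over $\R_+^{\cW}$ equals the infimum over the open orthant; I expect this to be routine via $\epsilon$-perturbation and monotone convergence. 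Together these give the scalar formula $\inf_{u>0}\{\kappa/u+\E[(1-D/u)_+]\}$.

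\textbf{The boundary case $\kappa\ge\E[D]$.} The objective is at most $1$ for every $u$. As $u\to\infty$, it equals $1-(\E[D\1_{D\le u}]-\kappa)/u-\E[\1_{D>u}(\cdot)]$; more simply, $\kappa/u+\E[(1-D/u)_+]\ge \kappa/u+1-\E[D]/u\ge1$, since $(1-t)_+\ge 1-t$. So every value of the objective is $\ge1$, and the infimum is $1$.

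\textbf{The interior case $0<\kappa<\E[D]$.} Write $g(u)=\kappa/u+\E[(1-D/u)_+]=\kappa/u+P(D\le u)-h(u)/u$. I will establish optimality by convexity in $v=1/u$. Indeed, $G(v)=\kappa v+\E[(1-vD)_+]$ is convex in $v>0$, with right derivative $\kappa-\E[D\1_{\{vD<1\}}]$ and left derivative $\kappa-\E[D\1_{\{vD\le1\}}]$. At $v^\star=1/u^\star$ with $P(D=u^\star)=0$, both one-sided derivatives equal $\kappa-h(u^\star)=0$, so $v^\star$ is a global minimizer of the convex function $G$. Substituting, $g(u^\star)=\kappa/u^\star+P(D\le u^\star)-h(u^\star)/u^\star=P(D\le u^\star)$. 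The event $D\le u^\star$ equals $\{s_\ell(X)\ge b-u^\star\}$ because $u^\star>0$, which yields $P_0(s_\ell(X)\ge b-u^\star)$.

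I expect the main obstacle to be the careful justification of the reduction from $\lambda\in\R_+^{\cW}$ to the open orthant and the scalar $u$, particularly the zero-multiplier coordinates. I would handle these by the $\epsilon$-perturbation argument and the monotonicity of $d_\lambda$ in $\lambda$.
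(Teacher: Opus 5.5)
Your proposal is correct and follows the same route as the paper: Corollary~\ref{cor:finite-budget-probability} on $\R^{\cW}$, then Lemma~\ref{lem:weighted-halfspace-distance}, then minimizing $\sum_I\lambda_I\delta_I$ at fixed $u$ to get $\kappa/u$, then a scalar analysis. You differ only where the paper is terse, and in both places your version is an improvement.

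\emph{Zero multipliers.} The paper passes from $\lambda\in\R_+^{\cW}$ to the open orthant without comment, and you close this. Monotonicity alone suffices. Write $\Psi(\lambda):=\sum_I\lambda_I\delta_I+\int(1-d_\lambda(z,H_{\ell,b}))_+\,\mu_0(dz)$. Since $d_\lambda$ is nondecreasing in $\lambda$, you get $\Psi(\lambda+\epsilon\mathbf 1)\le\Psi(\lambda)+\epsilon\sum_I\delta_I$. No continuity or monotone-convergence step is needed.

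\emph{Scalar step.} The paper uses the sign of $f'(u)=(h(u)-\kappa)/u^2$ at non-atoms together with monotonicity of $h$. Your convexity of $G(v)=\kappa v+\E[(1-vD)_+]$ in $v=1/u$ is cleaner and slightly stronger. You have $G'_-(v^\star)=\kappa-h(u^\star)=0$ and $G'_+(v^\star)=u^\star P_0(D=u^\star)\ge0$. Hence $0\in\partial G(v^\star)$, and $u^\star$ is optimal even without the no-atom hypothesis. Likewise, $\{D\le u^\star\}=\{s_\ell(X)\ge b-u^\star\}$ holds exactly once $u^\star>0$, as you note. For the case $\kappa\ge\E[D]$, your bound via $(1-t)_+\ge 1-t$ is more direct than the paper's derivative argument.

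Fix two slips in the write-up.
\begin{enumerate}
\item You define $u:=1/\|\ell\|_{\lambda,*}$. But the lemma gives $d_\lambda=D/\|\ell\|_{\lambda,*}$, so everything that follows (the term $\E[(1-D/u)_+]$ and the constraint $\lambda_I\ge|\ell_I|/u$) requires $u:=\|\ell\|_{\lambda,*}$.
\item In the case $\kappa\ge\E[D]$, the objective is not at most $1$ for every $u$; it blows up as $u\downarrow0$. The upper bound you need is that the objective tends to $1$ as $u\to\infty$, by dominated convergence. Alternatively, the relaxed primal value is a probability and so is trivially at most $1$.
\end{enumerate}
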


\begin{proof}
Apply Corollary \ref{cor:finite-budget-probability} in the ambient feature space $E=\R^{\cW}$ with baseline law $\mu_0$, event $H_{\ell,b}$, and coordinate costs $|z_I-z_I'|$. The half-space-distance formula in Lemma \ref{lem:weighted-halfspace-distance} gives
\begin{equation} \label{eq:thm3.5prf}
  \sup_{\nu\in\cV_\delta(\mu_0)}\nu(H_{\ell,b})
  =
  \inf_{\lambda\in(0,\infty)^{\cW}}
  \Big\{
    \sum_{I\in\cW}\lambda_I\delta_I
    +
    \E_{P_0}\Big[
      \left(
        1-\frac{D}{u(\lambda)}
      \right)_+
    \Big]
  \Big\},
\end{equation}
where $u(\lambda)=\max_I |\ell_I|/\lambda_I$. For fixed $u_0>0$, the constraint $u(\lambda)=u_0$ implies
$\lambda_I\ge |\ell_I|/u_0$ for $I\in\cW$ and, hence, with 
$\delta_I\ge0$, we have 
  $\sum_{I\in\cW}\lambda_I\delta_I
  \ge
  {u_0}^{-1} \sum_{I\in\cW}|\ell_I|\delta_I
  =
  {\kappa(\ell,\delta)}/{u_0}$.
This lower bound is attained in coordinates with $\ell_I\ne0$ taking $\lambda_I=|\ell_I|/u_0$; If $\ell_I=0$, the corresponding
positive multiplier can be set to zero. Hence, the minimum is understood as an infimum when zero coefficients are present.

Let us denote by $f(u)$ the sum in the curly bracket on the right hand side of \eqref{eq:thm3.5prf}: 
  $f(u):={\kappa(\ell,\delta)}/{u}
  +\E_{P_0}[(1-{D}/{u})_+]$.
At values of $u$ with $P_0(D=u)=0$, differentiation of $f$ under expectation gives 
  $f'(u)=({h(u)-\kappa(\ell,\delta)})/{u^2}$.
If $\kappa(\ell,\delta)\ge\E[D]$, then $h(u)\le\kappa(\ell,\delta)$ for all $u$, so $f$ decreases to $1$ as $u\to\infty$. In the interior case, $h(u^\star)=\kappa(\ell,\delta)$ makes $u^\star$ stationary; since $h$ is nondecreasing, $f$ decreases before $u^\star$ and increases after it. Substituting  the moment equation gives
  $f(u^\star)
  = {\kappa(\ell,\delta)}/{u^\star}  + \E[
    (1-{D}/{u^\star})\1_{\{D\le u^\star\}} ] =P_0(D\le u^\star)$.
Finally, $D=[b-s_\ell(X)]_+$ implies $\{D\le u^\star\}=\{s_\ell(X)\ge b-u^\star\}$ up to the stated no-atom boundary convention.
\end{proof}

Unlike the robust expectation case, where $\kappa(\ell, \delta)$ is exactly the robust premium, here the worst-case probability of interest is rewritten in terms of the probability of a reasonably enlarged neighborhood of the original event $A$, evaluated in  the baseline model $P_0$. The enlargement is characterized by a threshold shift of $u^*$, which is again monotonically determined by $\kappa(\ell, \delta)$, as in \eqref{eq:thm3.5}.

\section{Budget-Aware Linear Signature Surrogates}
\label{sec:learning}

The closed-form part of Section~\ref{sec:theory} applies to affine signature scores and affine signature half-space events. For general path-dependent functionals, the approximation property of signatures in Proposition~\ref{thm:signature-universality} suggests a natural route: replace a target functional $F:\cX\to\R$ by an {\it affine score surrogate} $s_{\ell,\beta}$, that is,  
\begin{equation} \label{eq:affinesurrogate}
  F(x) \approx s_{\ell,\beta}(x)=\beta+\sum_{I\in\cW}\ell_I\phi_I(x), \quad x \in \mathcal X 
\end{equation}
and then apply the affine Sig-OT formulas in section \ref{sec:theory} to this surrogate $s_{\ell,\beta}$ in \eqref{eq:affinesurrogate}.

It is important to note approximation quality in \eqref{eq:affinesurrogate} alone is not enough. In the affine formulas, the size of the robust correction is controlled by   $\kappa(\ell,\delta)=\sum_{I\in\cW}|\ell_I|\delta_I$.
For expectations \eqref{eq:prop3.3} in Proposition \ref{prop:affine-expectation}, this is the relaxed robust premium, while for half-space probabilities \eqref{eq:thm3.5}, it determines the shifted threshold in Theorem~\ref{thm:scalar-reduction}. A surrogate that fits $F$ well but uses many high-budget coordinates may therefore produce an uninformative robust upper bound. Thus, the surrogate-fitting problem in \eqref{eq:affinesurrogate} is not an ordinary regression; it is the problem of approximating $F$ while keeping  $\kappa(\ell, \delta)$ small.

Now fix a selected word set $\cW$ and take a calibrated budget vector $\widehat\delta=(\widehat\delta_I)_{I\in\cW}$ as given. Given baseline training data $(x_i,F(x_i))_{i=1}^n$, we fit the affine score $s_{\ell, \beta}$ to $F$ by the budget-weighted {\it least absolute shrinkage and selection operator} (lasso), that is, finding $\widehat{\ell}, \widehat{\beta}$ by minimizing the following function: 
\begin{equation} \label{eq:LASSO}
  (\widehat\ell,\widehat\beta)
  \in
  \argmin_{\ell,\beta}
  \Big\{
    \frac1n\sum_{i=1}^n
    \big(F(x_i)-s_{\ell,\beta}(x_i)\big)^2
    +
    \alpha\sum_{I\in\cW}\widehat\delta_I|\ell_I|
  \Big\}.
\end{equation}
The penalty is the empirical effective budget
  $\widehat\kappa(\ell)
  =
  \sum_{I\in\cW}\widehat\delta_I |\ell_I|$.
The tuning parameter $\alpha\ge0$ mediates the tradeoff between the baseline approximation and downstream robustness: $\alpha=0$ gives ordinary least squares over the selected signature coordinates, while larger $\alpha$ favors scores with smaller certified sensitivity.  The fitted affine surrogate $\widehat{s}:= s_{\widehat\ell,\widehat\beta}(x)$, $x \in \mathcal X$, and the effective budget $\widehat{\kappa}:=\kappa(\widehat\ell,\widehat\delta)$ are used with the results in section~\ref{sec:theory}.

\begin{remark}
    In certain learned applications, it is beneficial to apply an optional scalar calibration in addition to the affine score surrogate to improve the accuracy of approximation of $F$. Let $g:\R\to\R$ be a monotone Lipschitz map with Lipschitz constant $L_g$ and set $\widehat G(x)=g(\widehat s(x))$, $x \in \mathcal X$. Then by 
  $|\widehat G(x)-\widehat G(y)|
  \le
  L_g\sum_{I\in\cW}|\widehat\ell_I|\,|\phi_I(x)-\phi_I(y)|$,
the expectation correction is at most $L_g\widehat\kappa$. In addition, if $g$ is nondecreasing, an event $\{\widehat G\ge b\}$ reduces to a threshold event in the affine score by setting
  $\tau_b:=\inf\{t\in\R:g(t)\ge b\}$. Thus, such a monotone $1$-Lipschitz {\it postlink} can improve scalar calibration without changing the coefficient vector that defines the effective budget. In the following examples, it is used only as a post-fitting refinement; the central budget-aware object remains the affine signature score $\widehat s$.
\end{remark}

\section{Examples and Numerical Experiments}
\label{sec:numerics}

This section presents examples and numerical experiments that illustrate the practical use of the Sig-OT framework. Before turning to the individual examples, we first describe the common workflow for computing a robust expectation or probability in this framework. The workflow has three stages.
\begin{enumerate}[label=(\arabic*)]
  \item \textbf{Budget calibration.} Fix a baseline model, and estimate a budget vector $\widehat\delta$ through a stress model.
  \item \textbf{Score specification.} Choose the sparse affine signature score $s_{\widehat\ell, \widehat\beta}$, either manually or by fitting a general functional with the effective-budget weighted lasso from Section \ref{sec:learning}. Compute the robust premium $\kappa(\widehat\ell,\widehat\delta)$.
  \item \textbf{Robust evaluation.} Evaluate the relaxed robust expectation correction (cf. Proposition \ref{prop:affine-expectation}) or shifted-threshold probability (cf. Theorem \ref{thm:scalar-reduction}) under the baseline law.
\end{enumerate}

We begin in section~\ref{subsec:empirical-budget-protocol} with comments on empirical budget calibration. The remaining experiments then were split into two classes. Subsection~\ref{subsec:linear-case-study} uses controlled sparse affine-signature scores. Subsection~\ref{subsec:applications} considers general path-dependent applications: first path-dependent option pricing for robust expectations, and then reserve-path risk events for robust probabilities.

Throughout the numerical examples, the sampled path is augmented by normalized time \(t/T\), while the state channels are kept in their original units before signature extraction. We compute geometric signatures of the piecewise-linear interpolation of the sampled augmented path using \texttt{iisignature} \cite{iisignature}. To place heterogeneous coordinates on a comparable scale, every signature coordinate is divided by a baseline-fitted robust scale
  $a_I
  =
  \max\!\{
  1.4826\,\operatorname{median}_{j}
  |S^I(x_j)-\operatorname{median}_{r}S^I(x_r)|,
  10^{-8}
  \}$.
The same baseline-fitted factors \(a_I\) are then used for the corresponding stress sample and for budget calibration. Thus, unless explicitly stated otherwise, \(\widetilde S^I=S^I/a_I\) denotes a signature coordinate normalized by median absolute deviation.

\subsection{On calibrating \texorpdfstring{$\delta$}{delta} and covering stress laws}
\label{subsec:empirical-budget-protocol}

Recall that the budget vector for an ambiguity set in Definition~\ref{def:sigot-ambiguity} has a joint interpretation, i.e., the coordinatewise coverage must be realized by the same coupling. Given a benchmark or stress law $P_1$, any coupling $\pi\in\Pi(P_1,P_0)$ induces the coordinate budgets
  $\delta_I^\pi
  :=
  \int |\phi_I(x)-\phi_I(y)|\,\pi(dx,dy),\,\forall  I\in\cW$,
and hence, certifies $P_1\in\cU_{\delta^\pi}(P_0)$. More generally, any coordinate-wise upper bound $\delta_I\ge\delta_I^\pi$ also covers $P_1$. Thus, the covering vector is not unique and it is a priori not clear which one is more suitable. For a fixed sparse affine score $s_{\ell,\beta}$, recall that the relevant robust premium scalar to be minimized is   $\kappa(\ell,\delta)=\sum_{I\in\cW}|\ell_I|\delta_I$, so a jointly certified natural calibration is to choose a coupling that is favorable in the active, high-weight coordinates. Let $\widehat P_0$ and $\widehat P_1$ be empirical laws of baseline samples $x_1,\ldots,x_n$ and stress samples $y_1,\ldots,y_m$, then the favorable joint coupling is thus the score-weighted one 
  $\widehat\pi^\ell
  \in
  \argmin_{\pi\in\Pi(\widehat P_1,\widehat P_0)}
  \sum_{I\in\cW}|\ell_I|
  \sum_{i,j}\pi_{ij}|\phi_I(y_i)-\phi_I(x_j)|$,
and the corresponding joint covering vector is estimated by
\begin{equation}\label{eq:joint delta}
      \widehat\delta_I^{\mathrm{joint}}:
  =
  \sum_{i,j}\widehat\pi^\ell_{ij}|\phi_I(y_i)-\phi_I(x_j)|.
\end{equation}
This construction is meaningful when the affine direction is fixed in advance. However, for learned path functionals, since the fitted coefficients already depend on the budget used in the weighted lasso, one needs to calibrate $\delta$ independently before knowing the sparse affine-signature surrogate. A practical way is to first use a plain linear regression or a plain lasso to obtain a preliminary affine-signature surrogate and then calibrate a $\widehat\delta_I^{\mathrm{joint}}$ in \eqref{eq:joint delta} with the preliminary affine-surrogate and then use the weighted lasso.

\paragraph{A decoupled proxy.}
Yet another computational difficulty is that common-coupling calibration is a multidimensional empirical OT problem. With $n$ baseline paths and $m$ stress paths, a Sinkhorn-type entropic solver typically forms an $n\times m$ cost or kernel matrix and performs work of $O(mn)$ order per scaling iteration, with comparable memory cost unless additional structure is exploited. This can be computationally expensive even with mini-batches. By contrast, one-dimensional empirical $W_1$ distances are much cheaper to compute coordinate by coordinate. This gives the decoupled proxy
\begin{equation}
  \widehat\delta_I^{\mathrm{dec}}:
  =
  W_1\big((\phi_I)_\#\widehat P_1,(\phi_I)_\#\widehat P_0\big),
  \qquad I\in\cW.
\end{equation}
Note that it is generally not a coverage certificate for $\widehat P_1\in\cU_{\widehat\delta}(\widehat P_0)$, because the coordinate-wise optimal couplings generally do not agree. Nevertheless, we notice that since we mostly work with sparse affine signature scores, such proxies are usually not too far from a certified budget vector $\widehat\delta_I^{\mathrm{joint}}$. Table \ref{tab:delta comparisons} compares the decoupled coordinate-wise proxy with the score-weighted common-coupling budget \footnote{The joint vector is estimated from \(16\) subsampled empirical OT problems per replicate, each using \(512\) baseline and \(512\) stress paths. Each subproblem is solved by entropic Sinkhorn with regularization \(10^{-2}\), at most \(400\) iterations, and tolerance \(10^{-9}\); the reported coordinate costs are averaged across the subsampled couplings.} for a fixed sparse affine signature score 
  $s(X)
  =
  -0.35\,\widetilde S^X
  +0.15\,\widetilde S^{tX}
  -0.45\,\widetilde S^{Xt}
  +0.20\,\widetilde S^{XX}$,
that we will further consider in Subsection \ref{subsec:linear-case-study}. Here, the baseline model is the log-price of a geometric Brownian motion with $\mu=0.05$ and $\sigma=0.2$, and the stress models are as in \eqref{eq:stress models}. The selected signature coordinates are normalized to a comparable scale before the comparison. We report both the aggregate budget size and the induced effective budget $\kappa$ with the same calibration convention as in the controlled linear expectation experiment below. In Table \ref{tab:delta comparisons} of the sparse affine linear scores (surrogates), the decoupled proxy preserves the scale of the jointly certified calibration rather well.

\begin{table}[ht]
\centering
\scriptsize
\caption{Decoupled versus score-weighted common-coupling Sig-OT budgets and induced effective budgets.}
\label{tab:delta comparisons}
\begin{tabular}{lrrrr}
\toprule
Stress model & $\|\widehat\delta^{\mathrm{dec}}\|_1$ & $\|\widehat\delta^{\mathrm{joint}}\|_1$ & $\widehat\kappa^{\mathrm{dec}}$ & $\widehat\kappa^{\mathrm{joint}}$ \\
\midrule
Adverse drift & 2.593 & 2.577 & 0.762 & 0.759 \\
Elevated volatility & 7.208 & 7.261 & 1.733 & 1.748 \\
Negative jumps & 0.835 & 0.962 & 0.207 & 0.241 \\
Crisis mixture & 1.231 & 1.301 & 0.301 & 0.315 \\
\bottomrule
\end{tabular}
\par\smallskip
\parbox{0.96\textwidth}{\scriptsize Based on \(20{,}000\) baseline and \(20{,}000\) stress paths on \(2{,}000\) time steps, signature level \(k=2\), baseline-fitted MAD scaling, and five independent seed replicates. Entries are replicate means.}
\end{table}

\paragraph{On the coverage of the ambiguity set.}
Notice that a budget vector $\delta$ always exists, as long as the selected signature coordinates have finite first moments under $P_0$ and $P_1$. This suggests that such ambiguity sets cover heterogeneous model uncertainties. We examine how empirical budget changes across parameter sweeps and compare heterogeneous models in Appendix~\ref{subsec:app-heterogeneous-coverage}. In addition, Appendix~\ref{app:analytic-bounds} gives two preliminary model-based upper bounds for feasible jointly certified budgets, based on synchronous diffusion couplings and fixed-grid Gaussian couplings, without any empirical constructions.

\subsection{Controlled sparse affine-signature robust values}
\label{subsec:linear-case-study}

We consider stress tests for two fixed sparse affine functionals of the time-augmented log-price signature, inducing a risk factor as well as a market tail event, under a controlled log-price model library. The observed state is the logarithmic return $X_t=\log(S_t/S_0)$, with $X_0=0$, simulated over $[0,1]$ with $T = 1$. The baseline law $P_0$ for the price of the stock is induced by the geometric Brownian motion GBM(0.05, 0.20), i.e. $\mathrm dS_t=S_t(\mu \,\mathrm dt+\sigma \,\mathrm dW_t)$  
  with $\mu=0.05$, $\sigma=0.20$. Equivalently, the log-price path $X$ satisfies
  $\mathrm d X_t=(\mu-\sigma^2/2)\,\mathrm dt+\sigma\,\mathrm dW_t$. 

The stress panel contains four deliberately separated alternatives:
\begin{equation}\label{eq:stress models}
    \begin{array}{ll}
\text{adverse-drift GBM} & \mu=-0.10,\ \sigma=0.20,\\
\text{elevated-volatility GBM} & \mu=0.05,\ \sigma=0.40,\\
\text{negative-jump Merton model} & \mu=0.05,\ \sigma=0.18,\ \lambda=1.0,\ m_J=-0.12,\ s_J=0.08,\\
\text{crisis-regime mixture} &
0.90\,\mathrm{GBM}(0.05,0.20)+0.10\,\mathrm{GBM}(-0.20,0.40).
\end{array}
\end{equation}

Here, $\mathrm{GBM}(\mu,\sigma)$ denotes the stock-price GBM above, represented through its log-price path, the Merton specification uses Gaussian jumps $J\sim N(m_J,s_J^2)$ with the usual exponential compensator,\footnote{For the discontinuous Merton paths, we record the log price on the uniform simulation grid and compute the ordinary geometric signature of its piecewise-linear interpolation.} and the mixture draws one latent regime for each whole path. In other words, the panel contains four common types of model uncertainties in market modeling.

The purpose of the examples below is twofold. First, although the affine-signature scores that we will consider are rather simple, they show that sparse affine signature scores can indeed induce meaningful risk factors, as well as tail events, that are sensitive to different types of model uncertainties. Second, they illustrate how tight the Sig-OT-induced robust values can be. Throughout, we write \(\widetilde S^I=\phi_I\) for the baseline-fitted MAD-normalized signature coordinate defined at the beginning of this section.

\subsubsection{Robust expectation for a sparse linear signature risk factor}
\label{subsubsec:linear-expectation}

Consider the following sparse affine signature market risk factor:
\begin{equation}
\begin{aligned}
     s_{\mathrm{risk}}(X)&:=-0.35\,\widetilde S^X
  +0.15\,\widetilde S^{tX}
  -0.45\,\widetilde S^{Xt}
  +0.20\,\widetilde S^{XX}\\
  &=
  -0.35\,\widetilde S^X
  -0.30\,\widetilde S^{Xt}
  +0.15\,\widetilde A^{tX}
  +0.20\,\widetilde S^{XX},
\end{aligned}
\end{equation}
where \(\widetilde A^{tX}:=\widetilde S^{tX}-\widetilde S^{Xt}\) denotes the normalized time-state L\'evy area, following the notation in Example \ref{ex:meaning of coordinates}.
This decomposition has the following transparent interpretation. The term
\(-\widetilde S^X\) penalizes terminal losses, while
\(-\widetilde S^{Xt}\) penalizes paths that remain low on average over the horizon. The positive L\'evy-area term \(+0.15\,\widetilde A^{tX}\) penalizes losses that occur early in the horizon or persist before a subsequent recovery. Conditional on the terminal return, \(\widetilde A^{tX}\) is larger for paths that spend more time at low levels than for comparable late deteriorations. Finally, \(\widetilde S^{XX}\) is a second-order terminal-magnitude feature and therefore adds sensitivity to large terminal losses. Thus, we interpret this score as a downside-oriented path risk factor.

Here, the selected signature coordinates are $\mathcal{W}:=\{X, tX, Xt, XX\}$. For each stress model listed in \eqref{eq:stress models}, we first calibrate the joint budget $ \widehat\delta_I^{\mathrm{joint}}$ for all $I\in \mathcal{W}$ as in \eqref{eq:joint delta} to cover the stress model. The relaxed Sig-OT robust premium induced by $\widehat\delta_I^{\mathrm{joint}}$ is then $\widehat\kappa_{\mathrm{joint}} = \sum_{I\in\cW}|\ell_I|\widehat\delta_I^{\mathrm{joint}}$ for each corresponding stress model. For each stress model, Table~\ref{tab:linear-risk-factor-expectation} reports baseline values, stress values, the value shifts, the relaxed Sig-OT relaxed premium, and the corresponding tightness ratios.

\begin{table}[ht]
\centering
\scriptsize
\caption{Robust expectation for the sparse affine signature market risk factor.}
\label{tab:linear-risk-factor-expectation}
\begin{tabular}{lrrrrr}
\toprule
Stress model & Base & Stress & Shift & $\widehat\kappa$ & Shift/$\widehat\kappa$ \\
\midrule
Adverse drift & 0.230 & 0.794 & 0.564 & 0.759 & 0.743 \\
Elevated volatility & 0.230 & 1.354 & 1.124 & 1.748 & 0.643 \\
Negative jumps & 0.230 & 0.348 & 0.119 & 0.241 & 0.493 \\
Crisis mixture & 0.230 & 0.479 & 0.249 & 0.315 & 0.792 \\
\bottomrule
\end{tabular}
\par\smallskip
\parbox{0.96\textwidth}{Based on five replicates; \(20{,}000\) paths per law for budget calibration and \(500{,}000\) paths per law for expectation evaluation; \(2{,}000\) time steps; \(k=2\); baseline MAD scaling. The joint coupling uses \(16\) Sinkhorn batches of size \(512\times512\), with regularization \(10^{-2}\). Entries are replicate means.}
\end{table}

Table~\ref{tab:linear-risk-factor-expectation} is also consistent with the downside interpretation of the score. All four stress laws increase the mean score relative to the baseline, with elevated volatility producing the largest shift and effective budget. 

\subsubsection{Worst-case probability for a sparse linear signature tail score}
\label{subsubsec:linear-probability}

We next consider sparse affine signature-induced market tail events. The tail score we consider is
\begin{equation}
  s_{\mathrm{tail}}(X):
  =
  -0.25\,\widetilde S^{Xt}
  -0.20\,\widetilde S^{Xtt}
  +0.20\,\widetilde S^{XXt}
  -0.35\,\widetilde S^{XXXt}.
\end{equation}
As in the preceding risk-factor example, \(-\widetilde S^{Xt}\) penalizes paths that remain low on average over the horizon. The term
\(-\widetilde S^{Xtt}\) is a further time-weighted version of this effect: persistent or early negative moves receive larger weight through the trailing time integrals. The coordinates \(\widetilde S^{XXt}\) and \(-\widetilde S^{XXXt}\) then add magnitude and asymmetry: large moves affect the score, while large negative excursions are amplified by the cubic term. Hence, high values of \(s_{\mathrm{tail}}\) identify paths with persistent low levels and large negative excursions along the horizon. For a high threshold \(b\), the event \(A_b=\{s_{\mathrm{tail}}\ge b\}\) is therefore the upper tail of this downside path-score distribution, and we interpret it as a pathwise market-tail event.

Here, the selected signature coordinates are $\mathcal W=\{Xt,Xtt,XXt,XXXt\}$.
For each stress model in \eqref{eq:stress models}, we first calibrate the joint budget \(\widehat\delta_I^{\mathrm{joint}}\) for all \(I\in\mathcal W\), as in \eqref{eq:joint delta}, so that the stress law is covered by the Sig-OT ambiguity set. The induced effective budget is
$\widehat\kappa_{\mathrm{joint}}
  =
  \sum_{I\in\mathcal W}|\ell_I|\widehat\delta_I^{\mathrm{joint}}$.
For each threshold \(b\), we then apply Theorem~\ref{thm:scalar-reduction}. In writing $D_b=[b-s_{\mathrm{tail}}(X)]_+$ and $X\sim P_0$,
the shifted-threshold parameter \(u_b^\star\) is determined by
  $\E_{P_0}\![D_b\mathbf 1_{\{D_b\le u_b^\star\}}]
  =
  \widehat\kappa_{\mathrm{joint}}$, 
which is calculated by bisection in the empirical baseline sample. If $\widehat\kappa_{\mathrm{joint}}\ge\E[D_b]$, the relaxed robust probability bound is $1$. In the interior case $0<\widehat\kappa_{\mathrm{joint}}<\E[D_b]$, the relaxed robust probability is then evaluated under the baseline Monte Carlo sample as
  $P_0(s_{\mathrm{tail}}(X)\ge b-u_b^\star)$.

For each stress model, Table \ref{tab:linear-tail-probability} reports baseline probabilities, stress probabilities, the amplification ratios Stress/Base, as well as the Robust/Stress ratios for representative thresholds $b\in\{20,30,40\}$, spanning moderate to deeper rare-tail levels. Table~\ref{tab:linear-tail-probability} illustrates a different aspect of the method. For this particular tail score, the adverse-drift stress is not the main source of tail amplification, so we focus on the other three stress models, where the baseline probabilities are extremely small and the stress probabilities can be amplified by several orders of magnitude at deeper thresholds. As illustrated in the table, if the market follows one of these stress models while the baseline model is used without robustification, the tail event would be severely underestimated. With a Sig-OT ambiguity set, relaxed robust probabilities dominate the corresponding stress probabilities in all reported cases. Moreover, the robust probabilities remain close to the stress scale compared to the much larger baseline-to-stress amplification, providing an effective correction for tail underestimation.

\begin{table}[ht]
\centering
\scriptsize
\caption{Robust probabilities for sparse affine signature-induced market tail events.}
\label{tab:linear-tail-probability}
\begin{tabular}{lrrrrrr}
\toprule
Stress model & $b$ & Base prob. & Stress prob. & Robust prob. & Stress/Base & Robust/Stress \\
\midrule
Adverse drift & 20 & $1.715\times 10^{-4}$ & $1.939\times 10^{-3}$ & 0.069 & 11.322 & 35.819 \\
 & 30 & $1.560\times 10^{-5}$ & $2.754\times 10^{-4}$ & 0.044 & 17.903 & 160.034 \\
 & 40 & $2.200\times 10^{-6}$ & $4.640\times 10^{-5}$ & 0.032 & 22.441 & 692.245 \\
\midrule
Elevated volatility & 20 & $1.715\times 10^{-4}$ & 0.057 & 0.297 & 334.541 & 5.195 \\
 & 30 & $1.560\times 10^{-5}$ & 0.031 & 0.196 & $1.997\times 10^{3}$ & 6.418 \\
 & 40 & $2.200\times 10^{-6}$ & 0.018 & 0.146 & $8.438\times 10^{3}$ & 8.250 \\
\midrule
Negative jumps & 20 & $1.715\times 10^{-4}$ & $5.129\times 10^{-3}$ & 0.031 & 29.960 & 5.959 \\
 & 30 & $1.560\times 10^{-5}$ & $1.871\times 10^{-3}$ & 0.019 & 122.060 & 9.973 \\
 & 40 & $2.200\times 10^{-6}$ & $8.197\times 10^{-4}$ & 0.013 & 390.638 & 16.199 \\
\midrule
Crisis mixture & 20 & $1.715\times 10^{-4}$ & 0.016 & 0.053 & 92.308 & 3.366 \\
 & 30 & $1.560\times 10^{-5}$ & $9.640\times 10^{-3}$ & 0.033 & 629.523 & 3.463 \\
 & 40 & $2.200\times 10^{-6}$ & $6.257\times 10^{-3}$ & 0.024 & $2.976\times 10^{3}$ & 3.852 \\
\bottomrule
\end{tabular}
\par\smallskip
\parbox{0.96\textwidth}{Based on five replicates; \(20{,}000\) paths per law for budget calibration and \(3{,}000{,}000\) paths per law for probability evaluation; \(2{,}000\) time steps; \(k=4\); baseline MAD scaling. The joint coupling uses \(16\) Sinkhorn batches of size \(512\times512\), with regularization \(10^{-2}\). Entries are replicate means.}
\end{table}

\subsection{Path-dependent model-risk applications}
\label{subsec:applications}

We move from controlled signature scores to financially motivated model-risk calculations with general path-dependent functionals/events. We  consider path-dependent option pricing with uncertain volatility in section \ref{subsubsec:option-pricing}, and then a reserve-path ruin probability problem under model misspecification in section \ref{subsubsec:reserve-risk}.

\subsubsection{Robust expectation: path-dependent option pricing}
\label{subsubsec:option-pricing}

We first consider robust pricing of path-dependent options as an expectation-side learned-surrogate illustration. The sensitivity of exotic and path-dependent derivatives to model choice is a classical source of model risk: two models can be close to or even calibrated to the same liquid vanilla option prices while assigning materially different values to claims depending on averages, extrema, or barrier crossings. This issue is important in several different areas of mathematical finance \cite{cont2006modeluncertainty}, \cite{hobson1998lookback,brownhobsonrogers2001barrier}, \cite{coxkallblad2017asian}.

We here focus on two simple but genuinely path-dependent claims: an arithmetic Asian call and a lookback call. The baseline is a risk-neutral GBM with $S_0=100$, maturity $T=1$, rate $r=0.02$, dividend yield $q=0$, and baseline volatility $\sigma_0=0.20$. The stress laws are volatility-shifted risk-neutral GBMs with
\begin{equation}
  \sigma\in\{0.10,0.15,0.25,0.30\},
  \qquad \mathrm dS_t=(r-q)S_t\,\mathrm dt+\sigma S_t\, \mathrm dW_t, \, \, t \ge 0.  
\end{equation}
We use the time-augmented log-price representation $(t,X_t)$, where $X_t=\log(S_t/S_0)$. On a uniform monitoring grid \(0=t_0<t_1<\cdots<t_m=T\), the payoff families are
\begin{equation}
  G^{\mathrm{Asian}}_K(S)
  =e^{-rT}\Big(\frac{1}{m}\sum_{i=1}^m S_{t_i}-K\Big)_+, \quad 
  G^{\mathrm{Lookback}}_K(S)
  =e^{-rT}\Big(\max_{0\le i\le m}S_{t_i}-K\Big)_+.
\end{equation}
Thus, the Asian payoff is the right-endpoint Riemann-grid version of the continuously monitored arithmetic average, while the lookback payoff is the grid-monitored running maximum. We report all experiments for \(K\in\{95,100,105\}\), corresponding to in-the-money, at-the-money, and out-of-the-money regimes around \(S_0=100\). We use the time-augmented signature of $(t,X_t)$ up to the truncation level $4$, i.e., $\mathcal{W}:=\{I, |I|\leq 4\}$.

For each payoff and each stress volatility, we recall the calibration-and-fitting workflow. First, in baseline training samples, we fit a preliminary sparse affine signature score by an unweighted lasso. Second, using absolute preliminary coefficients as task weights, we solve the empirical common-coupling OT problem between baseline and stress signature samples (cf. Subsection \ref{subsec:empirical-budget-protocol}), and obtain a coordinate-wise budget vector $\widehat\delta^{\mathrm{joint}}$ certified by one coupling. Third, we refit the payoff on the baseline samples using the effective-budget weighted lasso
  $n^{-1}\sum_{i=1}^n
  (G_K(x_i)-s_{\ell,\beta}(x_i))^2
  + \alpha\sum_{I\in\cW}\widehat\delta_I^{\mathrm{joint}}|\ell_I|$, 
with a manually chosen $\alpha=1$. We then apply a $1$-Lipschitz spline postlink $g$, as described in Section~\ref{sec:learning}, to improve the precision of the adjustment. For the fitted surrogate $\widehat G=g(\widehat s)$, the relaxed robust surrogate upper price is 
  $\E_{P_0}[\widehat G(X)]+\widehat\kappa$ with  
  $\widehat\kappa
  =  \sum_{I\in\cW}|\widehat\ell_I|\,\widehat\delta_I^{\mathrm{joint}}$. 
Table~\ref{tab:path-option-vol-sweep} reports the baseline price, the price of the stress model, the shift, the surrogate-learned robust value premium $\widehat\kappa$, the tightness ratio, and the surrogate fit $R^2$ for each stress model and each strike price tested.

\begin{table}[ht]
\centering
\scriptsize
\caption{Sig-OT robust pricing diagnostics for path-dependent option pricing.}
\label{tab:path-option-vol-sweep}
\resizebox{\textwidth}{!}{%
\begin{tabular}{rrrrrrrrrrrrrrr}
\toprule
\multicolumn{3}{c}{Stress and strike} &
\multicolumn{6}{c}{Asian call} &
\multicolumn{6}{c}{Lookback call} \\
\cmidrule(lr){1-3}\cmidrule(lr){4-9}\cmidrule(lr){10-15}
$\sigma$ & $\Delta\sigma$ & $K$ &
Base & Stress & Shift & $\widehat\kappa$ & $|\mathrm{Shift}|/\widehat\kappa$ & Fit $R^2$ &
Base & Stress & Shift & $\widehat\kappa$ & $|\mathrm{Shift}|/\widehat\kappa$ & Fit $R^2$ \\
\midrule
0.10 & -0.10 & 95 & 7.962 & 6.291 & -1.672 & 5.070 & 0.330 & 0.980 & 22.428 & 13.949 & -8.479 & 9.832 & 0.863 & 0.960 \\
 &  & 100 & 5.054 & 2.798 & -2.256 & 4.575 & 0.493 & 0.967 & 17.527 & 9.048 & -8.479 & 9.832 & 0.863 & 0.960 \\
 &  & 105 & 2.979 & 0.885 & -2.094 & 3.784 & 0.554 & 0.943 & 13.156 & 5.004 & -8.152 & 9.595 & 0.850 & 0.957 \\
\midrule
0.15 & -0.05 & 95 & 7.962 & 7.053 & -0.909 & 2.873 & 0.318 & 0.985 & 22.428 & 18.133 & -4.295 & 5.912 & 0.728 & 0.963 \\
 &  & 100 & 5.054 & 3.926 & -1.129 & 2.623 & 0.432 & 0.975 & 17.527 & 13.232 & -4.295 & 5.912 & 0.728 & 0.963 \\
 &  & 105 & 2.979 & 1.895 & -1.084 & 2.101 & 0.520 & 0.957 & 13.156 & 8.975 & -4.181 & 5.858 & 0.715 & 0.962 \\
\midrule
0.25 & 0.05 & 95 & 7.962 & 8.933 & 0.971 & 2.568 & 0.382 & 0.986 & 22.428 & 26.858 & 4.429 & 5.801 & 0.767 & 0.961 \\
 &  & 100 & 5.054 & 6.172 & 1.118 & 1.990 & 0.570 & 0.975 & 17.527 & 21.957 & 4.429 & 5.801 & 0.767 & 0.961 \\
 &  & 105 & 2.979 & 4.085 & 1.105 & 1.464 & 0.760 & 0.958 & 13.156 & 17.512 & 4.356 & 5.603 & 0.781 & 0.959 \\
\midrule
0.30 & 0.10 & 95 & 7.962 & 9.953 & 1.991 & 5.913 & 0.338 & 0.974 & 22.428 & 31.422 & 8.994 & 11.420 & 0.790 & 0.948 \\
 &  & 100 & 5.054 & 7.307 & 2.252 & 5.251 & 0.432 & 0.955 & 17.527 & 26.521 & 8.994 & 11.420 & 0.790 & 0.948 \\
 &  & 105 & 2.979 & 5.221 & 2.242 & 3.194 & 0.714 & 0.932 & 13.156 & 22.028 & 8.872 & 11.028 & 0.808 & 0.944 \\
\bottomrule
\end{tabular}
}
\par\smallskip
\parbox{0.98\textwidth}{Based on five replicates; \(20{,}000\) baseline and \(20{,}000\) stress paths for budget calibration, \(20{,}000\) baseline paths for fitting with \(16{,}000\) training observations, and \(500{,}000\) paths per law for price evaluation; \(2{,}000\) time steps; \(k=4\); \(\alpha=1\); baseline MAD scaling. The task-weighted joint OT stage uses Sinkhorn regularization \(10^{-2}\), batch size \(512\), and eight subsampled batches per outer iteration.}
\end{table}

Table~\ref{tab:path-option-vol-sweep} shows that the learned signature surrogates give stable pricing diagnostics for these smooth monitored payoffs. The lookback payoffs have consistently tighter ratios, roughly \(0.72\)--\(0.86\), while the Asian ratios range from about \(0.32\) to \(0.76\). Together with the high out-of-sample \(R^2\) values, this suggests that the budget-aware signature surrogate captures the main volatility sensitivity of these smooth path-dependent claims, although the robust premium remains looser for some Asian regimes with the chosen hyperparameters.

\begin{remark}[On severe path-dependent option price model risk]
\label{rem:Cont}

Notice that in the above examples, the option payoffs are rather smooth and stable; thus they can be relatively well approximated by sparse affine signature surrogates. It would be technically difficult to use them to learn more complicated exotic payoffs, e.g., discontinuous barrier-style payoffs whose value may hinge on a narrow crossing event. Indeed, such irregular payoffs might in general be more sensitive to model misspecification. An example was given in Example 4.4 of \cite{cont2006modeluncertainty}, where a jump-diffusion model and a calibrated local-volatility diffusion agree on a family of short-maturity call prices. Nevertheless, they generate very different path scenarios: one has discontinuous jumps, while the other is continuous and uses high short-term local volatility to reproduce the same smile. For a short-maturity knock-out call with strike \(K=105\) and barrier \(B=110\), the reported prices are \(2.73\) under the local-volatility model and \(1.63\) under the jump model, an economically large gap, approximately 40\% of the local-volatility price.

The reason is structural: in models with jumps, the asset can cross the barrier discontinuously, so the knock-out event has a quite different likelihood than under a continuous local-volatility diffusion. Although our current framework does not directly solve such problems due to the limited expressiveness of sparse affine signature scores, it yet yields two potentially interesting directions. First, instead of exactly fitting the payoff functional, is it possible to use sparse affine signature scores to construct risk factors that are sensitive to different types of events/functionals in a statistical sense? Second, except for signature coordinates of the log prices, are there other ways to factorize the path measure discrepancy to match specific type of events/functional sensitivities? These questions are left for future research. 
\end{remark}

\subsubsection{Robust probabilities: reserve-path risk events}
\label{subsubsec:reserve-risk}
We next consider a robust probability case study for insurance reserve paths. We consider the celebrated Cram\'er--Lundberg model (e.g., \cite{asmussenalbrecher2010}), where a compound Poisson process is used to model an insurance risk/reserve process. With initial capital $u\ge0$, the surplus process is defined by 
\begin{equation}
  U_t^u
  :=
  u+(1+\eta)\nu m_1t-\sum_{i=1}^{N_t}Y_i
  =:
  u+R_t,
  \qquad
  R_0=0,
\end{equation}
where $N_t$ is a Poisson process with intensity $\nu$, the claim sizes $Y_i$ are i.i.d. with first two moments $m_1,m_2$, and $\eta>0$ is the safety load. One of the important problems in risk theory is to calculate the tail probability in relation to the process $R_t$. For example, the event
  $\{\inf_{0\le t\le T}U_t^u\le0 \}
  = \{\sup_{0\le t\le T}(-R_t)\ge u\}$
represents that the insurance company goes into bankruptcy before a specified duration $T$. However, even for such a simple event, existing results in the literature do not admit simple methods for the
computation of its probability (e.g., \cite{asmussenalbrecher2010}).  In addition, if historical data are not adequately available to choose an appropriate distribution for claim sizes, as is typically the case in an exotic insurance situation, it is common to use a diffusion approximation
  $R_t^B
  =
  \eta\nu m_1t-\sqrt{\nu m_2}\,B_t$,
which matches the drift and variance of the aggregate claim process and is much easier to sample and analyze. 

However, this approximation creates a natural model-risk issue. The Cram\'er--Lundberg path and the Brownian path have quite different mechanisms for producing a large adverse reserve movement: under the Cram\'er--Lundberg model, a single large claim can cause an immediate downward jump, whereas the Brownian diffusion requires  continuous downward excursions against the positive safety-loading drift. When claim sizes are heavy-tailed, this difference can substantially affect probabilities of first-passage, low-extrema, early-shortfall, and drawdown. A Brownian baseline may therefore be a reasonable tractable surrogate for typical fluctuations, while still hugely underestimating the tail probabilities relevant for capital assessment.

Notably, many relevant reserve-risk events can naturally be written as functional half-space events
  $A_u(F)=\{F(R)\ge u\}$, for some proper $F$, such that Section~\ref{subsec:robust-probability} can be used, once $F$ is represented by a sparse affine signature score. The family of capital-risk scores that we consider is listed in Table \ref{tab:reserve-risk-targets}. 

\begin{table}[ht]
\centering
\small
\caption{Some typical reserve-path tail functionals.}
\label{tab:reserve-risk-targets}
\begin{tabular}{
>{\raggedright\arraybackslash}p{0.20\textwidth}
>{\raggedright\arraybackslash}p{0.43\textwidth}
>{\raggedright\arraybackslash}p{0.29\textwidth}}
\toprule
Target & Score $F(R)$ & Interpretation \\
\midrule
Constant boundary &
$\displaystyle \sup_{0\le t\le T}(-R_t)$ &
standard ruin event. \\
Sinusoidal vulnerability &
$\displaystyle F_{\mathrm{sin}}(R)=
\sup_{0\le t\le T}\left\{-R_t+a\sin^2(\pi t/T)\right\},\ a>0$ &
mid-horizon weighted shortfall. \\
Early shortfall &
$\displaystyle F_{\mathrm{early}}(R)=
\sup_{0\le t\le T}\bigl(1+\rho(1-t/T)\bigr)(-R_t)$ &
early shortfall penalty. \\
Maximum drawdown &
$\displaystyle F_{\mathrm{dd}}(R)=
\sup_{0\le t\le T}\left(\max_{0\le s\le t}R_s-R_t\right)$ &
drop from running maximum. \\
\bottomrule
\end{tabular}
\end{table}

Here, we use the Brownian diffusion law as the baseline and the Cram\'er--Lundberg law as the stress law.\footnote{The compound-Poisson reserve path is sampled on the same uniform grid as the Brownian path. Claims arriving between grid points are assigned to the first subsequent grid point, and we compute the ordinary geometric signature of the piecewise-linear interpolation of the resulting sampled path.} In the numerical study below, $T=100$, $\nu=1$, $\eta=0.1$, and the claim-size distribution is Pareto with tail index $2.2$. In Table \ref{tab:reserve-risk-targets}, all four targets use the common extrema-aware lift 
  $(t,R_t,\underline R_t,\overline R_t)$, where 
  $\underline R_t:=\min_{0\le s\le t}R_s$, 
  $\overline R_t:=\max_{0\le s\le t}R_s$,
so that the comparison is made under the same reusable Sig-OT geometry.

For each target functional $F$, we recall the calibration-and-fitting procedure in Section \ref{sec:learning}. First, in Brownian baseline samples, we fit a preliminary sparse affine signature surrogate for $F$. Second, using the preliminary coefficients as task weights, we calibrate a coordinate-wise Sig-OT budget vector $\widehat\delta$ between Brownian and Cram\'er--Lundberg samples on the lifted representation. Third, we refit the surrogate in Brownian samples using the effective-budget weighted lasso from Section~\ref{sec:learning}. Finally, we apply the monotone $1$-Lipschitz postlink and compute the effective budget
  $\widehat\kappa
  =
  \sum_{I\in\cW}|\widehat\ell_I|\,\widehat\delta_I$.
The robust surrogate probability is then obtained from the shifted-threshold formula in Theorem~\ref{thm:scalar-reduction} and evaluated under the Brownian baseline law.

The main reserve-risk table reports representative thresholds $u\in\{100,120,140,160,180,200\}$ across the constant-boundary, sinusoidal-vulnerability, early-shortfall, and drawdown targets. Alongside the Brownian baseline and Cram\'er--Lundberg Monte Carlo probabilities, we report the relaxed Sig-OT robust surrogate probability, the underestimating ratio Brownian/CL, the diagnostic ratio Sig-OT/CL, the effective budget, the induced threshold shift, and the surrogate fit quality.

\begin{table}[ht]
\centering
\scriptsize
\caption{Robust probabilities for reserve-path risk events.}
\label{tab:reserve-risk-family}
\resizebox{\textwidth}{!}{%
\begin{tabular}{lrrrrrrrrr}
\toprule
Target & $u$ & Brownian & CL-MC & Sig-OT & Brownian/CL & Sig-OT/CL & $\widehat\kappa$ & Shift & Fit $R^2$ \\
\midrule
Constant boundary & 100 & $4.025\times 10^{-4}$ & $4.791\times 10^{-3}$ & 0.089 & 0.084 & 18.568 & 4.151 & 59.131 & 0.996 \\
 & 120 & $3.220\times 10^{-5}$ & $3.028\times 10^{-3}$ & 0.066 & 0.011 & 21.778 & 4.151 & 75.073 & 0.996 \\
 & 140 & $2.160\times 10^{-6}$ & $2.092\times 10^{-3}$ & 0.052 & 0.001 & 24.784 & 4.151 & 91.913 & 0.996 \\
 & 160 & $1.600\times 10^{-7}$ & $1.522\times 10^{-3}$ & 0.042 & $1.048\times 10^{-4}$ & 27.916 & 4.151 & 109.356 & 0.996 \\
 & 180 & $<4.0\times 10^{-8}$ & $1.155\times 10^{-3}$ & 0.036 & $<3.5\times 10^{-5}$ & 31.053 & 4.151 & 127.219 & 0.996 \\
 & 200 & $<4.0\times 10^{-8}$ & $9.032\times 10^{-4}$ & 0.031 & $<4.5\times 10^{-5}$ & 34.288 & 4.151 & 145.393 & 0.996 \\
\midrule
Sinusoidal vulnerability & 100 & $9.099\times 10^{-4}$ & $6.612\times 10^{-3}$ & 0.135 & 0.138 & 20.420 & 5.416 & 52.323 & 0.957 \\
 & 120 & $5.896\times 10^{-5}$ & $3.896\times 10^{-3}$ & 0.097 & 0.015 & 24.782 & 5.416 & 67.645 & 0.957 \\
 & 140 & $2.880\times 10^{-6}$ & $2.563\times 10^{-3}$ & 0.074 & 0.001 & 28.912 & 5.416 & 84.184 & 0.957 \\
 & 160 & $1.600\times 10^{-7}$ & $1.814\times 10^{-3}$ & 0.060 & $8.810\times 10^{-5}$ & 32.912 & 5.416 & 101.486 & 0.957 \\
 & 180 & $<4.0\times 10^{-8}$ & $1.346\times 10^{-3}$ & 0.050 & $<3.0\times 10^{-5}$ & 36.992 & 5.416 & 119.291 & 0.957 \\
 & 200 & $<4.0\times 10^{-8}$ & $1.035\times 10^{-3}$ & 0.043 & $<3.9\times 10^{-5}$ & 41.162 & 5.416 & 137.455 & 0.957 \\
\midrule
Early shortfall & 100 & $3.882\times 10^{-3}$ & 0.012 & 0.192 & 0.314 & 15.535 & 7.809 & 56.543 & 0.983 \\
 & 120 & $5.386\times 10^{-4}$ & $7.713\times 10^{-3}$ & 0.140 & 0.070 & 18.196 & 7.809 & 70.792 & 0.983 \\
 & 140 & $5.332\times 10^{-5}$ & $5.243\times 10^{-3}$ & 0.109 & 0.010 & 20.740 & 7.809 & 86.395 & 0.983 \\
 & 160 & $4.240\times 10^{-6}$ & $3.798\times 10^{-3}$ & 0.088 & 0.001 & 23.162 & 7.809 & 102.902 & 0.983 \\
 & 180 & $2.800\times 10^{-7}$ & $2.874\times 10^{-3}$ & 0.073 & $9.741\times 10^{-5}$ & 25.571 & 7.809 & 120.046 & 0.983 \\
 & 200 & $<4.0\times 10^{-8}$ & $2.256\times 10^{-3}$ & 0.063 & $<1.8\times 10^{-5}$ & 27.886 & 7.809 & 137.640 & 0.983 \\
\midrule
Maximum drawdown & 100 & $9.064\times 10^{-4}$ & $6.864\times 10^{-3}$ & 0.099 & 0.132 & 14.445 & 3.873 & 50.755 & 0.837 \\
 & 120 & $7.072\times 10^{-5}$ & $4.113\times 10^{-3}$ & 0.070 & 0.017 & 17.117 & 3.873 & 66.698 & 0.837 \\
 & 140 & $4.000\times 10^{-6}$ & $2.711\times 10^{-3}$ & 0.054 & 0.001 & 19.857 & 3.873 & 83.577 & 0.837 \\
 & 160 & $3.200\times 10^{-7}$ & $1.915\times 10^{-3}$ & 0.043 & $1.662\times 10^{-4}$ & 22.595 & 3.873 & 101.099 & 0.837 \\
 & 180 & $<4.0\times 10^{-8}$ & $1.423\times 10^{-3}$ & 0.036 & $<2.9\times 10^{-5}$ & 25.327 & 3.873 & 119.038 & 0.837 \\
 & 200 & $<4.0\times 10^{-8}$ & $1.089\times 10^{-3}$ & 0.031 & $<3.7\times 10^{-5}$ & 28.281 & 3.873 & 137.298 & 0.837 \\
\bottomrule
\end{tabular}
}
\par\smallskip
\parbox{0.98\textwidth}{Based on five replicates; \(20{,}000\) paths per law for budget calibration, \(20{,}000\) Brownian paths for fitting with \(16{,}000\) training observations, and \(5{,}000{,}000\) paths per law for probability evaluation; \(2{,}000\) time steps; \(k=4\); \(\alpha=3\); baseline MAD scaling. The task-weighted joint OT stage uses Sinkhorn regularization \(10^{-2}\), batch size \(512\), and eight subsampled batches per outer iteration. Across the five replicates, \(<4.0\times10^{-8}\) is the one-hit empirical resolution, meaning no hit among \(25{,}000{,}000\) pooled paths.}
\end{table}

As illustrated in Table~\ref{tab:reserve-risk-family}, the Brownian diffusion approximation misses the jump-driven reserve tail risk at high capital levels. As \(u\) increases, the Brownian/CL ratio rapidly collapses toward zero, and in several rows the Brownian Monte Carlo estimate records no tail events among the \(2.5\times 10^7\) pooled paths in the five replicates. Thus, the Sig-OT ambiguity set provides a uniform correction across the four reserve-risk targets, and the induced robust probabilities remain much closer to the true Cram\'er--Lundberg probabilities compared to the significant underestimation induced by the Brownian-based approximation.

\paragraph{Benchmark against the $J_1$-OT robust probability.}
For the ordinary constant-boundary ruin event, we separately report the Skorokhod $J_1$ benchmark from Example 1 in \cite{blanchetmurthy2019}. This benchmark is a useful reference point because the transport cost therein is strongly related to the same crossing functional that defines ordinary ruin, and thus should be expected to be sharp within the class of OT-DRO stress tests for the same task. In their construction, the ambiguity set is a Wasserstein ball in the path space whose transport cost is induced by the Skorokhod $J_1$ metric; In Appendix~\ref{app:j1-ruin-benchmark}, we recall that approach and explain why it should be expected to be sharp.

In summary, the sharpness essentially comes from the distance-to-event identity below. For the constant crossing set
  $A_u=\{z:\sup_{0\le t\le T}z_t\ge u\}$,
one has
  $c_{J_1}(z,A_u)
  =
  (u-\sup_{0\le t\le T}z_t)_+^p$.
Thus, the $J_1$ enlargement of the event depends only on the same functional \(\sup_t z_t\) that appears in the event itself. Equivalently, the ambiguity radius is used only to close the vertical gap between the running maximum and the ruin level \(u\). It is not spent on any other feature of the path. Thus, as an OT stress test for the ordinary constant-boundary ruin probability, this benchmark has essentially no additional slack.

Table~\ref{tab:constant-ruin-tailored-comparison} compares this $J_1$ benchmark with two Sig-OT representations on the original BM19 reserve grid $u\in\{50,100,150,200,250\}$: the plain time-augmented reserve path $(t,R_t)$ and the extrema-aware lift $(t,R_t,\underline R_t,\overline R_t)$. Compared with the $J_1$-induced robust values, the robust values induced by Sig-OT are looser, especially for deeper tail levels, but still, in general, much better than the underestimation by the Brownian model misspecification. However, because of the specific sensitivity of $J_1$-distance to the supremum, one should not expect it to be as sensitive to other general path-dependent events and provide as informative robust bounds for them. Nevertheless, our approach creates looser but informative robust values for different path-dependent tail risks under the same reusable ambiguity set. In addition, it is expected that adding the maximum channel is materially improving the tightness.

\begin{table}[ht]
\centering
\scriptsize
\caption{Constant-boundary ruin diagnostic: $J_1$ versus plain and extrema-aware Sig-OT.}
\label{tab:constant-ruin-tailored-comparison}
\resizebox{\textwidth}{!}{%
\begin{tabular}{rrrrrrrrrr}
\toprule
$u$ & Brownian & CL-MC & $J_1$ & Sig-OT plain & Sig-OT ext. & Brownian/CL & $J_1$/CL & Sig-OT plain/$J_1$ & Sig-OT ext./$J_1$ \\
\midrule
50 & 0.052 & 0.031 & 0.229 & 0.360 & 0.291 & 1.688 & 7.475 & 1.571 & 1.268 \\
100 & $4.260\times 10^{-4}$ & $4.791\times 10^{-3}$ & 0.049 & 0.120 & 0.089 & 0.089 & 10.187 & 2.466 & 1.823 \\
150 & $4.360\times 10^{-7}$ & $1.771\times 10^{-3}$ & 0.018 & 0.067 & 0.047 & $2.462\times 10^{-4}$ & 10.392 & 3.640 & 2.540 \\
200 & $5.050\times 10^{-11}$ & $9.032\times 10^{-4}$ & 0.010 & 0.046 & 0.031 & $5.592\times 10^{-8}$ & 11.294 & 4.476 & 3.035 \\
250 & $6.750\times 10^{-16}$ & $5.433\times 10^{-4}$ & $6.500\times 10^{-3}$ & 0.034 & 0.023 & $1.243\times 10^{-12}$ & 11.966 & 5.297 & 3.533 \\
\bottomrule
\end{tabular}
}
\par\smallskip
\parbox{0.98\textwidth}{For each Sig-OT representation: based on five replicates; \(20{,}000\) paths per law for budget calibration, \(20{,}000\) Brownian paths for fitting with \(16{,}000\) training observations, and \(5{,}000{,}000\) paths per law for Monte Carlo evaluation; \(2{,}000\) time steps; \(k=4\); \(\alpha=3\); baseline MAD scaling; Sinkhorn regularization \(10^{-2}\), batch size \(512\), and eight subsampled batches per outer iteration. The Brownian and \(J_1\) columns are reproduced from Table 1 in Example 1 of \cite{blanchetmurthy2019}; all remaining columns are computed in the present study.}
\end{table}

\section{Conclusion and Further Work}
\label{sec:conclusion}

In this paper, we developed a signature-induced optimal-transport framework for path-dependent model risk. The main modeling choice is a coordinate-wise transport geometry on selected time-augmented signature coordinates, with all coordinate budgets certified by one common path coupling. This gives a reusable ambiguity geometry for path laws for general path-dependent payoffs/events.

On the theoretical side, we studied robust expectations and worst-case closed-event probabilities in this framework. We obtained tractable closed-form upper bounds on both quantities with linear signatures via an ambient space relaxation. Both robust values are governed by the same effective budget
$\kappa(\ell,\delta)=\sum_{I\in\cW}|\ell_I|\delta_I$,
which also motivates the budget-weighted lasso used to fit sparse signature surrogates for general path functionals.

We then presented two layers of numerical evidence for controlled sparse affine signature payoffs and tail events, as well as general path-dependent payoffs and events that arise from finance and insurance. In general, the robust values induced by the Sig-OT approach are informative but may be looser than those from functionally tailored ambiguity sets for specific tasks. However, the advantage is that our Sig-OT approach is a {\it reusable framework} across different types of payoffs and events.

As mentioned in Remark \ref{rem:Cont}, sparse affine signature scores are not expected to provide direct approximations of general irregular or implicit path functionals from finite samples, which remains an interesting challenge of our current approach. A promising direction is to use such scores as {\it interpretable pathwise risk factors} to be combined with domain-specific robust analysis, rather than solely as direct approximations of a target functional. In addition, it would also be interesting to extend the idea of factorizing path-space model uncertainty through signature coordinates beyond the present optimal-transport formulation and to connect it with machine-learning architectures for implicit or data-driven path functionals. Finally, the computational cost of high-dimensional empirical OT motivates stronger model-based results to select the budget vector \(\delta\), complementing the calibration of the stress-test and the preliminary limits in the Appendix~\ref{app:analytic-bounds}.

\paragraph{Code availability.}
The code used to produce the numerical results in this paper is available from the authors upon reasonable request.

\appendix

\section{Proof of Theorem~\ref{thm:finite-budget-otdro}}
\label{app:proofs}

The proof of Theorem~\ref{thm:finite-budget-otdro} uses the measurable-selection idea underlying Proposition~4 of \cite{blanchetmurthy2019}, together with a finite-dimensional supporting-hyperplane argument for the vector of transport budgets.

Write  $\Gamma(P_0):=\{\pi\in\cP(E\times E):\pi_2=P_0\}$, where $\pi(\mathrm dz,\mathrm dx)$ couples the adversarial variable $z$ with the baseline variable $x$. For $\pi\in\Gamma(P_0)$, set 
  $F(\pi):=\int f(z)\,\pi(\mathrm dz,\mathrm dx)$,
  $C_j(\pi):=\int c_j(z,x)\,\pi(\mathrm dz,\mathrm dx)$. Note that as in \cite{blanchetmurthy2019}, extended expectations in the primal problem are understood with the convention $+\infty-\infty:=+\infty$. Let us define 
\begin{equation}
  \Gamma_{\mathrm{fin}}(P_0)
  :=
  \Big\{
    \pi\in\Gamma(P_0):
    \int |f(z)|\,d\pi<\infty,\ 
    C_j(\pi)<\infty,\ j=1,\ldots,m
  \Big\}.
\end{equation}
For $\lambda\in\R_+^m$, recall the aggregate cost $c_\lambda$ defined before Theorem~\ref{thm:finite-budget-otdro}, and set 
  $\varphi_\lambda(x)
  :=
  \sup_{z\in E}\bigl(f(z)-c_\lambda(z,x)\bigr)$.

\begin{lemma}
\label{lem:fixed-marginal-pointwise}
For every $\lambda\in\R_+^m$, the function $\varphi_\lambda$ is universally measurable, its negative part is $P_0$-integrable, and
  $\sup_{\pi\in\Gamma_{\mathrm{fin}}(P_0)}
  \int\bigl(f(z)-c_\lambda(z,x)\bigr)\,\pi(\mathrm dz,\mathrm dx)
  =
  \int\varphi_\lambda(x)\,P_0(\mathrm dx)$.
\end{lemma}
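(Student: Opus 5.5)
The plan follows the interchange-of-sup-and-integral argument of Proposition~4 in \cite{blanchetmurthy2019}. It has three steps: measurability and a lower bound for $\varphi_\lambda$, the easy inequality, and the reverse inequality via measurable selection.

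\textbf{Step 1 (measurability and lower bound).} Write $\varphi_\lambda(x)=\sup_{z}g(z,x)$ with $g(z,x):=f(z)-c_\lambda(z,x)$. On the set $\mathsf C$ the function $g$ equals $f-\sum_j\lambda_jc_j$, and $+\infty\cdot$ conventions send it to $-\infty$ off $\mathsf C$. Since $f$ is upper semicontinuous and each $c_j$ is lower semicontinuous, $\mathsf C$ is Borel and $g$ is Borel on $E\times E$. Hence the superlevel set $\{x:\varphi_\lambda(x)>a\}$ is the projection onto the second factor of the Borel set $\{(z,x):g(z,x)>a\}$. It is therefore analytic, so $\varphi_\lambda$ is universally measurable. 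Taking $z=x$ and using $c_j(x,x)=0$ (so $(x,x)\in\mathsf C$) gives $\varphi_\lambda(x)\ge f(x)$. Since $f\in L^1(P_0)$, the negative part of $\varphi_\lambda$ is $P_0$-integrable, and $\int\varphi_\lambda\,dP_0$ is well defined in $(-\infty,\infty]$.

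\textbf{Step 2 (the easy inequality ``$\le$'').} Take any $\pi\in\Gamma_{\mathrm{fin}}(P_0)$. Then $g(z,x)\le\varphi_\lambda(x)$ pointwise. The left-hand integral is well defined because $f$ and the $c_j$ are $\pi$-integrable; if $c_\lambda=\infty$ on a set of positive $\pi$-mass, then some $C_j(\pi)=\infty$, which is excluded. Integrating the pointwise bound and using that the second marginal of $\pi$ is $P_0$ gives the claim.

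\textbf{Step 3 (the reverse inequality ``$\ge$'') — the main obstacle.} Here one must produce nearly optimal couplings that stay in $\Gamma_{\mathrm{fin}}$. Fix $\varepsilon>0$ and $M>0$, and set $\varphi^M:=\min(\varphi_\lambda,M)$. The first move is the Jankov--von Neumann uniformization theorem applied to the analytic set
\[
\{(x,z):\ g(z,x)>\varphi^M(x)-\varepsilon\},
\]
whose projection onto the $x$-factor is all of $E$. This yields a universally measurable selector $x\mapsto z_\varepsilon(x)$ with $g(z_\varepsilon(x),x)\ge\varphi^M(x)-\varepsilon$. Modifying $z_\varepsilon$ on a $P_0$-null set gives a Borel map, so the coupling $\pi_\varepsilon:=(z_\varepsilon,\mathrm{id})_\#P_0$ has second marginal $P_0$.

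The difficulty is integrability: $f(z_\varepsilon(x))$ and $c_j(z_\varepsilon(x),x)$ need not be integrable. The fix is to truncate on a set of large baseline mass. Let
\[
B_K:=\Big\{x:\ |f(z_\varepsilon(x))|+\sum_j c_j(z_\varepsilon(x),x)\le K\Big\},
\]
and use $z_\varepsilon(x)$ on $B_K$ and $z=x$ off $B_K$. The resulting coupling lies in $\Gamma_{\mathrm{fin}}(P_0)$. Its value is at least
\[
\int_{B_K}(\varphi^M-\varepsilon)\,dP_0+\int_{B_K^c}f\,dP_0 .
\]
On $\mathsf C$ the terms $f(z_\varepsilon(x))$ and $c_j(z_\varepsilon(x),x)$ are finite, so $B_K\uparrow E$ up to a null set as $K\to\infty$. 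Monotone and dominated convergence, using $\varphi^M\ge f$ and $f\in L^1(P_0)$, then let $K\to\infty$, $\varepsilon\to0$, and finally $M\to\infty$. This recovers $\int\varphi_\lambda\,dP_0$, including the case where it equals $+\infty$.
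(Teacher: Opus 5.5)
Your proof is correct in substance and takes the same route as the paper. Both use analytic projections for measurability, the diagonal $z=x$ for $\varphi_\lambda\ge f$, the pointwise upper bound, and a universally measurable $\varepsilon$-optimal selection with a truncation that keeps the coupling in $\Gamma_{\mathrm{fin}}(P_0)$, closed by monotone convergence. The only difference is the truncation: the paper restricts the relation to $D_n=\{\overline c\le n,\ |f(z)|\le n\}$ before selecting, whereas you cap the value at $M$, select, and fall back to $z=x$ off $B_K$. Your version works equally well.

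One justification needs repair. The set $\{(x,z): g(z,x)>\varphi^M(x)-\varepsilon\}$ is not known to be analytic. Since $\varphi^M$ is only upper semianalytic, $\{\varphi^M<c\}$ is merely coanalytic, so Jankov--von Neumann cannot be applied to this set directly. Instead, apply Jankov--von Neumann to the Borel sets $\{(x,z): g(z,x)>(k-1)\varepsilon\}$, $k\in\mathbb Z$. Their $x$-projections $\{\varphi_\lambda>(k-1)\varepsilon\}$ contain $L_k:=\{(k-1)\varepsilon<\varphi^M\le k\varepsilon\}$. Define $z_\varepsilon$ piecewise on $(L_k)_{k\in\mathbb Z}$; this is a universally measurable partition of $E$ because $\min(f,M)\le\varphi^M\le M$ is real-valued. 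On each $L_k$ you then get $g(z_\varepsilon(x),x)>(k-1)\varepsilon\ge\varphi^M(x)-\varepsilon$. This is the standard $\varepsilon$-selection theorem for suprema of upper semianalytic functions (Bertsekas--Shreve, Prop.~7.50). It is the same kind of selection the paper borrows from Proposition~4 of \cite{blanchetmurthy2019}.

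A minor slip: you have $\varphi^M\ge\min(f,M)$, not $\varphi^M\ge f$. This still gives the integrable bounds your dominated and monotone convergence steps need.
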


\begin{proof}
The common finite-cost relation $\mathsf C$ defined in section \ref{subsec:otdro} is Borel. For every $a\in\R$, the set
  $\{(z,x)\in\mathsf C:
  f(z)-\sum_{j=1}^m\lambda_jc_j(z,x)>a\}$  is Borel, so its projection onto the second coordinate is analytic. Hence $\varphi_\lambda$ is upper semianalytic and therefore universally measurable. Moreover, $c_\lambda(x,x)=0$ implies 
  $\varphi_\lambda(x)\ge f(x)$,
and consequently $\varphi_\lambda^-\le f^-\in L^1(P_0)$. The upper bound follows pointwise. 

For the reverse inequality, define 
  $\overline c(z,x):=\sum_{j=1}^m c_j(z,x)$,
  $h_\lambda(z,x):=
  f(z)-\sum_{j=1}^m\lambda_jc_j(z,x)$ on $\mathsf C$. For $n\ge1$, let $D_n
  :=
  \{
    (z,x)\in\mathsf C:
    \overline c(z,x)\le n,\ |f(z)|\le n
  \}$, and define
  $\varphi_{\lambda,n}(x)
  :=
  \max\{
    f(x),
    \sup_{z:(z,x)\in D_n}h_\lambda(z,x)
  \}$, with the convention $\sup\varnothing=-\infty$. The sets $D_n$, $n \ge 1$ increase to $\mathsf C$, and therefore 
  $\varphi_{\lambda,n}(x)\uparrow\varphi_\lambda(x)$.
The graph obtained by adjoining the diagonal to $D_n$ is a nonempty Borel relation, and $\varphi_{\lambda,n}(x)\le\max\{f(x),n\}$ is finite. A standard universally measurable $n^{-1}$-selection (cf. the selection argument in Proposition~4 of \cite{blanchetmurthy2019}) gives a map $T_n:E\to E$ such that, outside a $P_0$-null set, either $T_n(x)=x$ or $(T_n(x),x)\in D_n$, and
  $h_\lambda(T_n(x),x)
  \ge
  \varphi_{\lambda,n}(x)-n^{-1}$.

Let $\pi_n=(T_n,\operatorname{id})_\#P_0$. On the set $T_n(x)=x$, all costs vanish; elsewhere, $\overline c(T_n(x),x)\le n$ and $|f(T_n(x))|\le n$. In particular,
  $c_j(T_n(x),x)\le n$, 
  $|f(T_n(x))|\le |f(x)|+n$,
so $\pi_n\in\Gamma_{\mathrm{fin}}(P_0)$. Since $\varphi_{\lambda,n}\ge f$ and $\varphi_{\lambda,n}\uparrow\varphi_\lambda$, monotone convergence applied to $\varphi_{\lambda,n}-f$ yields the reverse inequality and concludes the proof of the lemma: 
\begin{equation*}
\begin{aligned}
  \sup_{\pi\in\Gamma_{\mathrm{fin}}(P_0)}
  \int\bigl(f(z)-c_\lambda(z,x)\bigr)\,\mathrm d\pi
  &\ge
  \liminf_{n\to\infty}
  \int h_\lambda(T_n(x),x)\,P_0(\mathrm dx) 
  \ge
  \lim_{n\to\infty}
  \left(
    \int\varphi_{\lambda,n}\,\mathrm dP_0-\frac1n
  \right)
  =
  \int\varphi_\lambda\,\mathrm dP_0.
\end{aligned}
\end{equation*}
\end{proof}

\begin{proof}[Proof of Theorem~\ref{thm:finite-budget-otdro}]
The path-law formulation is equivalent to the coupling formulation
\begin{equation}
  V(\delta)
  :=
  \sup\left\{
    F(\pi):
    \pi\in\Gamma(P_0),\
    C_j(\pi)\le\delta_j,\ j=1,\ldots,m
  \right\}.
\end{equation}
For every $\lambda\in\R_+^m$, the weak duality gives
  $V(\delta)
  \le
  \lambda\cdot\delta
  +
  \int\varphi_\lambda\,\mathrm dP_0$.
If $V(\delta)=+\infty$, the weak duality implies that every dual value is $+\infty$, and the result follows. We therefore assume that $V(\delta)<\infty$.

For $b\in(0,\infty)^m$, define the budget-value function
  $v(b)
  :=
  \sup\{
    F(\pi):
    \pi\in\Gamma(P_0),\
    C_j(\pi)\le b_j,\ j=1,\ldots,m
  \}$. Let $\pi^0=(\operatorname{id},\operatorname{id})_\#P_0$ be the diagonal coupling and let
  $f_0:=F(\pi^0)=\int f\,\mathrm dP_0$.
The function $v$ is concave, by mixing feasible couplings, and coordinate-wise nondecreasing. It is also finite throughout $(0,\infty)^m$. Indeed, for fixed $b\in(0,\infty)^m$, set
  $\alpha
  :=
  \min\{
    1,\min_{1\le j\le m}{\delta_j}{b_j}^{-1}
  \}>0$.
If $\pi$ is feasible for $v(b)$ and its objective is not $-\infty$, then $\alpha\pi+(1-\alpha)\pi^0$ is feasible for $V(\delta)$. Consequently,
  $\alpha F(\pi)+(1-\alpha)f_0
  \le
  V(\delta)$,
which gives a finite upper bound for $v(b)$; the diagonal coupling gives the finite lower bound $v(b)\ge f_0$. 
Thus, $v$ is a finite concave function on the open convex set $(0,\infty)^m$. By the supporting-hyperplane theorem for finite concave functions (cf. Theorem 23.4 of \cite{rockafellar1970convexanalysis}), it admits a supergradient $\lambda^\star\in\R^m$ at $\delta$:
\begin{equation}
  v(b)
  \le
  v(\delta)+\lambda^\star\cdot(b-\delta),
  \qquad b\in(0,\infty)^m.
\end{equation}
Since $v$ is coordinate-wise nondecreasing, necessarily $\lambda^\star\in\R_+^m$: applying the supergradient inequality at $\delta+t e_j$, for $t>0$, gives $0\le v(\delta+t e_j)-v(\delta)\le t\lambda_j^\star$.

Now fix $\pi\in\Gamma_{\mathrm{fin}}(P_0)$ and let
  $a=(C_1(\pi),\ldots,C_m(\pi))\in\R_+^m$.
For every $\varepsilon>0$, the coupling $\pi$ is feasible for $v(a+\varepsilon\mathbf 1)$. The supergradient inequality therefore gives
  $F(\pi)
  \le
  v(\delta)
  +
  \lambda^\star\cdot
  (a+\varepsilon\mathbf 1-\delta)$.
Letting $\varepsilon\downarrow0$ and taking the supremum over $\Gamma_{\mathrm{fin}}(P_0)$ yields
\begin{equation}
\begin{aligned}
  \lambda^\star\cdot\delta
  +
  \int\varphi_{\lambda^\star}\,\mathrm dP_0
  &=
  \lambda^\star\cdot\delta
  +
  \sup_{\pi\in\Gamma_{\mathrm{fin}}(P_0)}
  \Big(
    F(\pi)-\sum_{j=1}^m\lambda_j^\star C_j(\pi)
  \Big) \le
  v(\delta),
\end{aligned}
\end{equation}
where the equality follows from Lemma~\ref{lem:fixed-marginal-pointwise}. Combined with weak duality, this proves strong duality and shows that $\lambda^\star$ attains the dual infimum.
\end{proof}

\section{A Step-2 Realizability Gap}
\label{app:strictness}

This appendix gives a minimal step-2 example showing that the ambient feature space relaxation in Definition~\ref{def:feature-law-sets} can be strict due to the loss of signature realizability in the relaxed Euclidean feature space. In other words, the inequalities in Proposition \ref{prop:affine-expectation} and Theorem \ref{thm:scalar-reduction} can be strict. Let $\cX$ be the space of one-dimensional bounded-variation paths with $x_0=0$, with the usual time augmentation understood. Select the two state-channel words  $\cW=\{X,XX\}$,
  $u=S^X(\widehat x)$,
  $v=S^{XX}(\widehat x)$,
and write $\Phi(x)=(u,v)\in\R^2$. Recall that the one-dimensional shuffle identity gives $v=u^2/2$ for every realizable selected feature vector. Conversely, every value of \(u\) is realized by the straight-line path with terminal increment \(u\). Let $P_0$ be the Dirac law at the zero path, so $\mu_0=\delta_{(0,0)}$. Since the baseline feature law is a Dirac mass, the common-coupling constraint is trivial in this example. Here, for a coordinate-wise budget vector $(\delta_1,\delta_2)$, the relaxed feature-space ambiguity set $\cV_{(\delta_1,\delta_2)}(\mu_0)$ consists of laws $\nu$ on $\R^2$ satisfying 
  $\int |u|\,\nu(\mathrm du\,\mathrm dv)\le\delta_1$,
  $\int |v|\,\nu(\mathrm du\, \mathrm dv)\le\delta_2$. 
Beyond that, the exact path-level feature laws are precisely the feasible laws in this set that are supported on the parabola $v=u^2/2$.

\begin{proposition}[Strictness of the ambient relaxation]
\label{prop:strictness-step-two}
Assume $ 0<\delta_2<\frac12\delta_1^2$. In the above setup, the ambient feature-space relaxation is strict for the affine payoff \(u\), and it is also strict for the halfspace event \(H_b=\{(u,v):u\ge b\}\) whenever \(\sqrt{2\delta_2}<b\le\delta_1\). More precisely, we have 
\begin{equation}
  \sup_{\nu\in\cR_\Phi\cap\cV_{(\delta_1,\delta_2)}(\mu_0)}
  \int u\,d\nu
  =
  \sqrt{2\delta_2}
  <
  \delta_1
  =
  \sup_{\nu\in\cV_{(\delta_1,\delta_2)}(\mu_0)}
  \int u\,d\nu,
\end{equation}
and
\begin{equation}
  \sup_{\nu\in\cR_\Phi\cap\cV_{(\delta_1,\delta_2)}(\mu_0)}
  \nu(H_b)
  =
  \frac{2\delta_2}{b^2}
  <
  1
  =
  \sup_{\nu\in\cV_{(\delta_1,\delta_2)}(\mu_0)}
  \nu(H_b).
\end{equation}
\end{proposition}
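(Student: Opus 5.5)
Since $\mu_0=\delta_{(0,0)}$, every coupling is the product coupling, so membership in $\cV_{(\delta_1,\delta_2)}(\mu_0)$ reduces to the two moment constraints $\int|u|\,d\nu\le\delta_1$ and $\int|v|\,d\nu\le\delta_2$. Realizable laws are exactly those supported on the parabola $v=u^2/2$, because of the shuffle identity $S^{XX}=\frac12(S^X)^2$ together with straight-line realizations of every $u$. The plan is to handle the four suprema separately, using Proposition~\ref{prop:affine-expectation} and Theorem~\ref{thm:scalar-reduction} only as motivation. Each reduces to an elementary moment problem.

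\textbf{Expectation.} For the relaxed problem, the bound $\int u\,d\nu\le\int|u|\,d\nu\le\delta_1$ is attained by $\nu=\delta_{(\delta_1,0)}$, which is admissible since $\int|v|\,d\nu=0$. This matches $\kappa=\delta_1$ in Proposition~\ref{prop:affine-expectation}. For the realizable problem, we have $\int|v|\,d\nu=\frac12\int u^2\,d\nu\le\delta_2$. Jensen's inequality (or Cauchy--Schwarz) then gives $\int u\,d\nu\le(\int u^2\,d\nu)^{1/2}\le\sqrt{2\delta_2}$, and this is attained by $\nu=\delta_{(\sqrt{2\delta_2},\,\delta_2)}$. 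That law also satisfies $\int|u|\,d\nu=\sqrt{2\delta_2}<\delta_1$, where the hypothesis $\delta_2<\frac12\delta_1^2$ is used. The same hypothesis gives the strict inequality $\sqrt{2\delta_2}<\delta_1$.

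\textbf{Probability.} For the relaxed problem, $\nu=\delta_{(b,0)}$ is admissible when $b\le\delta_1$ and gives $\nu(H_b)=1$. For the realizable problem, Markov's inequality applied to $u^2$ gives
\begin{equation*}
\nu(u\ge b)\le\nu(u^2\ge b^2)\le\frac{\int u^2\,d\nu}{b^2}\le\frac{2\delta_2}{b^2},
\end{equation*}
using $b>0$. The bound is attained by the two-point law $\nu=p\,\delta_{(b,b^2/2)}+(1-p)\,\delta_{(0,0)}$ with $p=2\delta_2/b^2$, which lies in $(0,1)$ because $b>\sqrt{2\delta_2}$. This law gives $\int|v|\,d\nu=pb^2/2=\delta_2$ and $\int|u|\,d\nu=pb=2\delta_2/b$. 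The final constraint to verify is $2\delta_2/b\le\delta_1$. This follows from $b>\sqrt{2\delta_2}$, since then $2\delta_2/b<\sqrt{2\delta_2}<\delta_1$. Finally, $2\delta_2/b^2<1$ is equivalent to $b>\sqrt{2\delta_2}$.

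\textbf{Main obstacle.} The only delicate point is confirming that each extremal law satisfies \emph{both} budget constraints at once, and in particular that the realizable two-point law does not violate the $u$-budget. This is exactly where the window $\sqrt{2\delta_2}<b\le\delta_1$ enters: the lower end is used for realizable feasibility, and the upper end is used for relaxed attainment. We also need the equality $\{\Phi_\#Q:Q\in\cU_\delta(P_0)\}=\cR_\Phi\cap\cV_\delta(\mu_0)$ from \eqref{eq:feature-law-identity}. In this example it holds trivially, because $P_0$ is a Dirac mass and the coupling is forced to be the product coupling.
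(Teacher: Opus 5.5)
Your proof is correct and follows the paper's argument essentially step for step. For the realizable suprema you both use Cauchy--Schwarz and Markov on the parabola $v=u^2/2$, with attainment by the straight-line path and by the mixture of the zero path with weight $2\delta_2/b^2$; for the relaxed suprema you both use the non-realizable Diracs $\delta_{(\delta_1,0)}$ and $\delta_{(b,0)}$. The differences are cosmetic: you check the $u$-budget of the mixture via $2\delta_2/b<\sqrt{2\delta_2}<\delta_1$ instead of $2\delta_2/b<b\le\delta_1$, and you explicitly verify the $u$-budget of $\delta_{(\sqrt{2\delta_2},\delta_2)}$, which the paper leaves implicit.
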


\begin{proof}
Let $\nu\in\cR_\Phi\cap\cV_{(\delta_1,\delta_2)}(\mu_0)$. Since $\nu$ is supported on $v=u^2/2$,
  $\int u^2/2\,\mathrm d\nu
  =
  \int |v|\,\mathrm d\nu
  \le
  \delta_2$.
Hence
  $\int u\, \mathrm d\nu
  \le
  (\int u^2\,\mathrm d\nu)^{1/2}
  \le
  \sqrt{2\delta_2}$. This upper bound is obviously achieved by the straight-line path whose terminal increment is \(\sqrt{2\delta_2}\). In the relaxed feature space, the upper bound
  $\int u\,\mathrm d\nu\le\int |u|\,\mathrm d\nu\le\delta_1$
is attained by the non-realizable feature law \(\nu=\delta_{(\delta_1,0)}\). 

For the half-space event \(H_b=\{(u,v):u\ge b\}\), the exact feasible laws satisfy 
  $\nu(H_b)
  \le
  {\int u^2\,d\nu}b^{-2}
  \le
  {2\delta_2}{b^{-2}}$.
This bound is attained by mixing the zero path with the straight-line path whose terminal increment is \(b\), placing mass \(2\delta_2/b^2\) on the latter. The second-coordinate budget is saturated, and the first-coordinate budget is feasible because
${2\delta_2}/{b}<b\le\delta_1$. In the relaxed feature space, \(\nu=\delta_{(b,0)}\) belongs to \(\cV_{(\delta_1,\delta_2)}(\mu_0)\) and assigns probability one to \(H_b\), since \(b\le\delta_1\). This proves the stated strictness.
\end{proof}

The example isolates the realizability gap. The ambient relaxation is precisely because it forgets the algebraic relations among selected signature coordinates. Nevertheless, one notices that the relaxation gap in the above example can be eliminated if one chooses it precisely $\delta_2=\delta_1^2/2$. However, such eliminations do not automatically carry over to more general cases.

\section{Analytic Coordinate-wise Sig-OT Budget Bounds}
\label{app:analytic-bounds}

This appendix records two preliminary model-based upper bounds for admissible coordinate-wise Sig-OT budget vectors. They are not sharp, but provide further evidence for the coverage of the ambiguity sets via some simple couplings. Throughout this appendix, once a finite selected word set $\cW$ is fixed, the normalizers $r_I>0$, $I\in\cW$, are deterministic and fixed.

We first isolate the simple principle used throughout the appendix. Let $P_0,P_1\in\cP(\cX)$ and let $\pi\in\Pi(P_1,P_0)$ be a coupling of stress and baseline paths. Whenever the selected coordinates have finite first moments under this coupling, define
  $\delta_I^\pi
  :=
  \int |\phi_I(x)-\phi_I(y)|\,\pi(\mathrm dx,\mathrm dy)$,
$I\in\cW $.

\begin{lemma}[Budget certification by a common coupling]
\label{lem:common-coupling-certification}
With $\delta^\pi=(\delta_I^\pi)_{I\in\cW}$ defined above, one has
  $P_1\in\cU_{\delta^\pi}(P_0)$.
Moreover, for every affine direction $\ell=(\ell_I)_{I\in\cW}$,
  $\kappa(\ell,\delta^\pi)
  =
  \int
  \sum_{I\in\cW}|\ell_I|\,|\phi_I(x)-\phi_I(y)|
  \,\pi(\mathrm dx,\mathrm dy)$.
\end{lemma}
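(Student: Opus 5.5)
The plan is to read off both claims directly from the definitions: Definition~\ref{def:sigot-ambiguity} for the ambiguity set and \eqref{eq: kappaelldel} for the effective budget. No duality or relaxation is needed. The proof has two short steps: first verify that $\pi$ itself is an admissible certificate, then compute $\kappa$ by exchanging a finite sum with an integral.

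\textbf{Membership.} Take the given coupling $\pi\in\Pi(P_1,P_0)$ as the witness required in Definition~\ref{def:sigot-ambiguity}. The first marginal of $\pi$ is the stress law $P_1$, playing the role of $Q$, and the second is the baseline $P_0$, so $\pi\in\Pi(Q,P_0)$ has the correct orientation. For each $I\in\cW$, the coordinate cost is $c_I(x,y)=|\phi_I(x)-\phi_I(y)|$. This is a nonnegative measurable function, because $\phi_I=S^I_{0,T}(\widehat\cdot)/r_I$ is measurable on $\cX$ and $r_I>0$ is deterministic. Hence $\int c_I\,d\pi$ is well defined in $[0,\infty]$. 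Under the standing finite-first-moment hypothesis it is finite, and by definition it equals $\delta_I^\pi$. The constraint $\int c_I\,d\pi\le\delta_I^\pi$ therefore holds with equality for every $I$ simultaneously, under the single coupling $\pi$. This gives $P_1\in\cU_{\delta^\pi}(P_0)$.

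\textbf{Effective budget.} By definition, $\kappa(\ell,\delta^\pi)=\sum_{I\in\cW}|\ell_I|\,\delta_I^\pi=\sum_I|\ell_I|\int|\phi_I(x)-\phi_I(y)|\,\pi(dx,dy)$. Since $\cW$ is finite and each integrand is nonnegative, linearity of the integral, or Tonelli for finite sums, lets us move the sum inside the integral. This gives the stated identity, valid even in $[0,\infty]$.

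The only point needing care is measurability and finiteness of the signature coordinates, so that $\delta^\pi\in\R_+^{\cW}$ is a legitimate budget vector. For bounded-variation paths, or Stratonovich signatures of semimartingales defined almost surely, $\phi_I$ is Borel, or at least universally measurable, on $\cX$. Finiteness then follows from the triangle inequality: $\delta_I^\pi\le\E_{P_1}|\phi_I|+\E_{P_0}|\phi_I|<\infty$. I do not expect any genuine obstacle here.
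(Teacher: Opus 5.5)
Your proposal is correct and matches the paper's proof: the given coupling $\pi$ is itself the witness in Definition~\ref{def:sigot-ambiguity}, with every coordinate constraint holding with equality, and the identity for $\kappa(\ell,\delta^\pi)$ follows by substituting $\delta_I^\pi$ and exchanging the finite sum over $\cW$ with the integral. Your remarks on measurability and finiteness spell out what the paper builds into the setup, namely finite first moments under the coupling.
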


\begin{proof}
The first assertion is Definition~\ref{def:sigot-ambiguity}, with the coupling $\pi$ satisfying all coordinate constraints simultaneously. The identity for $\kappa(\ell,\delta^\pi)$ follows by substituting the definition of $\delta_I^\pi$ and using the fact that $\cW$ is finite.
\end{proof}

\subsection{Diffusion perturbations}
\label{subsec:app-diffusion-bounds}

We now instantiate Lemma~\ref{lem:common-coupling-certification} with a synchronous Brownian coupling. Fix $T>0$, a truncation level $N\ge1$, and a finite selected word set 
  $\cW\subset\cA^{\le N}\setminus\{\emptyset\}$. Let $(X_0,Y_0)$ be given on a common filtered probability space, and let $X$ and $Y$ solve the stochastic differential equations 
$\mathrm dX_t=b(X_t)\,\mathrm dt+\sum_{\ell=1}^m\sigma_\ell(X_t)\,\mathrm dW_t^\ell$,
  $\mathrm dY_t=\widetilde b(Y_t)\,\mathrm dt+\sum_{\ell=1}^m\widetilde\sigma_\ell(Y_t)\,\mathrm dW_t^\ell $, $t \ge 0 $, 
driven by the {\it same} $m$-dimensional Brownian motion. Assume that both drift and diffusion coefficients are globally Lipschitz with the common linear-growth constant $M > 0 $. Define 
\begin{equation} \label{eq:epscoef}
  \varepsilon_{\mathrm{coef}}
  :=
  \sup_x\frac{|b(x)-\widetilde b(x)|}{1+|x|}
  +
  \sum_{\ell=1}^m
  \sup_x\frac{|\sigma_\ell(x)-\widetilde\sigma_\ell(x)|}{1+|x|}.
\end{equation}
Set $Q_N:=2\cdot 4^N$ and we assume
  $\|X_0\|_{L^{Q_N}}+\|Y_0\|_{L^{Q_N}}\le K$. All normalizers $r_I>0$ are fixed.

\begin{proposition}[Diffusion perturbations and Stratonovich coordinates]
\label{prop:diffusion-coordinate-bound}
Under the above setup, 
\begin{equation}
  \E|\phi_I(X)-\phi_I(Y)|
  \le
  C_I\Bigl(\|X_0-Y_0\|_{L^{Q_N}}+\varepsilon_{\mathrm{coef}}\Bigr),\quad I\in\cW , 
\end{equation}
where $\varepsilon_{\text{coef}}$ is defined in \eqref{eq:epscoef}, $C_I$ depends on $T,N,d,m,M,K$ and the fixed normalizers $(r_I)_{I\in\cW}$.
\end{proposition}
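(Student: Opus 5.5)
The plan is to prove the estimate by induction on the word length $|I|$, working with the Stratonovich iterated integrals $S^I_{0,t}(\widehat X)$ and $S^I_{0,t}(\widehat Y)$ as processes in $t$. Since $\phi_I=S^I_{0,T}/r_I$ with $r_I$ fixed, it suffices to bound $\E|S^I_{0,T}(\widehat X)-S^I_{0,T}(\widehat Y)|$. The recursion $S^I_{0,t}=\int_0^t S^{I^-}_{0,s}\circ d\widehat X^{i_k}_s$ will be used throughout. The induction hypothesis will be stronger than the statement: for every word $J$ with $|J|=k\le N$,
\begin{equation*}
\Big\|\sup_{t\le T}|S^J_{0,t}(\widehat X)-S^J_{0,t}(\widehat Y)|\Big\|_{L^{p_k}}\le C\big(\|X_0-Y_0\|_{L^{Q_N}}+\varepsilon_{\mathrm{coef}}\big),
\end{equation*}
together with the uniform moment bound $\|\sup_t|S^J_{0,t}|\|_{L^{q_k}}\le C$ for both $X$ and $Y$. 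The exponents decrease with $k$, and I would take $p_k=q_k=Q_N/4^{k}\cdot 2$ or similar. The choice $Q_N=2\cdot4^N$ is designed so that halving, for Hölder, and a further factor for the Itô–Stratonovich correction keep $p_N\ge1$.

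\textbf{Step 1 (level one).} Prove the standard SDE stability estimate $\E\sup_{t\le T}|X_t-Y_t|^{p}\le C(\E|X_0-Y_0|^p+\varepsilon_{\mathrm{coef}}^p)$ for $p\le Q_N$. This follows from Burkholder–Davis–Gundy and Gronwall, after writing $b(X)-\widetilde b(Y)=(b(X)-b(Y))+(b(Y)-\widetilde b(Y))$. The first term is handled by the Lipschitz property, and the second is bounded by $\varepsilon_{\mathrm{coef}}(1+|Y|)$. The moment bounds $\E\sup|X|^{Q_N}\le C(K)$ come from linear growth. Time words need no estimate, because the time coordinate is common to both paths.

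\textbf{Step 2 (induction).} Convert each Stratonovich integral into Itô form, $\int S^{I^-}\circ dX^{i}=\int S^{I^-}dX^{i}+\tfrac12\langle S^{I^-},X^{i}\rangle_t$. Next, write $dX^i=b^i dt+\sigma^i dW$. The martingale part of $S^{I^-}$ is $S^{I^{--}}\sigma^{i_{k-1}}dW$ when the last letter of $I^-$ is a state letter, and there is no martingale part when it is the time letter. So the correction term equals $\tfrac12\int S^{I^{--}}\,\sigma^{i_{k-1}}\cdot\sigma^{i_k}\,ds$, which involves a shorter word. Now split each difference, for example
\begin{equation*}
S^{I^-}(\widehat X)b(X)-S^{I^-}(\widehat Y)\widetilde b(Y)=\big(S^{I^-}(\widehat X)-S^{I^-}(\widehat Y)\big)b(X)+S^{I^-}(\widehat Y)\big(b(X)-b(Y)\big)+S^{I^-}(\widehat Y)\big(b(Y)-\widetilde b(Y)\big).
\end{equation*}
Treat the diffusion and correction terms the same way. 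The stochastic integral part is controlled by BDG. Each product is controlled by Hölder, pairing the inductive difference bound (or the Step 1 bound, or $\varepsilon_{\mathrm{coef}}(1+|Y|)$) with moment bounds of $S^{I^-}$, $b$ and $\sigma$. The latter are polynomial in $\sup|X|$ by linear growth and are finite by Step 1.

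\textbf{Step 3.} Finish with $L^1\le L^{p_N}$ and divide by $r_I$.

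The main obstacle is the exponent bookkeeping, namely checking that the Hölder losses at each level, including the Itô–Stratonovich correction term, stay within the budget provided by $Q_N=2\cdot4^N$. I would fix concrete exponents, for instance requiring $p_k$ to be such that the level-$k$ estimate uses $L^{2p_k}$ norms at level $k-1$, which allows a factor $4$ per level. I would then verify the moment bounds of the iterated integrals at the same exponents, again by induction via BDG. A secondary point is that the $\sigma\cdot\sigma$ differences in the correction term need the Lipschitz property of products. This is handled by the same three-term splitting, using linear growth.
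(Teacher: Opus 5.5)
Your proposal is correct and follows essentially the same route as the paper. The shared ingredients are synchronous-coupling moment and stability bounds for $X-Y$; the It\^o form of the Stratonovich recursion, in which the correction $\tfrac12 S^{I^{--}}\sigma^{i_{k-1}}\cdot\sigma^{i_k}$ comes from a shorter word; a product-difference splitting of each coefficient; and a joint induction on word length for the moment and stability bounds via BDG and H\"older, where the paper fixes $p_k=2\cdot4^{N-k}$ (so $p_0=Q_N$, $p_N=2$, $4p_k=p_{k-1}$), exactly the factor-four-per-level bookkeeping you anticipate.
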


\begin{proof}
Let $\widehat X_t=(t,X_t)$ and $\widehat Y_t=(t,Y_t)$. We write $C$ for constants that may change from line to line but depend only on the quantities listed in the statement. Set 
  $\Delta_0:=\|X_0-Y_0\|_{L^{Q_N}}+\varepsilon_{\mathrm{coef}}$. For a word $\alpha=(\alpha_1,\ldots,\alpha_k)$ over the time-augmented alphabet, write
\begin{equation}
  J_X^\alpha(t)
  :=
  \int_{0<u_1<\cdots<u_k<t}
  \circ \mathrm d\widehat X^{\alpha_1}_{u_1}\cdots \circ \mathrm d\widehat X^{\alpha_k}_{u_k},
  \qquad
  J_X^\emptyset(t)=1,
\end{equation}
and also define $J_Y^\alpha$ analogously. Then $\phi_I(X)=J_X^I(T)/r_I$.
The proof first identifies the exact It\^o coefficients of these
coordinates and then closes simultaneous moment and stability estimates
over the length of the word.

\noindent \emph{Step 1: Synchronous SDE estimates.}
For every $2\le q\le Q_N$, the standard synchronous-coupling estimates give
\begin{align}
  \|X^*\|_{L^q}+\|Y^*\|_{L^q}&\le C,
  \label{eq:app-sde-moment}, \quad 
  \|(X-Y)^*\|_{L^q}\le C\Delta_0,
\end{align}
where $X^*=\sup_{0\le t\le T}|X_t|$ and similarly for $Y$. Indeed,
  $|b(x)-\widetilde b(y)|
  +
  \sum_{\ell=1}^m|\sigma_\ell(x)-\widetilde\sigma_\ell(y)|
  \le
  C|x-y|+C\varepsilon_{\mathrm{coef}}(1+|y|)$, 
and \eqref{eq:app-sde-moment} 
follow from the standard argument with the Burkholder-Davis-Gundy (BDG) inequality and Gr\"onwall lemma.

\noindent \emph{Step 2: the It\^o structure of the signature coordinates.}
For a nonempty word $\alpha$, let $\alpha^-$ denote the word obtained by deleting its last letter. The Stratonovich recursion is
  $\mathrm dJ_X^\alpha(t)
  =
  J_X^{\alpha^-}(t)\circ \mathrm d\widehat X_t^{\alpha_k}$.
We write
  $\mathrm dJ_X^\alpha(t)
  =
  A_X^\alpha(t)\,\mathrm dt
  +
  \sum_{\ell=1}^m B_{X,\ell}^\alpha(t)\,\mathrm dW_t^\ell$, and set $B_{X,\ell}^{\emptyset}=0$ for every $\ell$, and use the analogous
convention for $Y$.
To state these coefficients explicitly, we define
  $a^{qj}(x):=\sum_{\ell=1}^m\sigma_\ell^q(x)\sigma_\ell^j(x)$,
  $\widetilde a^{qj}(x):=
\sum_{\ell=1}^m\widetilde\sigma_\ell^q(x)\widetilde\sigma_\ell^j(x)$.
If $\alpha_k=0$, then
  $A_X^\alpha=J_X^{\alpha^-}$,
  $B_{X,\ell}^\alpha=0$. Now suppose $\alpha_k=j\in\{1,\ldots,d\}$. Stratonovich-to-It\^o
conversion gives
\begin{equation}
  B_{X,\ell}^\alpha
  =
  J_X^{\alpha^-}\sigma_\ell^j(X),
  \qquad
  A_X^\alpha
  =
  J_X^{\alpha^-}b^j(X)
  +
  \frac12\sum_{\ell=1}^m
  B_{X,\ell}^{\alpha^-}\sigma_\ell^j(X).
\end{equation}
The correction vanishes if $k=1$ or $\alpha_{k-1}=0$. If instead
$k\ge2$ and $\alpha_{k-1}=q\in\{1,\ldots,d\}$,
  $B_{X,\ell}^{\alpha^-}
  =
  J_X^{\alpha^{--}}\sigma_\ell^q(X)$,
  $\sum_{\ell=1}^m
  B_{X,\ell}^{\alpha^-}\sigma_\ell^j(X)
  =
  J_X^{\alpha^{--}}a^{qj}(X)$ with
$\alpha^{--}=(\alpha_1,\ldots,\alpha_{k-2})$. 
Consequently, the drift coefficient is given explicitly by
\begin{equation}
  A_X^\alpha
  =
  \begin{cases}
    J_X^{\alpha^-},
      & \alpha_k=0,\\[2pt]
    J_X^{\alpha^-}b^j(X),
      & \alpha_k=j,\ \text{and } k=1 \text{ or } \alpha_{k-1}=0,\\[2pt]
    J_X^{\alpha^-}b^j(X)
    +\dfrac12J_X^{\alpha^{--}}a^{qj}(X),
      & \alpha_k=j,\ \alpha_{k-1}=q\in\{1,\ldots,d\}.
  \end{cases}
\end{equation}
The same formulas hold for $Y$ with tilded coefficients. In particular,
the only products to be estimated are a lower-order signature coordinate
times $1$, one SDE coefficient, or one covariance coefficient.

\noindent \emph{Step 3: Pointwise coefficient estimates.}
For $0\le k\le N$, set
  $U_k^X:=\sup_{0\le t\le T}\sum_{|\beta|\le k}|J_X^\beta(t)|$,
  $U_k^Y:=\sup_{0\le t\le T}\sum_{|\beta|\le k}|J_Y^\beta(t)|$,
and $V_k:=\sup_{0\le t\le T}
  \sum_{|\beta|\le k}|J_X^\beta(t)-J_Y^\beta(t)|$. In this step, all estimates are pointwise in time, and time arguments are suppressed. Let
  $R:=1+|X|+|Y|$,
  $\Delta:=|X-Y|+\varepsilon_{\mathrm{coef}}$.
The assumptions on the coefficients imply
\begin{equation}
\begin{aligned}
  |b(X)|+\sum_{\ell=1}^m|\sigma_\ell(X)|
  \le C(1+|X|),\quad 
  |b(X)-\widetilde b(Y)|
  +\sum_{\ell=1}^m|\sigma_\ell(X)-\widetilde\sigma_\ell(Y)|
  \le CR\Delta.
\end{aligned}
\end{equation}
Moreover, from the definition of $a^{qj}$,
  $|a^{qj}(X)|\le C(1+|X|)^2$,
  $|a^{qj}(X)-\widetilde a^{qj}(Y)|
  \le CR^2\Delta$.

For any lower-order word $\gamma$ and any pair of coefficient functions
$f,\widetilde f$, the product difference decomposes as
\begin{equation}
  J_X^\gamma f(X)-J_Y^\gamma\widetilde f(Y)
  =
  \bigl(J_X^\gamma-J_Y^\gamma\bigr)f(X)
  +
  J_Y^\gamma\bigl(f(X)-\widetilde f(Y)\bigr).
\end{equation}
For the drift and diffusion terms in a word of length $k$, this identity
with $f=b^j$ or $f=\sigma_\ell^j$ gives
\begin{equation}
\begin{aligned}
  |J_X^{\alpha^-}f(X)|
  \le C(1+|X|)U_{k-1}^X, \quad 
  |J_X^{\alpha^-}f(X)-J_Y^{\alpha^-}\widetilde f(Y)|
  \le
  CR\left[V_{k-1}+U_{k-1}^Y\Delta\right].
\end{aligned}
\end{equation}
When the correction term is present, $k\ge2$, and the same identity with
$f=a^{qj}$ gives
\begin{equation}
\begin{aligned}
  |J_X^{\alpha^{--}}a^{qj}(X)|
  \le C(1+|X|)^2U_{k-2}^X,\quad 
  |J_X^{\alpha^{--}}a^{qj}(X)
    -J_Y^{\alpha^{--}}\widetilde a^{qj}(Y)|
  \le
  CR^2\left[V_{k-2}+U_{k-2}^Y\Delta\right].
\end{aligned}
\end{equation}
For a time-ending word, the corresponding bounds follow directly from
$A_X^\alpha=J_X^{\alpha^-}$ and $B_{X,\ell}^\alpha=0$.
Since $U_{k-2}^X\le U_{k-1}^X$, $V_{k-2}\le V_{k-1}$,
$U_{k-2}^Y\le U_{k-1}^Y$, and $R^2\le R^k$ for $k\ge2$, the explicit
coefficient formulas in Step 2 yield, for every $|\alpha|=k$,
\begin{align}
  |A_X^\alpha|
  +
  \sum_{\ell=1}^m|B_{X,\ell}^\alpha|
  &\le
  C(1+|X|)^k(1+U_{k-1}^X),
  \label{eq:app-coeff-moment}\\
  |A_X^\alpha-A_Y^\alpha|
  +
  \sum_{\ell=1}^m|B_{X,\ell}^\alpha-B_{Y,\ell}^\alpha|
  &\le
  CR^k
  \left[
    V_{k-1}+(1+U_{k-1}^Y)\Delta
  \right].
  \label{eq:app-coeff-stability}
\end{align}

\noindent \emph{Step 4: closing the moment induction.}
Set $p_k:=2\cdot4^{N-k}$, so that $p_0=Q_N$ and $p_N=2$. The factor-four
loss between successive levels leaves enough integrability for the
three-factor product in the stability estimate below. We prove by
induction that, for $0\le k\le N$,
\begin{align}
  \|U_k^X\|_{L^{p_k}}+\|U_k^Y\|_{L^{p_k}}
  &\le C,
  \label{eq:app-Uk}\\
  \|V_k\|_{L^{p_k}}
  &\le C\Delta_0.
  \label{eq:app-Vk}
\end{align}
The case $k=0$ is immediate. Suppose that the bounds hold at level $k-1$.
Applying BDG to the coordinate equations and using
\eqref{eq:app-coeff-moment} gives
  $\|U_k^X\|_{L^{p_k}}
  \le
  C\left\|(1+X^*)^k(1+U_{k-1}^X)\right\|_{L^{p_k}}$,
and similarly for $Y$. The H\"older's inequality, \eqref{eq:app-sde-moment},
and the induction hypothesis yield \eqref{eq:app-Uk}, since
$2kp_k\le Q_N$ and $2p_k\le p_{k-1}$.

For stability, BDG and \eqref{eq:app-coeff-stability} give
  $\|V_k\|_{L^{p_k}}
  \le
  C\|
    R_k[
      V_{k-1}
      +(1+U_{k-1}^Y)\bigl((X-Y)^*+\varepsilon_{\mathrm{coef}}\bigr)
    ]
  \|_{L^{p_k}}$, 
where $R_k=(1+X^*+Y^*)^k$. The term $R_kV_{k-1}$ follows from the 
H\"older inequality with exponents $2,2$, because
$2kp_k\le Q_N$ and $2p_k\le p_{k-1}$. For the term containing
$(X-Y)^*$, the H\"older inequality with exponents $4,4,2$ gives
  $\|R_k(1+U_{k-1}^Y)(X-Y)^*\|_{L^{p_k}}
  \le
  \|R_k\|_{L^{4p_k}}
  \|1+U_{k-1}^Y\|_{L^{4p_k}}
  \|(X-Y)^*\|_{L^{2p_k}}$.
This is bounded by $C\Delta_0$ because
$4kp_k\le Q_N$, $4p_k=p_{k-1}$, and $2p_k\le Q_N$.
Finally, the term containing $\varepsilon_{\mathrm{coef}}$ is bounded by  $\varepsilon_{\mathrm{coef}}
  \|R_k\|_{L^{2p_k}}
  \|1+U_{k-1}^Y\|_{L^{2p_k}}
  \le C\Delta_0$. This proves \eqref{eq:app-Vk}.
At $k=N$, \eqref{eq:app-Vk} and $p_N=2$ imply
  $\E|\phi_I(X)-\phi_I(Y)|
  \le
  r_I^{-1}\|V_N\|_{L^1}
  \le
  C_I\Delta_0$,
$I\in\cW$.
\end{proof}

Let $P_X$ and $P_Y$ denote the path laws of $X$ and $Y$. The synchronous construction is a single coupling of $P_X$ and $P_Y$; hence Lemma~\ref{lem:common-coupling-certification} shows that
  $\delta_I^{\mathrm{sync}}
  :=
  \E|\phi_I(X)-\phi_I(Y)|$,
  $I\in\cW$, is an admissible coordinate-wise Sig-OT budget vector covering $P_Y$ relative to $P_X$.

\subsection{Gaussian fixed-grid calibration}
\label{subsec:app-gaussian-bounds}

We next give a finite-dimensional version adapted to grid-based Gaussian path models. Fix a truncation level $N\ge1$, a finite selected word set
  $\cW\subset\cA^{\le N}\setminus\{\emptyset\}$,
and a grid $0=t_0<\cdots<t_L=T$.
Let 
  $Z=(X_{t_0},\ldots,X_{t_L})\sim N(m,\Sigma)$,
  $\widetilde Z=(Y_{t_0},\ldots,Y_{t_L})\sim N(\widetilde m,\widetilde\Sigma)$
in $\R^{d(L+1)}$. Let $\mathcal I_L(z)$ be the time-augmented piecewise-linear interpolation of the grid vector $z$, and define 
  $\phi_{I,L}(z):={S^I(\mathcal I_L(z))_{0,T}}/{r_I}$, $I\in\cW$,
where $r_I$ are the normalizing constants. For later use, recall the finite-dimensional Gaussian transport formula (cf. \cite{dowsonlandau1982}):
  $W_2^2\bigl(N(m,\Sigma),N(\widetilde m,\widetilde\Sigma)\bigr)
  =
  |m-\widetilde m|^2+
  \operatorname{tr}\!(\Sigma+\widetilde\Sigma
  -2(\widetilde\Sigma^{1/2}\Sigma\widetilde\Sigma^{1/2})^{1/2})$.

\begin{proposition}[Gaussian fixed-grid calibration]
\label{prop:gaussian-grid-bound}
Under the above Gaussian fixed-grid setup, if $(Z,\widetilde Z)$ is an optimal Gaussian $W_2$ coupling, then for every $I\in\cW$,
\begin{equation}
\begin{aligned}
  &\E|\phi_{I,L}(Z)-\phi_{I,L}(\widetilde Z)|  \\
  &\quad\le
  C_{I,L,N}
  \Bigl[
    1+(|m|+\sqrt{\operatorname{tr}\Sigma})^{N-1}
     +( |\widetilde m|+\sqrt{\operatorname{tr}\widetilde\Sigma})^{N-1}
	  \Bigr]
	  W_2\bigl(N(m,\Sigma),N(\widetilde m,\widetilde\Sigma)\bigr)
\end{aligned}
\end{equation}
where $C_{I,L,N}$ also absorbs the fixed normalizer $r_I$.
\end{proposition}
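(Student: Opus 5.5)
The plan is to treat $\phi_{I,L}$ as a polynomial map on the finite-dimensional grid space $\R^{d(L+1)}$ and apply a mean-value (local Lipschitz) estimate along the coupling. Write $\Delta_i z:=z_{t_i}-z_{t_{i-1}}$ and $\Delta_i t:=t_i-t_{i-1}$. The time-augmented piecewise-linear interpolation $\mathcal I_L(z)$ is a concatenation of $L$ linear segments with increments $(\Delta_i t,\Delta_i z)\in\R^{d+1}$. By Chen's identity (Proposition~\ref{prop:signature-identities}(i)) and the fact that the signature of a linear segment with increment $v$ is $\exp_\otimes(v)$, we have
\[
S(\mathcal I_L(z))=\exp_\otimes(v_1)\otimes\cdots\otimes\exp_\otimes(v_L),\qquad v_i=(\Delta_i t,\Delta_i z).
\]
Consequently, every coordinate $S^I$ with $|I|=k\le N$ is a polynomial in the increments $\Delta_i z$. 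Its total degree in the state increments is at most $k\le N$, and its coefficients depend only on the fixed grid and on $I$. The increments are linear in $z$ with operator norm bounded by a constant depending on $L$, so $\phi_{I,L}$ is a polynomial on $\R^{d(L+1)}$ of degree at most $N$, with coefficients controlled by $C_{I,L,N}$ (absorbing $1/r_I$).

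Next I would establish the gradient bound $|\nabla\phi_{I,L}(w)|\le C_{I,L,N}(1+|w|^{N-1})$. The mean value theorem along the segment $w_\theta=\theta Z+(1-\theta)\widetilde Z$ then gives
\[
|\phi_{I,L}(Z)-\phi_{I,L}(\widetilde Z)|\le C\bigl(1+|Z|^{N-1}+|\widetilde Z|^{N-1}\bigr)|Z-\widetilde Z|.
\]
Here I use $|w_\theta|^{N-1}\le \max(|Z|,|\widetilde Z|)^{N-1}$, by convexity of the norm.

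Taking expectations and applying Cauchy--Schwarz yields
\[
\E|\phi_{I,L}(Z)-\phi_{I,L}(\widetilde Z)|\le C\bigl\|1+|Z|^{N-1}+|\widetilde Z|^{N-1}\bigr\|_{L^2}\,\|Z-\widetilde Z\|_{L^2}.
\]
Under the optimal Gaussian coupling, the second factor equals $W_2(N(m,\Sigma),N(\widetilde m,\widetilde\Sigma))$. This is the step that uses the optimality assumption, and it is the only place where it enters.

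The main step requiring care is the moment factor: we need $\||Z|^{N-1}\|_{L^2}\le C_N(|m|+\sqrt{\operatorname{tr}\Sigma})^{N-1}$. Write $Z=m+\Sigma^{1/2}G$ with $G$ standard Gaussian in dimension $n=d(L+1)$. Then
\[
\||Z|\|_{L^{2(N-1)}}\le |m|+\bigl\||\Sigma^{1/2}G|\bigr\|_{L^{2(N-1)}}.
\]
Gaussian hypercontractivity (equivalently, the equivalence of moments for the Gaussian chaos $|\Sigma^{1/2}G|^2$ of order two) gives $\||\Sigma^{1/2}G|\|_{L^{q}}\le C_q\||\Sigma^{1/2}G|\|_{L^2}=C_q\sqrt{\operatorname{tr}\Sigma}$, with $C_q$ independent of dimension. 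Alternatively, one may let $C$ depend on $L$ and $d$ and use a crude moment bound, since $L$ is fixed. The same argument applies to $\widetilde Z$. Combining the three estimates gives the claimed inequality, with all grid-, word- and normalizer-dependent constants collected into $C_{I,L,N}$.
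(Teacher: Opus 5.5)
Your proposal is correct and follows the paper's proof essentially step for step. Chen's identity together with the tensor exponential of linear segments makes $\phi_{I,L}$ a polynomial of degree at most $N$ in the grid vector. The mean-value theorem then gives the local bound $C_{I,L,N}(1+|z|^{N-1}+|z'|^{N-1})|z-z'|$. Finally, Cauchy--Schwarz, combined with $\|Z-\widetilde Z\|_{L^2}=W_2$ under the optimal coupling and the Gaussian moment bound $\|G\|_{L^q}\le C_q(|a|+\sqrt{\operatorname{tr}C})$, finishes the argument.

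The only cosmetic point is the case $N=1$. There the exponent $2(N-1)$ vanishes, so the moment factor should simply be read as a constant, exactly as the paper does.
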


\begin{proof}
By Chen's identity and the tensor-exponential formula for linear segments, every level-$\le N$ signature coordinate of $\mathcal I_L(z)$ is a polynomial in $z$ of degree at most $N$. Hence the mean-value formula gives
\begin{equation}
  |\phi_{I,L}(z)-\phi_{I,L}(z')|
  \le C_{I,L,N}\bigl(1+|z|^{N-1}+|z'|^{N-1}\bigr)|z-z'|.
  \label{eq:poly-lipschitz-coordinate}
\end{equation}
For $N=1$, the polynomial factor is interpreted as a constant.

Let $(Z,\widetilde Z)$ be an optimal coupling for the finite-dimensional Gaussian $W_2$ distance. By \eqref{eq:poly-lipschitz-coordinate} and the Cauchy--Schwarz inequality,
  $\E|\phi_{I,L}(Z)-\phi_{I,L}(\widetilde Z)|
  \le C_{I,L,N}
  \|1+|Z|^{N-1}+|\widetilde Z|^{N-1}\|_{L^2}
  \|Z-\widetilde Z\|_{L^2}$. 
For a Gaussian vector $G\sim N(a,C)$ and each finite $q\ge1$, we use the standard moment estimates:
  $\|G\|_{L^q}\le C_q(|a|+\sqrt{\operatorname{tr}C})$.
Since $\|Z-\widetilde Z\|_{L^2}=W_2(N(m,\Sigma),N(\widetilde m,\widetilde\Sigma))$, the desired bound follows.
\end{proof}

Let $P_L:=(\mathcal I_L)_\#N(m,\Sigma)$ and $\widetilde P_L:=(\mathcal I_L)_\#N(\widetilde m,\widetilde\Sigma)$ be the interpolated path laws. The same optimal Gaussian coupling is used for every selected word; after interpolation, define the simultaneous budget vector:
  $\delta_{I,L}^{\mathrm{G}}
  :=
  \E|\phi_{I,L}(Z)-\phi_{I,L}(\widetilde Z)|$,
  $I\in\cW$. 
It covers $\widetilde P_L$ relative to $P_L$ by Lemma~\ref{lem:common-coupling-certification}.

\section{\texorpdfstring{$J_1$}{J1} Benchmark for Constant-boundary Ruin}
\label{app:j1-ruin-benchmark}

This appendix records the $J_1$ benchmark used in Subsection~\ref{subsubsec:reserve-risk} for the ordinary constant-boundary ruin event. The construction follows Section 3 in \cite{blanchetmurthy2019}. The information is that the robust values for the specific constant-boundary ruin event obtained with this approach, should be expected to be sharp within the class of OT-approaches to the same stress test problem. Let
  $Z_t=\sqrt{\nu m_2}\,B_t-\eta\nu m_1t$,
  $0\le t\le T$,
and let $\mu_Z$ be the law of $Z$ on the space $D([0,T],\R)$. For a capital level $u$, define the set of crossing the constant level $u$: 
  $A_u
  :=
  \{
    z\in D([0,T],\R):
    \sup_{0\le t\le T}z_t\ge u
  \}$.
The Brownian diffusion approximation to the probability of ruin in a finite-horizon is 
  $\psi_B(u,T):=\mu_Z(A_u)$.

The ambiguity set uses the Skorokhod $J_1$ metric. Writing $\Lambda$ for the set of increasing homeomorphisms of $[0,T]$, take 
  $d_{J_1}(z,y)
  :=
  \inf_{\lambda\in\Lambda}
  \{
    \|\lambda-\mathrm{id}\|_\infty
    \vee
    \|z-y\circ\lambda\|_\infty
  \}$,
  $c_{J_1}(z,y)=d_{J_1}(z,y)^p$,
   $p\ge1$.
For the constant-level crossing event, the distance to the set $A_u$ is exactly evaluated by  
  $\inf_{y\in A_u} d_{J_1}(z,y)
  =
  (u-\sup_{0\le t\le T}z_t)_+$,
(cf. Lemma B.2 in \cite{blanchetmurthy2019}) and hence
  $c_{J_1}(z,A_u)
  :=
  \inf_{y\in A_u}c_{J_1}(z,y)
  =
  (u-\sup_{0\le t\le T}z_t)_+^p $. 
Indeed, the upper bound is obtained by shifting the whole path upwards by
\((u-\sup_t z_t)_+\). Conversely, if \(y\in A_u\), then
\(\sup_t y_t\ge u\), and every time change \(\lambda\in\Lambda\) preserves the supremum of \(y\circ\lambda\). Therefore, 
\begin{equation}
  \|z-y\circ\lambda\|_\infty
  \ge
  \sup_t y_t-\sup_t z_t
  \ge
  u-\sup_t z_t,
\end{equation}
which gives the lower bound. Thus,  $J_1$ enlargement of \(A_u\) is a lowering of the ruin threshold. No ambiguity budget is spent on any feature of the path other than closing the vertical shortfall to the constant boundary.

Now we may apply Theorem 2(b) in \cite{blanchetmurthy2019}. For a radius $\delta>0$, define
  $C_u(Z):=c_{J_1}(Z,A_u)
  =
  (u-\sup_{0\le t\le T}Z_t)_+^p$,
and
  $h(r):=\E_{\mu_Z}\![C_u(Z)\,\1_{\{C_u(Z)\le r\}}]$,
  $r\ge0$. If \(\delta\ge\E_{\mu_Z}[C_u(Z)]\), the worst-case probability is \(1\); Otherwise, let
  $r^\star:=h^{-1}(\delta)$ 
with the usual generalized inverse convention. The closed-event OT-DRO formula in Theorem 2(b) in \cite{blanchetmurthy2019} gives
\begin{equation}
\begin{aligned}
  \psi_{\mathrm{rob}}^{J_1}(u,T)
  &:=
  \sup_{Q:\,W_{c_{J_1}}(Q,\mu_Z)\le\delta} Q(A_u) 
  =
  \mu_Z\{C_u(Z)\le r^\star\} 
  =
  \Prob\!\big(
    \sup_{0\le t\le T}Z_t
    \ge u-(r^\star)^{1/p}
  \big)
  =
  \psi_B(\tilde u,T),
\end{aligned}
\end{equation}
where $\tilde u:=u-(r^\star)^{1/p}$. Thus, for the ordinary constant-boundary ruin event, robustification under the \(J_1\) cost is exactly equivalent to lowering the capital level from \(u\) to \(\tilde u\) and evaluating the crossing probability of Brownian motion at that shifted level. This infers precisely that the \(J_1\) benchmark has essentially no additional slack for this task. Among OT-DRO stress tests tailored to the same constant-boundary crossing event, it is the sharp reference case: the distance-to-event calculation depends only on the same scalar functional \(\sup_t z_t\) that defines the ruin. On the other hand, this specific sensitivity suggests that this $J_1$-induced ambiguity set should not be expected to work as a reusable ambiguity set for other general path-dependent events and provide robust bounds that are as informative for them. 

\section{Ambiguity-Ball Coverage and Design Ablations}
\label{app:ablations}
We finally discuss two questions that are deliberately kept distinct in the main text. First, we examine the empirical geometry of the ambiguity budgets: How does the aggregate coordinate-wise radius change under controlled parametric perturbations and under heterogeneous model-family comparisons? Second, we report some options for the numerical choices that enter the downstream robust values.
For a baseline law $P$ and a comparison law $Q$, write the aggregate decoupled empirical radius
  $\Delta_k(P,Q)
  :=
  \|\widehat\delta^{\mathrm{dec}}_k(P,Q)\|_1
  =
  \sum_{I\in\mathcal W_k}\widehat\delta^{\mathrm{dec}}_I(P,Q)$, where $\widehat\delta^{\mathrm{dec}}_I(P,Q)$ is the decoupled empirical $W_1$ budget proxy for signature coordinate $I$ (cf. Section~\ref{subsec:empirical-budget-protocol}). The diagnostics below use the same coordinate normalization and empirical calibration convention as in Section~\ref{sec:numerics}.

For the coverage experiments in Subsections~\ref{subsec:app-parametric-coverage}--\ref{subsec:app-heterogeneous-coverage}, every model is simulated over \(T=1\) using \(10{,}000\) paths, \(2{,}000\) time steps, and five independent seed replicates. We use geometric signatures computed by \texttt{iisignature} \cite{iisignature}, baseline-GBM-fitted MAD scaling with floor \(10^{-8}\), and levels \(k=2,\ldots,6\) for the parametric sweeps and \(k=2,3,4\) for the heterogeneous pairwise experiment. Curves and matrix entries report replicate means; shaded bands in the parametric figures show one cross-replicate standard deviation.

\subsection{Parametric coverage geometry}
\label{subsec:app-parametric-coverage}

The first diagnostic varies one parameter at a time and records the aggregate decoupled radius $\Delta_k$. It illustrates how the Sig-OT ambiguity reacts to parameter uncertainties. Figure~\ref{fig:app-parametric-coverage-raw} reports the raw decoupled radius at the main truncation level $k=4$. The first two panels are within-GBM sanity checks: drift $\mu$ and volatility $\sigma$ are varied around the baseline. The third panel isolates jump-arrival risk through the Merton jump intensity $\lambda$. The fourth and fifth panels are complementary Heston volatility-of-volatility diagnostics: the GBM-to-Heston structural sweep uses the implementation parameter $\xi_v$, while the within-Heston self-sweep follows the older sensitivity notation $\sigma_v$. The final panel isolates local-volatility elasticity through the CEV parameter $\beta$.

The precise sweep specifications are as follows. The GBM baseline is \(\mathrm{GBM}(\mu,\sigma)=\mathrm{GBM}(0.05,0.20)\). The drift sweep fixes \(\sigma=0.20\) and uses \(\mu\in\{-0.10,-0.075,\ldots,0.20\}\); the volatility sweep fixes \(\mu=0.05\) and uses \(\sigma\in\{0.08,0.10,\ldots,0.32\}\). The Merton sweep fixes \((\mu,\sigma,m_J,s_J)=(0.05,0.20,-0.12,0.08)\) and varies \(\lambda\in\{0,0.10,0.20,0.35,0.50,0.75,1,1.25,1.50,1.75,2\}\), so that \(\lambda=0\) recovers the GBM baseline. The GBM--Heston structural sweep compares the GBM baseline with Heston laws having \((\mu,v_0,\theta_v,\kappa_v,\rho)=(0.05,0.04,0.04,3,-0.5)\) and \(\xi_v\in\{0,0.10,0.20,0.30,0.40,0.50,0.65,0.80,1\}\). The Heston self-sweep uses the same parameters with baseline \(\xi_v=0.35\) and varies \(\sigma_v\equiv\xi_v\in\{0,0.10,0.20,0.35,0.50,0.65,0.80,1,1.20\}\). Finally, the CEV dynamics are \(dS_t=0.05S_t\,dt+0.20S_t^\beta\,dW_t\), \(S_0=1\), with baseline \(\beta=1\) and grid \(\beta\in\{0.40,0.50,0.60,0.70,0.80,0.90,1,1.10,1.20,1.30,1.50\}\). Moreover, to compare the shape of the parametric response across truncation levels, Figure~\ref{fig:app-parametric-coverage-normalized} reports the scale-free ratios
\begin{equation}
  \widetilde\Delta_k(P_0,P_\theta)
  :=
  \frac{\Delta_k(P_0,P_\theta)}
  {\max_{\theta'\in\Theta_{\mathrm{sweep}}}\Delta_k(P_0,P_{\theta'})},
  \qquad k=2,\ldots,6.
\end{equation}
This normalization removes the dimension-driven scale differences across $k$. 

\begin{figure}[ht]
\centering
\includegraphics[width=0.99\textwidth]{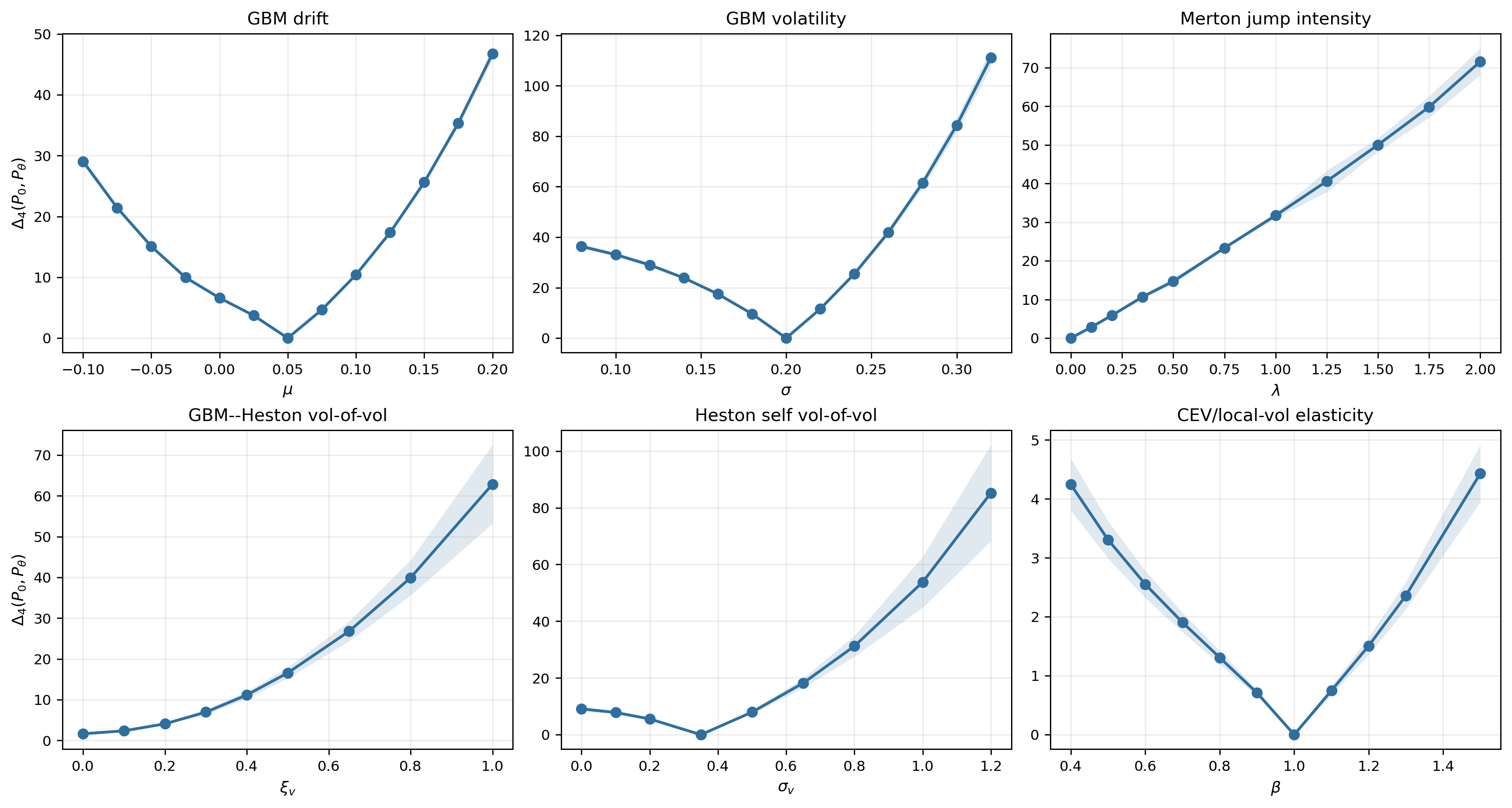}
\caption{Raw parametric coverage scale at $k=4$. Each panel plots the decoupled empirical radius $\Delta_4(P_0,P_\theta)=\|\widehat\delta^{\mathrm{dec}}_4(P_0,P_\theta)\|_1$ against the swept model parameter. The six panels cover GBM drift, GBM volatility, Merton jump intensity, GBM-to-Heston volatility-of-volatility, within-Heston volatility-of-volatility, and CEV/local-volatility elasticity.}
\label{fig:app-parametric-coverage-raw}
\end{figure}

\begin{figure}[ht]
\centering
\includegraphics[width=0.99\textwidth]{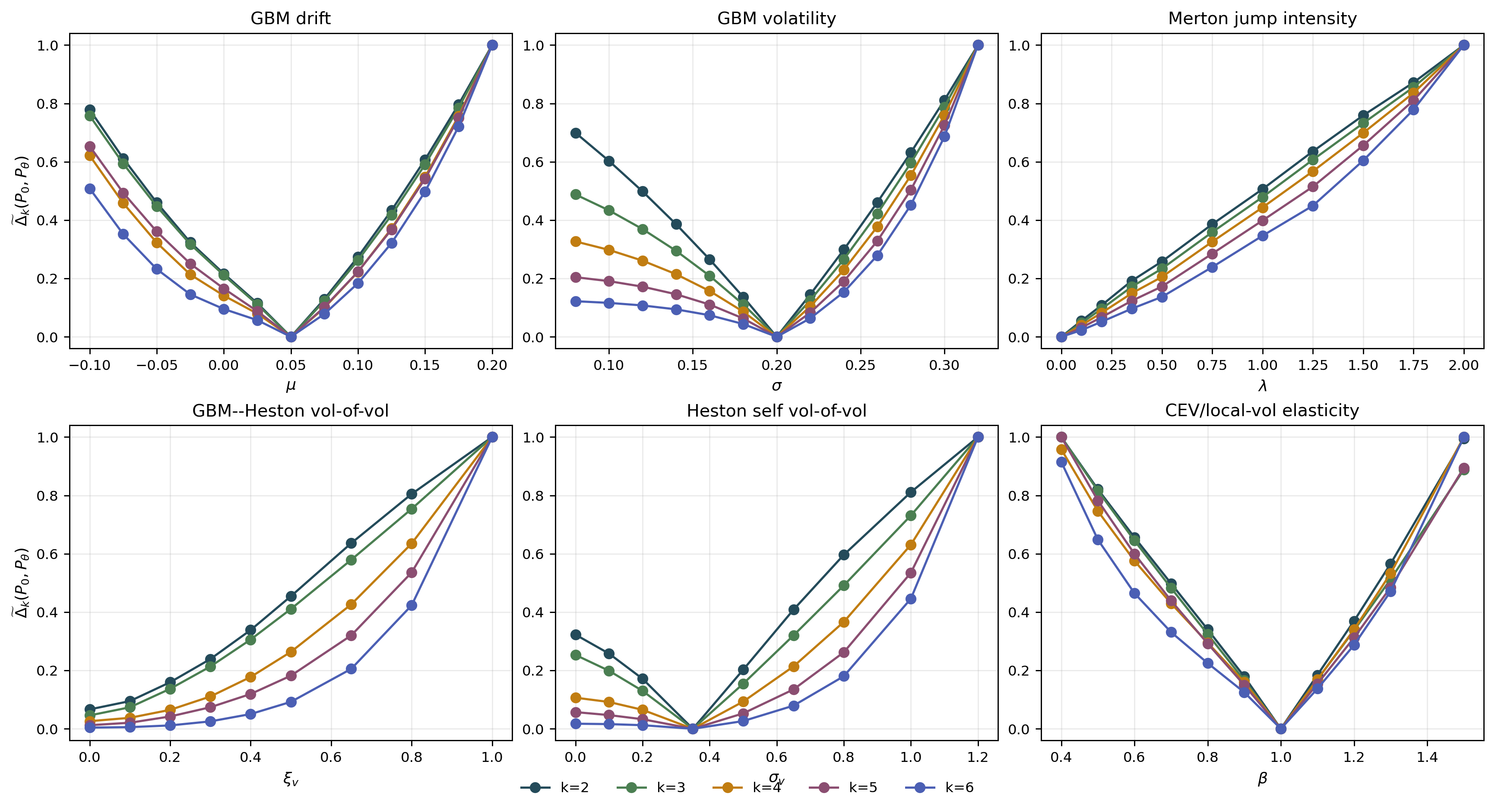}
\caption{Normalized parametric coverage profiles across truncation levels. Each panel plots $\widetilde\Delta_k(P_0,P_\theta)$ for $k=2,\ldots,6$, where each curve is normalized by its maximum value within the same parameter sweep and truncation level. The figure compares the shape of the sensitivity curves.}
\label{fig:app-parametric-coverage-normalized}
\end{figure}

\subsection{Heterogeneous model-family coverage}
\label{subsec:app-heterogeneous-coverage}

The second diagnostic illustrates the heterogeneous coverage of the Sig-OT ambiguity set. For a library of log-price laws 
  $\{P_i\}_{i\in\mathcal I}
  =
  \{\text{GBM},\ \text{GBM-vol},\ \text{Merton},\ \text{CEV},\ \text{Heston},\ \text{Regime}\}$,
we compute the pairwise budget radii
  $D_{ij}^{(k)}
  :=
  \Delta_k(P_i,P_j)$.
Figure~\ref{fig:app-pairwise-coverage} reports the pairwise  matrix at $k=4$ under two feature paths. The first uses the time-augmented log-price path $(t,X_t)$; the second augments it by realized variation, $(t,X_t,\mathrm{RV}_t)$. 
The model library consists of the following six laws with similar variance structures but different model types:
\begin{itemize}[leftmargin=2em,itemsep=0pt,topsep=2pt]
\item GBM with \((\mu,\sigma)=(0.05,0.20)\), and GBM-vol with \((\mu,\sigma)=(0.05,0.22)\);
\item Merton with \((\mu,\sigma,\lambda,m_J,s_J)=(0.05,0.1865,0.35,-0.10,0.07)\);
\item CEV with \(dS_t=0.05S_t\,dt+0.20S_t^{0.60}\,dW_t\) and \(S_0=1\);
\item Heston with \((\mu,v_0,\theta_v,\kappa_v,\xi_v,\rho)=(0.05,0.04,0.04,3,0.35,-0.5)\);
\item Regime mixture \(0.90\,\mathrm{GBM}(0.05,0.20)+0.10\,\mathrm{GBM}(-0.05,0.30)\), with one regime drawn for each whole path.
\end{itemize}
The realized-variation channel is \(\mathrm{RV}_{t_j}=\sum_{r=1}^{j}(X_{t_r}-X_{t_{r-1}})^2\). All pairwise radii use one common scaling convention fitted to the baseline GBM feature bank, so the displayed matrices are symmetric and directly comparable within each feature representation.

\begin{figure}[ht]
\centering
\begin{subfigure}{0.47\textwidth}
  \centering
  \includegraphics[width=\textwidth]{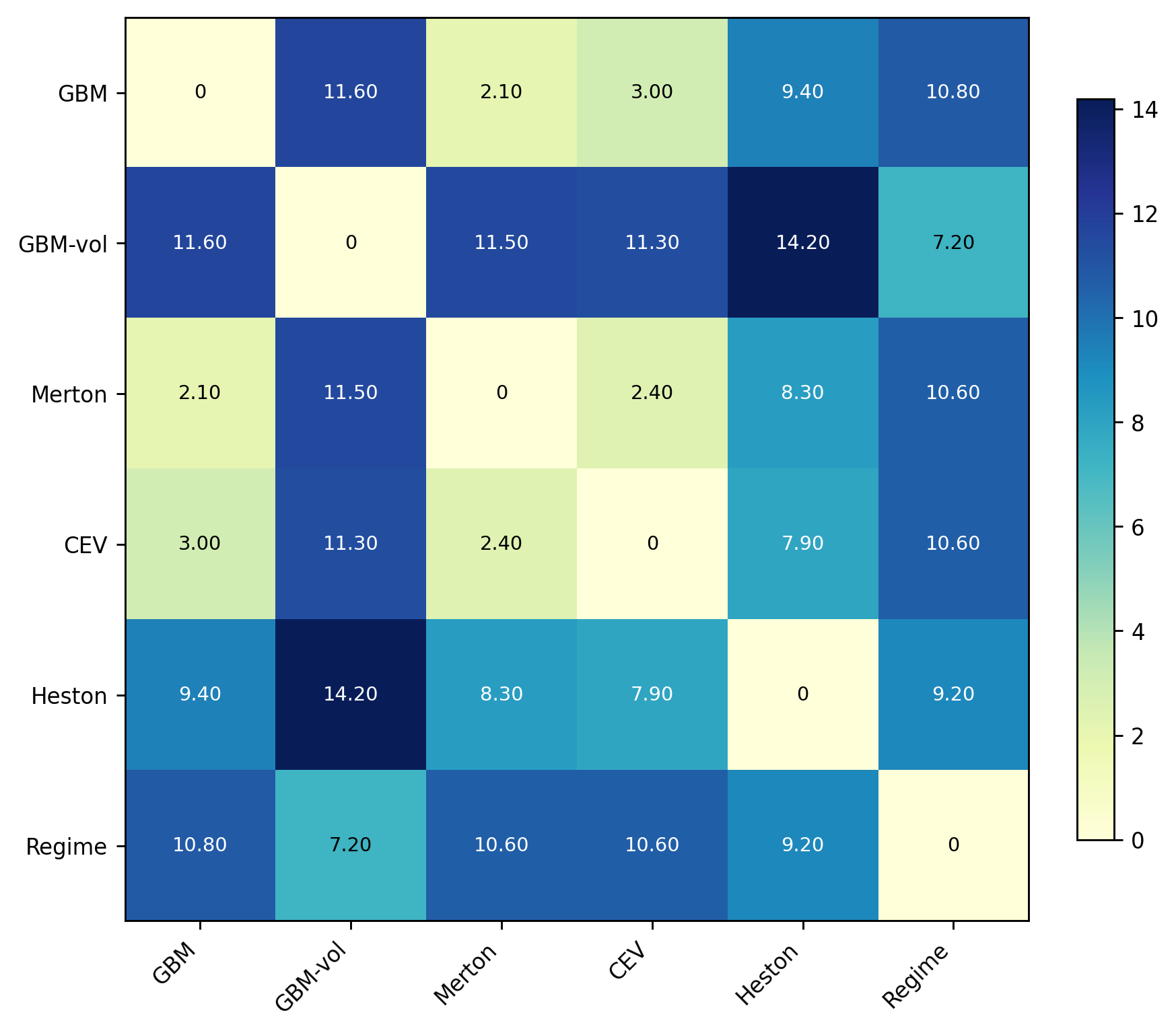}
  \caption{Time--price path}
\end{subfigure}
\begin{subfigure}{0.47\textwidth}
  \centering
  \includegraphics[width=\textwidth]{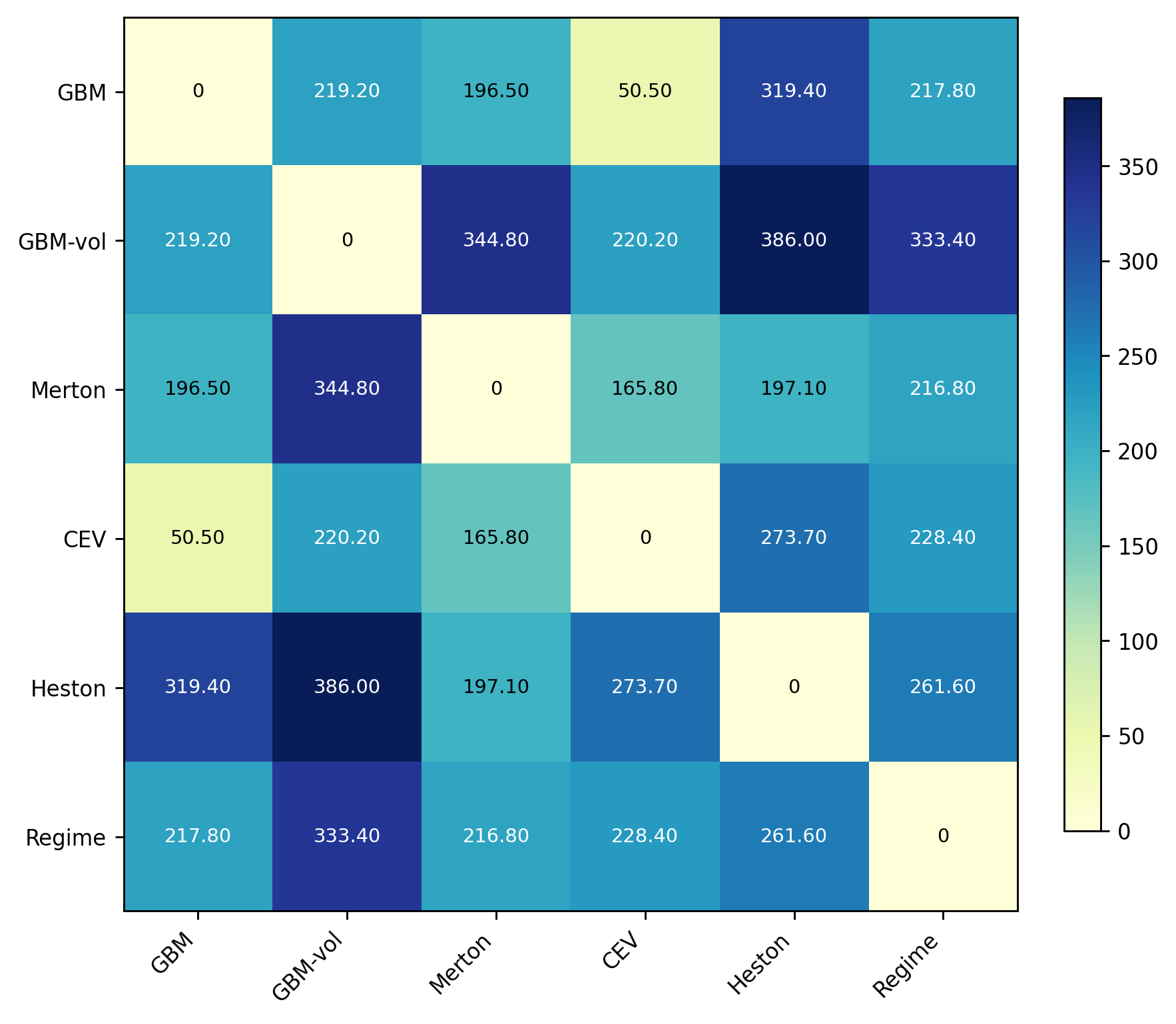}
  \caption{Time--price--realized-variation path}
\end{subfigure}
\caption{Heterogeneous coverage of the Sig-OT ambiguity ball. The entries are pairwise empirical radii $D_{ij}^{(4)}=\|\widehat\delta^{\mathrm{dec}}_4(P_i,P_j)\|_1$ for the model library $\{\mathrm{GBM},\mathrm{GBM\mbox{-}vol},\mathrm{Merton},\mathrm{CEV},\mathrm{Heston},\mathrm{Regime}\}$. The two panels compare the basic time--price representation with a realized-variation-augmented representation.}
\label{fig:app-pairwise-coverage}
\end{figure}

For the basic \((t,X_t)\) representation, the smallest \(k=4\) separations occur among GBM, Merton, and CEV, whereas Heston, the volatility-shifted GBM, and the regime mixture are more distant. Adding realized variation materially enlarges the separation of volatility and stochastic-volatility alternatives: the largest entry becomes the GBM-vol/Heston pair. More generally speaking, the choice of feature path can qualitatively alter the ordering of Sig-OT distances across model pairs, underscoring its role in specifying the Sig-OT path-space geometry.

\subsection{Lookback-option ablations for numerical design choices}
\label{subsec:app-design-ablations}

The final supplementary experiment collects two ablations for the lookback-call pricing experiment from Subsection~\ref{subsubsec:option-pricing}: the budget-weighted lasso parameter $\alpha$ and the signature truncation level $k$.

The reported columns are the same as in the main option-pricing table: baseline price, stress price, stress shift, the learned robust premium $\widehat\kappa$, the tightness ratio $|\mathrm{Shift}|/\widehat\kappa$, and out-of-sample fit quality. In the tables, Ratio denotes $|\mathrm{Shift}|/\widehat\kappa$. For compactness, Tables~\ref{tab:app-lookback-alpha-ablation}--\ref{tab:app-lookback-k-ablation} report the three strikes $K\in\{95,100,105\}$ and the two representative volatility stresses $\sigma\in\{0.15,0.25\}$. Unless explicitly varied, the experiment uses the setup with $\alpha=1$, $k=4$, and the time--price representation $(t,X_t)$.

Each row is the mean of five independent replicates. Each replicate uses \(20{,}000\) paths per law for budget calibration, \(20{,}000\) baseline paths for fitting with \(16{,}000\) training observations, \(500{,}000\) paths per law for price evaluation, and \(2{,}000\) time steps. The task-weighted joint OT calculation uses Sinkhorn regularization \(10^{-2}\), batch size \(512\), and eight subsampled batches per outer iteration.

\begin{table}[p]
\centering
\scriptsize
\setlength{\tabcolsep}{3pt}
\renewcommand{\arraystretch}{0.88}
\caption{Lookback-call ablation for the budget-weighted lasso parameter.}
\label{tab:app-lookback-alpha-ablation}
\begin{tabular}{@{}lrrrrrrrrr@{}}
\toprule
\textcolor{black}{Setting} & \textcolor{black}{$\sigma$} & \textcolor{black}{$\Delta\sigma$} & \textcolor{black}{$K$} & \textcolor{black}{Base} & \textcolor{black}{Stress} & \textcolor{black}{Shift} & \textcolor{black}{$\widehat\kappa$} & \textcolor{black}{Ratio} & \textcolor{black}{Fit $R^2$} \\
\midrule
$\alpha=0$ & 0.15 & -0.05 & 95 & 22.428 & 18.118 & -4.310 & 13.363 & 0.323 & 0.973 \\
$\alpha=0$ & 0.25 & +0.05 & 95 & 22.428 & 26.881 & 4.452 & 15.779 & 0.282 & 0.973 \\
$\alpha=0$ & 0.15 & -0.05 & 100 & 17.527 & 13.217 & -4.310 & 13.363 & 0.323 & 0.973 \\
$\alpha=0$ & 0.25 & +0.05 & 100 & 17.527 & 21.980 & 4.452 & 15.779 & 0.282 & 0.973 \\
$\alpha=0$ & 0.15 & -0.05 & 105 & 13.156 & 8.962 & -4.194 & 12.519 & 0.335 & 0.973 \\
$\alpha=0$ & 0.25 & +0.05 & 105 & 13.156 & 17.535 & 4.379 & 14.679 & 0.298 & 0.973 \\
\midrule
$\alpha=1$ & 0.15 & -0.05 & 95 & 22.428 & 18.118 & -4.310 & 5.794 & 0.745 & 0.963 \\
$\alpha=1$ & 0.25 & +0.05 & 95 & 22.428 & 26.881 & 4.452 & 5.855 & 0.761 & 0.961 \\
$\alpha=1$ & 0.15 & -0.05 & 100 & 17.527 & 13.217 & -4.310 & 5.794 & 0.745 & 0.963 \\
$\alpha=1$ & 0.25 & +0.05 & 100 & 17.527 & 21.980 & 4.452 & 5.855 & 0.761 & 0.961 \\
$\alpha=1$ & 0.15 & -0.05 & 105 & 13.156 & 8.962 & -4.194 & 5.753 & 0.730 & 0.962 \\
$\alpha=1$ & 0.25 & +0.05 & 105 & 13.156 & 17.535 & 4.379 & 5.673 & 0.772 & 0.959 \\
\midrule
$\alpha=2$ & 0.15 & -0.05 & 95 & 22.428 & 18.118 & -4.310 & 5.259 & 0.821 & 0.957 \\
$\alpha=2$ & 0.25 & +0.05 & 95 & 22.428 & 26.881 & 4.452 & 4.786 & 0.931 & 0.951 \\
$\alpha=2$ & 0.15 & -0.05 & 100 & 17.527 & 13.217 & -4.310 & 5.259 & 0.821 & 0.957 \\
$\alpha=2$ & 0.25 & +0.05 & 100 & 17.527 & 21.980 & 4.452 & 4.786 & 0.931 & 0.951 \\
$\alpha=2$ & 0.15 & -0.05 & 105 & 13.156 & 8.962 & -4.194 & 5.105 & 0.823 & 0.954 \\
$\alpha=2$ & 0.25 & +0.05 & 105 & 13.156 & 17.535 & 4.379 & 4.565 & 0.960 & 0.948 \\
\midrule
$\alpha=3$ & 0.15 & -0.05 & 95 & 22.428 & 18.118 & -4.310 & 4.670 & 0.925 & 0.947 \\
$\alpha=3$ & 0.25 & +0.05 & 95 & 22.428 & 26.881 & 4.452 & 4.001 & 1.113 & 0.942 \\
$\alpha=3$ & 0.15 & -0.05 & 100 & 17.527 & 13.217 & -4.310 & 4.670 & 0.925 & 0.947 \\
$\alpha=3$ & 0.25 & +0.05 & 100 & 17.527 & 21.980 & 4.452 & 4.001 & 1.113 & 0.942 \\
$\alpha=3$ & 0.15 & -0.05 & 105 & 13.156 & 8.962 & -4.194 & 4.458 & 0.943 & 0.943 \\
$\alpha=3$ & 0.25 & +0.05 & 105 & 13.156 & 17.535 & 4.379 & 3.670 & 1.194 & 0.935 \\
\bottomrule
\end{tabular}
\end{table}

\begin{table}[p]
\centering
\scriptsize
\setlength{\tabcolsep}{3pt}
\renewcommand{\arraystretch}{0.82}
\caption{Lookback-call ablation for signature truncation depth.}
\label{tab:app-lookback-k-ablation}
\begin{tabular}{@{}lrrrrrrrrr@{}}
\toprule
\textcolor{black}{Setting} & \textcolor{black}{$\sigma$} & \textcolor{black}{$\Delta\sigma$} & \textcolor{black}{$K$} & \textcolor{black}{Base} & \textcolor{black}{Stress} & \textcolor{black}{Shift} & \textcolor{black}{$\widehat\kappa$} & \textcolor{black}{Ratio} & \textcolor{black}{Fit $R^2$} \\
\midrule
$k=2$ & 0.15 & -0.05 & 95 & 22.428 & 18.118 & -4.310 & 4.485 & 0.963 & 0.924 \\
$k=2$ & 0.25 & +0.05 & 95 & 22.428 & 26.881 & 4.452 & 4.779 & 0.936 & 0.924 \\
$k=2$ & 0.15 & -0.05 & 100 & 17.527 & 13.217 & -4.310 & 4.485 & 0.963 & 0.924 \\
$k=2$ & 0.25 & +0.05 & 100 & 17.527 & 21.980 & 4.452 & 4.779 & 0.936 & 0.924 \\
$k=2$ & 0.15 & -0.05 & 105 & 13.156 & 8.962 & -4.194 & 4.408 & 0.954 & 0.923 \\
$k=2$ & 0.25 & +0.05 & 105 & 13.156 & 17.535 & 4.379 & 4.680 & 0.940 & 0.923 \\
\midrule
$k=3$ & 0.15 & -0.05 & 95 & 22.428 & 18.118 & -4.310 & 5.442 & 0.795 & 0.951 \\
$k=3$ & 0.25 & +0.05 & 95 & 22.428 & 26.881 & 4.452 & 5.630 & 0.793 & 0.951 \\
$k=3$ & 0.15 & -0.05 & 100 & 17.527 & 13.217 & -4.310 & 5.442 & 0.795 & 0.951 \\
$k=3$ & 0.25 & +0.05 & 100 & 17.527 & 21.980 & 4.452 & 5.630 & 0.793 & 0.951 \\
$k=3$ & 0.15 & -0.05 & 105 & 13.156 & 8.962 & -4.194 & 5.450 & 0.772 & 0.952 \\
$k=3$ & 0.25 & +0.05 & 105 & 13.156 & 17.535 & 4.379 & 5.586 & 0.786 & 0.951 \\
\midrule
$k=4$ & 0.15 & -0.05 & 95 & 22.428 & 18.118 & -4.310 & 5.794 & 0.745 & 0.963 \\
$k=4$ & 0.25 & +0.05 & 95 & 22.428 & 26.881 & 4.452 & 5.855 & 0.761 & 0.961 \\
$k=4$ & 0.15 & -0.05 & 100 & 17.527 & 13.217 & -4.310 & 5.794 & 0.745 & 0.963 \\
$k=4$ & 0.25 & +0.05 & 100 & 17.527 & 21.980 & 4.452 & 5.855 & 0.761 & 0.961 \\
$k=4$ & 0.15 & -0.05 & 105 & 13.156 & 8.962 & -4.194 & 5.753 & 0.730 & 0.962 \\
$k=4$ & 0.25 & +0.05 & 105 & 13.156 & 17.535 & 4.379 & 5.673 & 0.772 & 0.959 \\
\midrule
$k=5$ & 0.15 & -0.05 & 95 & 22.428 & 18.118 & -4.310 & 5.682 & 0.764 & 0.965 \\
$k=5$ & 0.25 & +0.05 & 95 & 22.428 & 26.881 & 4.452 & 5.501 & 0.810 & 0.961 \\
$k=5$ & 0.15 & -0.05 & 100 & 17.527 & 13.217 & -4.310 & 5.682 & 0.764 & 0.965 \\
$k=5$ & 0.25 & +0.05 & 100 & 17.527 & 21.980 & 4.452 & 5.501 & 0.810 & 0.961 \\
$k=5$ & 0.15 & -0.05 & 105 & 13.156 & 8.962 & -4.194 & 5.596 & 0.754 & 0.964 \\
$k=5$ & 0.25 & +0.05 & 105 & 13.156 & 17.535 & 4.379 & 5.383 & 0.814 & 0.960 \\
\midrule
$k=6$ & 0.15 & -0.05 & 95 & 22.428 & 18.118 & -4.310 & 5.958 & 0.725 & 0.967 \\
$k=6$ & 0.25 & +0.05 & 95 & 22.428 & 26.881 & 4.452 & 5.306 & 0.841 & 0.961 \\
$k=6$ & 0.15 & -0.05 & 100 & 17.527 & 13.217 & -4.310 & 5.958 & 0.725 & 0.967 \\
$k=6$ & 0.25 & +0.05 & 100 & 17.527 & 21.980 & 4.452 & 5.306 & 0.841 & 0.961 \\
$k=6$ & 0.15 & -0.05 & 105 & 13.156 & 8.962 & -4.194 & 5.867 & 0.716 & 0.966 \\
$k=6$ & 0.25 & +0.05 & 105 & 13.156 & 17.535 & 4.379 & 5.223 & 0.840 & 0.959 \\
\bottomrule
\end{tabular}
\end{table}

Table~\ref{tab:app-lookback-alpha-ablation} displays the expected approximation--robustness tradeoff. With no budget penalty (\(\alpha=0\)), the fit is strongest (\(R^2\simeq0.973\)). Increasing \(\alpha\) shrinks the fitting accuracy but increases the mean tightness ratio from \(0.307\) at \(\alpha=0\) to \(0.752\), \(0.881\), and \(1.036\) at \(\alpha=1,2,3\), respectively, while the mean \(R^2\) decreases from \(0.973\) to \(0.962\), \(0.953\), and \(0.942\). At \(\alpha=3\), the three upward-volatility rows have ratios \(1.113\), \(1.113\), and \(1.194\). These rows show the warning that an aggressively budget-penalized surrogate need not cover the shift of the original nonlinear payoff if $\alpha$ is too large.

Table~\ref{tab:app-lookback-k-ablation} shows that the shallow representation \(k=2\) is the tightest numerically, with a mean ratio \(0.949\), but has a materially weaker approximation quality, with a mean \(R^2=0.924\). Moving to \(k=3\) increases the mean \(R^2\) to \(0.951\), and \(k=4\) increases it further to \(0.962\), while all six \(k=4\) ratios remain between \(0.730\) and \(0.772\). Levels \(k=5\) and \(k=6\) offer only marginal additional fit (\(R^2\simeq0.963\)) and do not improve the mean tightness ratio relative to \(k=4\). Thus, \(k=4\) captures most of the available approximation gain without the additional feature complexity of deeper truncations.

\bibliographystyle{apacite}
\bibliography{main}

\end{document}